\providecommand{\U}[1]{\protect\rule{.1in}{.1in}}
\providecommand{\U}[1]{\protect\rule{.1in}{.1in}}
\newcommand{\R}{\mathbb{R}}
\newcommand{\Z}{\mathbb{Z}}
\newcommand{\T}{\mathsf{T}}
\newcommand{\I}{\mathbf{I}}
\newcommand{\0}{\mathbf{0}}
\newcommand{\tsup}[1]{\textsuperscript{#1}}
\newtheorem{assumption}{Assumption}
\newtheorem{theorem}{Theorem}
\newtheorem{corollary}{Corollary}
\newtheorem{lemma}{Lemma}
\newtheorem{remark}{Remark}
\newtheorem{definition}{Definition}
\useunder{\uline}{\ul}{}
\newcommand{\multiline}[1]{  \begin{tabularx}{\dimexpr\linewidth-\ALG@thistlm}[t]{@{}X@{}}
#1
\end{tabularx}
}
\setlist[itemize]{leftmargin=*}
\title{\LARGE \bf
On-line Estimation of Stability and Passivity Metrics
}
\author{Shirantha Welikala, Hai Lin and Panos J. Antsaklis 
\thanks{The support of the National Science Foundation (Grant No. IIS-1724070, CNS-1830335, IIS-2007949) is gratefully acknowledged.}
\thanks{The authors are with the Department of Electrical Engineering, College of Engineering, University of Notre Dame, IN 46556, \texttt{{\small \{wwelikal,hlin1,pantsakl\}@nd.edu}}.}}
\begin{document}

\maketitle
\thispagestyle{empty}
\pagestyle{empty}


\begin{abstract}
We consider the problem of on-line evaluation of critical characteristic parameters such as the $\mathcal{L}_2$-gain (L2G), input feedforward passivity index (IFP) and output feedback passivity index (OFP) of non-linear systems using their input-output data. Typically, having an accurate measure of such \emph{system indices} enables the application of systematic control design techniques. 
Moreover, if such system indices can efficiently be evaluated on-line, they can be exploited to device intelligent controller reconfiguration and fault-tolerant control techniques. 
However, the existing estimation methods of such system indices (i.e., L2G, IFP and OFP) are predominantly off-line, computationally inefficient, and require a large amount of actual or synthetically generated input-output trajectory data under some specific initial/terminal conditions. 
On the other hand, the existing on-line estimation methods take an averaging-based approach, which may be sub-optimal, computationally inefficient and susceptible to estimate saturation. 
In this paper, to overcome these challenges (in the on-line estimation of system indices), we establish and exploit several interesting theoretical results on a particular class of fractional function optimization problems. For comparison purposes, the details of an existing averaging-based approach are provided for the same on-line estimation problem. Finally, several numerical examples are discussed to demonstrate the proposed on-line estimation approach and to highlight our contributions.

\end{abstract}

\section{Introduction}\label{Sec:Introduction}


Stability and passivity are two critical concepts that are being abundantly used in various control systems design tasks \cite{Khalil1996,Bao2007,Willems1972a}. In particular, quantitative measures of stability and passivity such as the $\mathcal{L}_2$-gain (L2G), input-feedforward passivity index (IFP) and output feedback passivity index (OFP) provide convenient avenues for control systems design \cite{Zames1966,Desoer1975}. This is because such quantitative measures of a system can adequately characterize the system in lieu of an accurate theoretical model \cite{Tang2019}. Moreover, with the increasing complexity of systems, the problem of identifying an accurate system model becomes extremely challenging \cite{Koch2021}. 
Thus, for such instances, designing control solutions based on estimated quantitative measures like L2G, IFP and OFP (henceforth, collectively referred to as the ``\emph{system indices}'') is more suitable than designing control solutions based on estimated system models  \cite{Tanemura2019b}. 
Along the same lines, having an accurate and on-line estimate of such system indices paves the way to device intelligent controller reconfiguration and fault-tolerant control techniques. 
Therefore, this paper focuses on the problem of estimating system indices on-line using input-output data of the system.

Different approaches have been used in the literature to address this estimation problem under different settings. In \cite{Romer2017a}, certain dissipation inequalities (that provide bounds on system indices) are determined using a finite set of input-output data samples (tuples). While this approach is valid for non-linear systems, it is computationally inefficient, off-line and may require many data samples to get reasonable bounds. In contrast, the work in \cite{Romer2017b} and its extension \cite{Koch2021} have proposed a systematic input-output data sample generation method for the exact evaluation of system indices. However, this approach assumes the ability to generate custom input-output data samples from the system - making it an off-line solution. Moreover, it is limited to linear time-invariant systems and is computationally inefficient.

The subsequent work in \cite{Tanemura2019} has improved the computational efficiency of the solution proposed in \cite{Romer2017b} by two folds via proving that only half of the input-output data samples required in \cite{Romer2017b} is sufficient to obtain the same final solution. The work in \cite{Iijima2020} has further enhanced the computational efficiency by improving the used gradient-based update scheme to achieve faster convergence. However the solution proposed in \cite{Iijima2020} suffers from the remaining drawbacks mentioned earlier (for \cite{Romer2017b,Koch2021}). 

Focusing on non-linear systems, the work in \cite{Tang2019} and its extension \cite{Tang2021} has developed a dissipativity learning and control framework inspired by machine learning techniques such as one-class support vector machines. However, this approach is an off-line method that is computationally inefficient and requires many input-output data samples. 

On the other hand, in \cite{Zakeri2019}, an averaging-based simple on-line approach for the estimation of system indices has been proposed. This approach is applicable for non-linear systems as well. However, its estimates are often sub-optimal, computationally expensive and susceptible to saturation effects over time. In this paper, we continue this line of research and develop an on-line estimation method for system indices while addressing the said challenges of \cite{Zakeri2019}.

\paragraph{\textbf{Contributions}}
Our main contributions are as follows: 
(i) We consider a general continuous-time non-linear system and formulate input-output data-driven on-line estimation problems to determine the L2G, IFP and OFP (i.e., the system indices) - each as a fractional function optimization problem (FFOP) over a two-dimensional space; 
(ii) For a generalized class of these FFOPs, we develop an efficient solution approach that converts an FFOP to a much simpler optimization problem over a one-dimensional space; (iii) We further prove the validity of the proposed FFOP solution approach in the presence of ``dead-zones'' (where the denominators of the fractional functions become zero);  (iv) We discuss how optimal control can be exploited to control the system to rapidly achieve better estimates for the system indices (rather than passively using the observed input-output data); (v) We provide theoretical results that characterize the relative locations of the optimal system indices with respect to the existing and proposed on-line estimates of such system indices; (vi) We extend the proposed solution for general discrete-time non-linear systems; (vii) Finally, we provide several numerical examples to demonstrate the validity and the potential of the proposed approach.

\paragraph{\textbf{Organization}} This paper is organized as follows. The preliminary concepts behind the characteristic system indices of interest are introduced in Sec. \ref{Sec:Preliminaries}. The formulation of the interested system indices estimation problems is provided in Sec. \ref{Sec:ProblemFormulation}. In Sec. \ref{Sec:FractionalFunctionOptimization}, a particular class of useful FFOPs is analyzed and the details of the proposed FFOP based on-line estimation method are provided. In Sec.   \ref{Sec:GeneralizationToDiscreteTimeSystems}, the proposed problem formulation and the corresponding solution is specialized for discrete-time systems. Finally, several numerical results are summarized in Sec. \ref{Sec:NumericalResults} before concluding the paper in Sec. \ref{Sec:Conclusion}.

\paragraph{\textbf{Notation}}
The sets of real and integer numbers are denoted by $\R$ and $\Z$, respectively. We use the notations $[\cdot,\cdot]$ and $[\cdot,\cdot)$ to represent closed and left-closed right-open intervals, respectively. These intervals may be either continuous (in $\R$) or discrete (in $\Z$) - which will be clear from the context. The notation $(\cdot,\cdot)$ is used to represent $2$-tuples. Given sets $A$ and $B$, $A\backslash B$ indicates the set subtraction operation. The notation $\vert \cdot \vert$ represents the cardinality (or absolute) operation if the argument is a set (or number). The transpose of a matrix $A$ is denoted by $A^\T$. The zero and identity matrices are denoted by $\0$ and $\I$ (dimensions will be clear from the context), respectively.


\section{Preliminaries}\label{Sec:Preliminaries}

\paragraph{\textbf{System}} Consider the non-linear system:
\begin{equation}\label{Eq:NonlinearSystem}
    \mathcal{H}:
    \begin{cases}
    \dot{x}(t) = f(x(t),u(t)),\\
    y(t) = h(x(t),u(t)),
    \end{cases}
\end{equation}
where $x(t) \in X \subset \R^n$, $u(t) \in U \subset \R^m$ and $y(t) \in Y \subset \R^m$ are state, input and output variables at time $t\in\R_{\geq 0}$, and $X, U$ and $Y$ are state, input and output spaces, respectively. Let us use the notation $x(t) \triangleq \phi(t,t_0,x_0,u_{t_0:t})$ to denote the state at time $t$ reached from the initial state $x_0$ at initial time $t_0$ under the control profile $u_{t_0:t} \triangleq \{u(\tau),\tau\in[t_0,t]\}$. 

By definition, the following assumption holds regarding the dynamics and the feasible spaces $X,U,Y$ given in \eqref{Eq:NonlinearSystem}.
\begin{assumption} \label{As:Dynamics}
For any $x_0 \in X$ and $u_{t_0:t} \subset U$, the system $\mathcal{H}$ \eqref{Eq:NonlinearSystem} results in $x(t) = \phi(t,t_0,x_0,u_{t:t_0}) \in X$ and $y(t)=h(\phi(t,t_0,x_0,u_{t_0:t}),u(t)) \in Y$, for all $t\geq t_0 \geq 0$. 
\end{assumption}

\paragraph{\textbf{Supply Rate, Storage Function and Dissipativity}}
A \emph{supply rate} $w(t) = w(u(t),y(t))$ represents a considered rate of energy inserted into a system. In particular, it satisfies the condition stated in the following Def. \ref{Def:SupplyRate} and helps to define the \emph{dissipativity} of a system in the subsequent Def. \ref{Def:Dissipativity}.    

\begin{definition}\label{Def:SupplyRate} \cite{Bao2007} 
The \emph{supply rate} $w(t)=w(u(t),y(t))$ for the system $\mathcal{H}$ \eqref{Eq:NonlinearSystem} is such that, for any $x_0\in X$ and $u_{t_0:t} \subset U$ with $y(t) = h(\phi(t,t_0,x_0,u_{t_0:t}), u(t)) \in Y$, it satisfies 
$
    \int_{t_0}^{t_1} \vert w(t) \vert dt \leq \infty,
    \mbox{ for all } t_1\geq t_0 \geq 0.
$
\end{definition}

\begin{definition}\label{Def:Dissipativity} \cite{Bao2007} 
The system $\mathcal{H}$ \eqref{Eq:NonlinearSystem} under the supply rate $w(t)=w(u(t),y(t))$ is said to be \emph{dissipative} if there exists a positive semidefinite function $S(x):X\rightarrow\R_{\geq0}$ called the \emph{storage function}, such that for any $x_0\in X$ and $u_{t_0:t} \subset U$, 
\begin{equation}
S(x_1) - S(x_0) \leq \int_{t_0}^{t_1}w(t)dt,    
\end{equation}
for all $t_1\geq t_0 \geq 0$, where $x_1 \triangleq x(t_1) = \phi(t_1,t_0,x_0,u_{t_0:t_1})$.
\end{definition}

Note that the \emph{storage function} $S(x)$ mentioned in Def. \ref{Def:Dissipativity} represents the energy of the system at a state $x$. 

Next, we define the \emph{available storage} $S_a(x):X\rightarrow\R_{\geq 0}$, which is the largest amount of energy that can be extracted from a system starting from an initial condition $x(0) = x$.
\begin{definition}\label{Def:AvailableStorage}\cite{Bao2007} 
The \emph{available storage} $S_a(x)$ of the system $\mathcal{H}$ \eqref{Eq:NonlinearSystem} with the supply rate $w(t)=w(u(t),y(t))$ is:
\begin{equation}
    S_a(x) \triangleq \underset{\substack{x(0)=x,\,u_{0:t} \subset U,\,t_1>0}}{\sup}\ \left\{-\int_0^{t_1} w(t) \,dt \right\}.
\end{equation}
\end{definition}

As shown in \cite{Bao2007}: 
(i) any possible storage function $S(x)$ satisfies $0 \leq S_a(x) \leq S(x), \forall x\in X$, 
(ii) if $S_a(x)$ is continuous, $S_a(x)$ itself is a possible storage function, and 
(iii) if $S_a(x)<\infty, \forall x\in X$, then the system is dissipative with respect to the corresponding supply rate $w(t)$.

\paragraph{\textbf{System Indices}} In the following Defs. \ref{Def:L2Gain}-\ref{Def:OFPIndex}, we define the concepts of: (i) finite-gain $\mathcal{L}_2$ stability, (ii) input feedforward passivity and (iii) output feedback passivity, along with their corresponding indices. Note that, henceforth, we use the notation $x_1 \triangleq x(t_1) = \phi(t_1,t_0,x_0,u_{t_0:t_1})$.

\begin{definition}\label{Def:L2Gain}\cite{Bao2007}
The system $\mathcal{H}$ \eqref{Eq:NonlinearSystem} is \emph{finite-gain $\mathcal{L}_2$ stable} with the gain $\gamma$ (also denoted as: $\mathcal{H}$ is L2G($\gamma$)), if there exists a storage function $S:X\rightarrow \R_{\geq 0}$ with $S(\0)=0$ such that for any $x_0\in X$ and $u_{t_0:t} \subset U$, 
\begin{equation}\label{Eq:Def:L2Gain}
S(x_1) - S(x_0) \leq \int_{t_0}^{t_1} \gamma^2 u^\T(t)u(t) - y^\T(t)y(t) dt,
\end{equation}
for all $t_1 \geq t_0 \geq 0$.
\end{definition}

Note that, if the system $\mathcal{H}$ \eqref{Eq:NonlinearSystem} is linear and is characterized by a transfer function $H(s)$, then, its $\mathcal{L}_2$-gain is given by $\gamma = \sup_{\omega\in\R} \Vert H(j\omega) \Vert_2$, where $j$ represents the \emph{imaginary unit}  \cite{Khalil1996}. Therefore, if $\gamma<1$, then, $\mathcal{H}$ has a positive gain margin and (consequently) is stable.

The knowledge of the $\mathcal{L}_2$-gain value of a system is not only important to reach conclusions regarding its stability, but also can be helpful to synthesize controllers (e.g., via small-gain theorem \cite{Bao2007}) and controller re-configurations \cite{Zakeri2019}.

\begin{definition}\label{Def:Passivity} \cite{Bao2007}
The system $\mathcal{H}$ \eqref{Eq:NonlinearSystem} is \emph{passive} if there exists a storage function $S:X\rightarrow \R_{\geq 0}$ with $S(\0)=0$ such that for any $x_0\in X$ and $u_{t_0:t} \subset U$,
\begin{equation}
S(x_1) - S(x_0) \leq \int_{t_0}^{t_1}u^\T(t)y(t)dt,
\end{equation}
for all $t_1 \geq t_0 \geq 0$.
\end{definition}

The concept of passivity strives to provide an energy-based characterization to the input-output behavior of a dynamical system. In essence, a passive system can be thought of as a system that stores and dissipates energy without generating its own \cite{Willems1972a,Kottenstette2014}. Under some additional assumptions, passivity implies stability. Moreover, it can be conveniently used to device controllers \cite{Bao2007} and controller re-configurations \cite{Zakeri2019}. Motivated by these potential usages, we next define two degrees of passivity that can be characterized corresponding to a dynamical system.

\begin{definition}\label{Def:IFPIndex}
The system $\mathcal{H}$ \eqref{Eq:NonlinearSystem} is \emph{input feedforward passive} with the index $\nu\in\R$ (also denoted as: $\mathcal{H}$ is IFP($\nu$)), if there exist a storage function $S:X\rightarrow \R_{\geq 0}$ with $S(\0)=0$ such that for any $x_0\in X$ and $u_{t_0:t} \subset U$,
\begin{equation}\label{Eq:Def:IFPIndex}   
S(x_1) - S(x_0) \leq \int_{t_0}^{t_1} u^\T(t)y(t) - \nu u^\T(t)u(t) dt.
\end{equation}
for all $t_1 \geq t_0 \geq 0$. Moreover, if $\mathcal{H}$ is IFP($\nu$) with $\nu>0$, then, $\mathcal{H}$ is said to be \emph{input strictly passive}. 
\end{definition}

\begin{definition}\label{Def:OFPIndex}\cite{Bao2007}
The system $\mathcal{H}$ \eqref{Eq:NonlinearSystem} is \emph{output feedback passive} with the index $\rho\in\R$ (also denoted as: $\mathcal{H}$ is OFP($\rho$)), if there exist a storage function $S:X\rightarrow \R_{\geq 0}$ with $S(\0)=0$ such that for any $x_0\in X$ and $u_{t_0:t} \subset U$,
\begin{equation}\label{Eq:Def:OFPIndex} 
S(x_1) - S(x_0) \leq \int_{t_0}^{t_1} u^\T(t)y(t) - \rho y^\T(t)y(t) dt, 
\end{equation}
for all $t_1 \geq t_0 \geq 0$. Moreover, if $\mathcal{H}$ is OFP($\rho$) with $\rho>0$, then, $\mathcal{H}$ is said to be \emph{output strictly passive}.
\end{definition}

Based on Defs. \ref{Def:L2Gain} and \ref{Def:OFPIndex}, it can be shown that if the system $\mathcal{H}$ \eqref{Eq:NonlinearSystem} is OFP($\rho$) with $\rho>0$, then it is L2G($\gamma$) with $\gamma\leq \frac{1}{\rho}$.

It is worth noting that each of the properties defined in Defs. \ref{Def:L2Gain}-\ref{Def:OFPIndex} can be interpreted as the dissipativity property (Def. \ref{Def:Dissipativity}) under a specific supply rate. For example, note that, $\mathcal{H}$ is L2G($\gamma$) if and only if $\mathcal{H}$ is dissipative under the supply rate $w(u(t),y(t)) = \gamma^2 u^\T(t)u(t) - y^\T(t)y(t)$. 

At this point, we remind that our main focus in this paper is on estimating the \emph{system indices}: (i) $\mathcal{L}_2$-gain (L2G) $\gamma^2$, (ii) input feedforward passivity (IFP) $\nu$ and (iii) output feedback passivity (OFP) $\rho$, of the system $\mathcal{H}$ \eqref{Eq:NonlinearSystem}.

\begin{remark}\label{Rm:ActualSystemIndices}
Based on the forms of \eqref{Eq:Def:L2Gain}, \eqref{Eq:Def:IFPIndex} and \eqref{Eq:Def:OFPIndex}, it is clear that if the system $\mathcal{H}$ \eqref{Eq:NonlinearSystem} is L2G($\gamma$), IFP($\nu$) and OFP($\rho$), then, it is also L2G($\gamma+\epsilon$), IFP($\nu-\epsilon$) and OFP($\rho-\epsilon$), respectively, for all $\epsilon>0$. Therefore, the smallest possible $\gamma$ value that satisfies \eqref{Eq:Def:L2Gain} in Def. \ref{Def:L2Gain}, is the actual/optimal $L_2$-gain of the system. Similarly, the largest possible $\nu$ and $\rho$ values that respectively satisfy \eqref{Eq:Def:IFPIndex} and \eqref{Eq:Def:OFPIndex} in Defs. \ref{Def:IFPIndex} and \ref{Def:OFPIndex}, are the actual/optimal IFP and OFP indices of the system. 
\end{remark}

Taking Rm. \ref{Rm:ActualSystemIndices} into account, let us denote the \emph{optimal system indices} L2G, IFP and OFP as $\gamma_*, \nu_*$ and $\rho_*$, respectively. In the sequel, we will show how these optimal system indices can be estimated.

\section{Problem Formulation}\label{Sec:ProblemFormulation}

\paragraph{\textbf{Optimal System Indices}} Let us first define three fractional functions as 
\begin{equation}\label{Eq:FractionalFunctionsOfInterest}
\begin{aligned}
    \psi_{\gamma}(t_1,t_0,x_0,u_{t_0:t_1}) \triangleq 
    \frac{\int_{t_0}^{t_1} y^\T(t)y(t)dt + (S(x_1)-S(x_0))}{\int_{t_0}^{t_1}u^\T(t)u(t)dt},\\
    \psi_{\nu}(t_1,t_0,x_0,u_{t_0:t_1}) \triangleq 
    \frac{\int_{t_0}^{t_1} u^\T(t)y(t)dt - (S(x_1)-S(x_0))}{\int_{t_0}^{t_1}u^\T(t)u(t)dt},\\
    \psi_{\rho}(t_1,t_0,x_0,u_{t_0:t_1}) \triangleq 
    \frac{\int_{t_0}^{t_1} u^\T(t)y(t)dt - (S(x_1)-S(x_0))}{\int_{t_0}^{t_1}y^\T(t)y(t)dt},\\
\end{aligned}
\end{equation}
respectively over the spaces $\Psi_\gamma \triangleq \Psi \backslash \Psi_u$, $\Psi_\nu \triangleq \Psi \backslash \Psi_u$ and $\Psi_\rho \triangleq \Psi \backslash \Psi_y$, where 
\begin{equation*}\label{Eq:FractionalFunctionsOfInterestSpaces}
\begin{aligned}
    \Psi\ \, \triangleq&\ \{(t_0,t_1,x_0,u_{t_0:t_1}): t_1 > t_0 \geq 0, x_0\in X, u_{t_0:t_1}\subset U\},\\
    \Psi_u \triangleq&\ \{(t_0,t_1,x_0,u_{t_0:t_1})\in\Psi:\int_{t_0}^{t_1}u^\T(t)u(t)dt = 0\},\\
    \Psi_y \triangleq&\ \{(t_0,t_1,x_0,u_{t_0:t_1})\in\Psi:\int_{t_0}^{t_1}y^\T(t)y(t)dt = 0\}.
\end{aligned}
\end{equation*}

Using \eqref{Eq:FractionalFunctionsOfInterest} and Rm. \ref{Rm:ActualSystemIndices}, the following lemma expresses the optimal system indices as optimization problems.

\begin{lemma}\label{Lm:OptimalSystemIndices}
The optimal system indices: L2G $\gamma_*$, IFP $\nu_*$ and OFP $\rho_*$ of the system $\mathcal{H}$ \eqref{Eq:NonlinearSystem} are given respectively by
\begin{equation}\label{Eq:Lm:OptimalSystemIndices}
\begin{aligned}
    \gamma^2_* = \max_{(t_1,t_0,x_0,u_{t_0:t_1})\in\Psi_\gamma}\ \psi_{\gamma}(t_1,t_0,x_0,u_{t_0:t_1}), \\
    \nu_* = \min_{(t_1,t_0,x_0,u_{t_0:t_1})\in\Psi_\nu}\ \psi_{\nu}(t_1,t_0,x_0,u_{t_0:t_1}), \\
    \rho_* = \min_{(t_1,t_0,x_0,u_{t_0:t_1})\in\Psi_\rho}\ \psi_{\rho}(t_1,t_0,x_0,u_{t_0:t_1}).
\end{aligned}
\end{equation}
\end{lemma}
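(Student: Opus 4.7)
The plan is to algebraically rearrange each of the defining inequalities in Defs.~\ref{Def:L2Gain},~\ref{Def:IFPIndex} and~\ref{Def:OFPIndex} so as to isolate the characteristic index on one side, and then invoke Rm.~\ref{Rm:ActualSystemIndices} to identify the optimal index with the extremum of the resulting ratio over the appropriate feasible set.

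For the L2G identity I would start from \eqref{Eq:Def:L2Gain}, rewritten as
\[
\gamma^{2}\int_{t_0}^{t_1} u^{\T}(t)u(t)\,dt \;\geq\; \int_{t_0}^{t_1} y^{\T}(t)y(t)\,dt + (S(x_1)-S(x_0)),
\]
which, by the definition of L2G$(\gamma)$, must hold for every tuple $(t_1,t_0,x_0,u_{t_0:t_1})\in\Psi$. On the subset $\Psi_\gamma=\Psi\setminus\Psi_u$ the denominator $\int u^{\T}u\,dt$ is strictly positive, so the above is equivalent to $\gamma^2 \geq \psi_\gamma(t_1,t_0,x_0,u_{t_0:t_1})$; on the complementary set $\Psi_u$ the inequality loses its $\gamma$-dependence and reduces to a consistency condition on the storage function alone. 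By Rm.~\ref{Rm:ActualSystemIndices}, $\gamma_*^2$ is the smallest $\gamma^2$ for which the L2G inequality holds on all of $\Psi$, and hence $\gamma_*^2 = \sup_{\Psi_\gamma}\psi_\gamma$, which is the first identity in \eqref{Eq:Lm:OptimalSystemIndices}.

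The IFP and OFP identities follow the same recipe, with the inequality flipped because the relevant index now appears on the ``upper-bound side''. Rearranging \eqref{Eq:Def:IFPIndex} into $\nu\int u^{\T}u\,dt \leq \int u^{\T}y\,dt - (S(x_1)-S(x_0))$ and dividing on $\Psi_\nu$ gives $\nu \leq \psi_\nu$; since $\nu_*$ is the \emph{largest} admissible $\nu$, we get $\nu_*=\inf_{\Psi_\nu}\psi_\nu$. The analogous manipulation of \eqref{Eq:Def:OFPIndex}, now with $\int y^{\T}y\,dt$ as the denominator and $\Psi_y$ playing the role of the degenerate subset, yields $\rho_*=\inf_{\Psi_\rho}\psi_\rho$.

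The main obstacle I foresee is not the algebra but the handling of the ``dead-zones'' $\Psi_u$, $\Psi_y$ and of the attainment that lets one write $\max/\min$ in place of $\sup/\inf$. For the dead-zones one must check that any storage function tightening the ratio bound on $\Psi_\gamma$ (resp.\ $\Psi_\rho$) automatically satisfies the collapsed consistency condition on $\Psi_u$ (resp.\ $\Psi_y$) --- intuitively, tuples with vanishing input (or output) energy are limiting cases of tuples on which the ratio bound already applies, so the residual constraint is inherited in the limit. Attainment is a separate technicality that likely requires either a compactness/continuity argument on $(t_1,t_0,x_0,u_{t_0:t_1})\mapsto\psi_\bullet(\cdot)$ or, more pragmatically, reading Rm.~\ref{Rm:ActualSystemIndices}'s ``optimal'' in the $\sup/\inf$ sense.
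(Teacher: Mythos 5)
Your proposal is correct and follows essentially the same route as the paper: the paper likewise rearranges the defining dissipation inequalities from Defs.~\ref{Def:L2Gain}--\ref{Def:OFPIndex}, divides by the positive denominator on $\Psi_\gamma$ (resp.\ $\Psi_\nu$, $\Psi_\rho$), and invokes Rm.~\ref{Rm:ActualSystemIndices} to recast each optimal index as the extremum of the corresponding $\psi_\bullet$ over its feasible set. Your additional remarks on the dead-zone sets and on $\sup/\inf$ versus $\max/\min$ attainment address subtleties the paper's proof silently passes over, but they do not change the argument.
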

\begin{proof}
Based Rm. \ref{Rm:ActualSystemIndices}, by re-arranging \eqref{Eq:Def:L2Gain} and substituting $\psi_\gamma$ from \eqref{Eq:FractionalFunctionsOfInterest}, the optimal L2G value $\gamma_*^2$ can be expressed as
\begin{equation}\label{Eq:Lm:OptimalSystemIndicesStep1}
\begin{aligned}
    \gamma_*^2 =&\ \underset{\gamma}{\max}\ &&\gamma^2 \\
    & \mbox{sub. to: } &&\psi_{\gamma}(t_1,t_0,x_0,u_{t_0:t_1}) \leq \gamma^2, \\
    & && \ \ \ \ \ \ \ \forall (t_1,t_0,x_0,u_{t_0:t_1})\in\Psi_\gamma.
\end{aligned}
\end{equation}
Note that the optimization problem in \eqref{Eq:Lm:OptimalSystemIndicesStep1} can be equivalently represented by the first problem (i.e., the $\gamma_*^2$ expression) in \eqref{Eq:Lm:OptimalSystemIndices}. Following the same steps, the $\nu_*$ and $\rho_*$ expressions in \eqref{Eq:Lm:OptimalSystemIndices} can also be proved. 
\end{proof}

\paragraph{\textbf{Prior Work}}
The proposed optimization problems in \eqref{Eq:Lm:OptimalSystemIndices} are functional optimization problems (due to the program variable: $u_{t_0:t_1}\subset U$). Therefore, they are hard to solve without considerably simplifying the problem setup via introducing various limitations and assumptions. 

For example, the work in \cite{Koch2021}: (i) assumes the system dynamics to be linear, (ii) fixes $t_0=0$, $t_1=N$ (where $N$ is a predefined value) and $x_0=\0$, (iii) omits the effect of storage function $(S(x_1)-S(x_0))$, and only then develops an off-line gradient-based iterative process to update the control input profile $u_{t_0:t_1}$ so as to solve the corresponding (significantly simplified) version of \eqref{Eq:Lm:OptimalSystemIndices}. It is also worth noting that, apart from the said limitations/assumptions used in \cite{Koch2021}, it also assumes the ability to control/simulate the system dynamics under arbitrary control input profiles - which may not be possible in many real-world complex systems of interest.
 
In contrast to the off-line solution proposed in \cite{Koch2021}, the work in \cite{Zakeri2019} proposes an on-line solution (for \eqref{Eq:Lm:OptimalSystemIndices}). In particular, it: (i) assumes the system dynamics to be non-linear, (ii) considers the control input profile $u_{t_0:t_1}$ and the initial state $x_0$ as given, (iii) fixes $t_0=t_o$ and $t_1=t_f$ (where $t_o$ and $t_f$ are the initial and current time values, respectively), (iv) omits the effect of storage function $(S(x_1)-S(x_0))$, and then provides estimates for the optimal system indices $\gamma_*$, $\nu_*$ and $\rho_*$ \eqref{Eq:Lm:OptimalSystemIndices} respectively as $\hat{\gamma}$, $\hat{\nu}$ and $\hat{\rho}$, where 
\begin{equation}\label{Eq:EstimatedOptimalSystemIndicesOld}
    \begin{gathered}
        \hat{\gamma}^2 = \frac{\int_{t_o}^{t_f} y^\T(t)y(t)dt}{\int_{t_o}^{t_f}u^\T(t)u(t)dt}, \\
        \hat{\nu} = \frac{\int_{t_o}^{t_f} u^\T(t)y(t)dt}{\int_{t_o}^{t_f} u^\T(t)u(t)dt},\\
        \hat{\rho} = \frac{\int_{t_o}^{t_f} u^\T(t)y(t)dt}{\int_{t_o}^{t_f} y^\T(t)y(t)dt}.
    \end{gathered}
\end{equation}

Note that the optimal L2G estimate $\hat{\gamma}^2$ in \eqref{Eq:EstimatedOptimalSystemIndicesOld} is identical to the objective function value $\psi_\gamma(t_0,t_1,x_0,u_{t_0:t_1})$ in \eqref{Eq:Lm:OptimalSystemIndices} evaluated directly at $t_0=t_o$, $t_1=t_f$, $x_0=x(t_o)$, $u_{t_0:t_1} = u_{t_o:t_f}$ while assuming $(S(x_1)-S(x_0))=0$ (note also that, when input-output profiles of the system are known over $t\in[t_o,t_f]$, the state information $x(t), \forall t\in[t_o,t_f]$ is not explicitly required to evaluate $\hat{\gamma}$ in \eqref{Eq:EstimatedOptimalSystemIndicesOld}). The same arguments are valid for the optimal IFP and OFP estimates $\hat{\nu}$ and $\hat{\rho}$ in \eqref{Eq:EstimatedOptimalSystemIndicesOld}. 

In essence, $\hat{\gamma}$, $\hat{\nu}$ and $\hat{\rho}$ in \eqref{Eq:EstimatedOptimalSystemIndicesOld} can respectively be seen as the values of the objective functions $\psi_\gamma$, $\psi_\nu$ and $\psi_\rho$ in  \eqref{Eq:Lm:OptimalSystemIndices} at a particular point in $\Psi_\gamma \cap \Psi_\nu \cap \Psi_\rho$.  
Therefore, it is clear that the approach proposed in \cite{Zakeri2019} may suffer from inaccuracies due to: (i) the lack of any optimization stage (like in \eqref{Eq:Lm:OptimalSystemIndices} and \cite{Koch2021}), (ii) the omission of the storage function, and (iii) the susceptibility to the saturation effects (as $t_f\rightarrow \infty$). In the sequel, we propose an alternative approach to \cite{Zakeri2019} that addresses the above challenges when solving \eqref{Eq:Lm:OptimalSystemIndices} on-line.

\paragraph{\textbf{On-Line Estimates for Optimal System Indices}}

The proposed on-line solution for \eqref{Eq:Lm:OptimalSystemIndices} in this paper requires the following assumption. 
\begin{assumption}\label{As:LipschitzStorageFunction}
The rate of change of the storage function $S:X\rightarrow \R_{\geq 0}$ is bounded such that $\vert \frac{dS(x(t))}{dt}  \vert \leq K_s$ over the period $t\in [t_o,t_f)$, where $K_s$ is a known constant.
\end{assumption}

Similar to \cite{Zakeri2019}, we assume the system dynamics to be non-linear and consider the control input profile $u_{t_0:t_1}$ as a given. However, unlike in \cite{Zakeri2019}, we do not fix $t_0$ and $t_1$ values (to $t_o$ and $t_f$, respectively). Instead, we treat $t_0$ and $t_1$ as \emph{program variables} over which the we optimize the objective functions:
\begin{equation}\label{Eq:EstimatedFractionalFunctionsOfInterest}
\begin{aligned}
    \hat{\psi}_{\gamma}(t_0,t_1) \triangleq 
    \frac{\int_{t_0}^{t_1} y^\T(t)y(t)dt - K_s(t_1-t_0)}{\int_{t_0}^{t_1}u^\T(t)u(t)dt},\\
    \hat{\psi}_{\nu}(t_0,t_1) \triangleq 
    \frac{\int_{t_0}^{t_1} u^\T(t)y(t)dt + K_s(t_1-t_0)}{\int_{t_0}^{t_1}u^\T(t)u(t)dt},\\
    \hat{\psi}_{\rho}(t_0,t_1) \triangleq 
    \frac{\int_{t_0}^{t_1} u^\T(t)y(t)dt + K_s(t_1-t_0)}{\int_{t_0}^{t_1}y^\T(t)y(t)dt},\\
\end{aligned}
\end{equation}
on-line using past input-output information seen over a time period $[t_o,t_f) \supseteq [t_0,t_1)$. Note that $K_s$ in \eqref{Eq:EstimatedFractionalFunctionsOfInterest} is from As. \ref{As:LipschitzStorageFunction} (see also  \eqref{Eq:FractionalFunctionsOfInterest}). Moreover, in parallel to \eqref{Eq:FractionalFunctionsOfInterest}, the objective functions in \eqref{Eq:EstimatedFractionalFunctionsOfInterest} are respectively defined over the spaces 
\begin{equation*}\label{Eq:EstimatedFractionalFunctionsOfInterestSpaces}
    \begin{aligned}
\hat{\Psi}_\gamma \triangleq& \{(t_0,t_1):[t_0,t_1) \in [t_o,t_f), (t_1,t_0,x_0,u_{t_0:t_1})\in \Psi_\gamma\},\\ 
\hat{\Psi}_\nu \triangleq& \{(t_0,t_1):[t_0,t_1) \in [t_o,t_f), (t_1,t_0,x_0,u_{t_0:t_1})\in \Psi_\nu\},\\
\hat{\Psi}_\rho \triangleq& \{(t_0,t_1):[t_0,t_1) \in [t_o,t_f), (t_1,t_0,x_0,u_{t_0:t_1})\in \Psi_\rho\}.    
    \end{aligned}
\end{equation*}

The following lemma defines the proposed estimates of the optimal system indices and establishes their relationships with the corresponding optimal system indices given in \eqref{Eq:Lm:OptimalSystemIndices}.

\begin{lemma}\label{Lm:EstimatedOptimalSystemIndices}
Using a given input-output profile of the system $\mathcal{H}$ \eqref{Eq:NonlinearSystem}, the optimal system indices \eqref{Eq:Lm:OptimalSystemIndices}: L2G $\gamma_*$, IFP $\nu_*$ and OFP $\rho_*$ can respectively be estimated by  
\begin{equation}\label{Eq:Lm:EstimatedOptimalSystemIndices}
\begin{gathered}
    \hat{\gamma}^2_* = \max_{(t_0,t_1) \in \hat{\Psi}_\gamma}\ \hat{\psi}_{\gamma}(t_0,t_1),\\   
    \hat{\nu}_* = \min_{(t_0,t_1) \in \hat{\Psi}_\nu}\ \hat{\psi}_{\nu}(t_0,t_1),\\
    \hat{\rho}_* = \min_{(t_0,t_1)\in \hat{\Psi}_\rho}\ \hat{\psi}_{\rho}(t_0,t_1),
\end{gathered}
\end{equation}
so that $\hat{\gamma}^2_* \leq \gamma^2_*$, $\hat{\nu}_* \geq \nu_*$ and $\hat{\rho}_* \geq \rho_*$ holds true.
\end{lemma}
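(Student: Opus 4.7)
The plan is to compare the estimated fractional objectives in \eqref{Eq:EstimatedFractionalFunctionsOfInterest} with the true fractional objectives in \eqref{Eq:FractionalFunctionsOfInterest} pointwise, and then propagate the inequalities through the $\max$/$\min$ operations in \eqref{Eq:Lm:OptimalSystemIndices} and \eqref{Eq:Lm:EstimatedOptimalSystemIndices}. The single technical ingredient is a Lipschitz-type bound on the storage function difference obtained from Assumption \ref{As:LipschitzStorageFunction}.

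First I would convert the derivative bound in Assumption \ref{As:LipschitzStorageFunction} into the integral bound
\begin{equation*}
 -K_s(t_1-t_0) \;\leq\; S(x_1)-S(x_0) \;=\; \int_{t_0}^{t_1}\tfrac{dS(x(t))}{dt}\,dt \;\leq\; K_s(t_1-t_0),
\end{equation*}
valid for every $[t_0,t_1)\subseteq [t_o,t_f)$ along the realized trajectory, with $x_0=x(t_0)$ and $x_1=x(t_1)$. This is the one place Assumption \ref{As:LipschitzStorageFunction} enters, and it is the \emph{only} nontrivial step in the proof.

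Next, for any $(t_0,t_1)\in\hat{\Psi}_\gamma$, the above bound yields $S(x_1)-S(x_0)\geq -K_s(t_1-t_0)$; since the common denominator $\int_{t_0}^{t_1}u^\T(t)u(t)\,dt$ is strictly positive on $\hat{\Psi}_\gamma$ (this is why $\Psi_u$ was excluded), one obtains
\begin{equation*}
 \hat{\psi}_\gamma(t_0,t_1)\;\leq\;\psi_\gamma(t_1,t_0,x(t_0),u_{t_0:t_1}).
\end{equation*}
Because the tuple $(t_1,t_0,x(t_0),u_{t_0:t_1})$ is a feasible point of $\Psi_\gamma$ by Assumption \ref{As:Dynamics}, Lemma \ref{Lm:OptimalSystemIndices} gives $\psi_\gamma(t_1,t_0,x(t_0),u_{t_0:t_1})\leq\gamma_*^2$. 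Taking the supremum over $(t_0,t_1)\in\hat{\Psi}_\gamma$ then delivers $\hat{\gamma}_*^2\leq\gamma_*^2$. The arguments for $\nu_*$ and $\rho_*$ are symmetric: using $S(x_1)-S(x_0)\leq K_s(t_1-t_0)$, the signs in \eqref{Eq:EstimatedFractionalFunctionsOfInterest} flip the inequality so that
\begin{equation*}
 \hat{\psi}_\nu(t_0,t_1)\geq \psi_\nu(t_1,t_0,x(t_0),u_{t_0:t_1})\geq\nu_*,
\end{equation*}
and likewise $\hat{\psi}_\rho(t_0,t_1)\geq \psi_\rho(t_1,t_0,x(t_0),u_{t_0:t_1})\geq\rho_*$ on $\hat{\Psi}_\rho$, where the exclusion of $\Psi_y$ guarantees the denominator of $\psi_\rho$ and $\hat{\psi}_\rho$ is strictly positive. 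Taking the infimum over the corresponding feasible sets gives $\hat{\nu}_*\geq\nu_*$ and $\hat{\rho}_*\geq\rho_*$.

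There is no real obstacle here beyond bookkeeping: the proof is essentially a sign-tracking argument combined with the bound from Assumption \ref{As:LipschitzStorageFunction}. The only point one has to be slightly careful about is that the tuples $(t_0,t_1)\in\hat{\Psi}_\gamma,\hat{\Psi}_\nu,\hat{\Psi}_\rho$ correspond to single feasible points of $\Psi_\gamma,\Psi_\nu,\Psi_\rho$ (with the realized $x_0=x(t_0)$ and observed $u_{t_0:t_1}$), so the $\max$/$\min$ over $\hat{\Psi}$ cannot exceed the $\max$/$\min$ over the full space $\Psi$; combined with the pointwise comparison between $\hat{\psi}$ and $\psi$, the claimed inequalities follow.
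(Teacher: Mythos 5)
Your proof is correct and takes essentially the same route as the paper's: integrate the rate bound of Assumption \ref{As:LipschitzStorageFunction} to get the two-sided bound on $S(x_1)-S(x_0)$, compare $\hat{\psi}$ with $\psi$ pointwise along the observed trajectory, and use that each $(t_0,t_1)$ in $\hat{\Psi}_\gamma,\hat{\Psi}_\nu,\hat{\Psi}_\rho$ (with the realized $x(t_0)$ and observed $u_{t_0:t_1}$) is a feasible point of the corresponding space $\Psi_\gamma,\Psi_\nu,\Psi_\rho$, so the max/min propagates exactly as in the paper's chain of inequalities. One minor bookkeeping slip: the pointwise inequalities $\hat{\psi}_\nu\geq\psi_\nu$ and $\hat{\psi}_\rho\geq\psi_\rho$ follow from the \emph{lower} half of the bound, $S(x_1)-S(x_0)\geq -K_s(t_1-t_0)$ (equivalently $-(S(x_1)-S(x_0))\leq K_s(t_1-t_0)$), not from the upper half you cite, but since both halves appear in your first display the conclusion is unaffected.
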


\begin{proof}
According to As. \ref{As:LipschitzStorageFunction}, 
$$
-K_s(t_1-t_0) \leq S(x_1) - S(x_0) \leq K_s(t_1-t_0).
$$
Hence, under a given input-output profile, the objective functions $\hat{\psi}_{\gamma}(t_1,t_0)$ in \eqref{Eq:Lm:EstimatedOptimalSystemIndices} and $\psi_{\gamma}(t_1,t_0,x_0,u_{t_0:t_1})$ in \eqref{Eq:Lm:OptimalSystemIndices} (defined in \eqref{Eq:EstimatedFractionalFunctionsOfInterest} and \eqref{Eq:FractionalFunctionsOfInterest}, respectively) satisfy: 
$\hat{\psi}_{\gamma}(t_1,t_0) \leq  \psi_{\gamma}(t_1,t_0,x_0,u_{t_0:t_1})$. Moreover, the feasible spaces of the optimization problems in \eqref{Eq:Lm:EstimatedOptimalSystemIndices} are subsets of those in \eqref{Eq:Lm:OptimalSystemIndices} (e.g., $\hat{\Psi}_\gamma \subset \Psi_\gamma$). Starting from the $\hat{\gamma}^2_*$ expression in \eqref{Eq:Lm:EstimatedOptimalSystemIndices} and then applying the above two facts, we get 
\begin{equation*}
    \begin{aligned}
    \hat{\gamma}^2_* = \max_{(t_0,t_1) \in \hat{\Psi}_\gamma}\ \hat{\psi}_{\gamma}(t_0,t_1) 
    \leq \max_{(t_0,t_1) \in \hat{\Psi}_\gamma}\ \psi_{\gamma}(t_1,t_0,x_0,u_{t_0:t_1})\\
    \leq \max_{(t_0,t_1,x_0,u_{t_0:t_1}) \in \Psi_\gamma}\ \psi_{\gamma}(t_1,t_0,x_0,u_{t_0:t_1}) 
    = \gamma_*^2.
    \end{aligned}
\end{equation*}
Thus, $\hat{\gamma}^2_*\leq \gamma_*^2$. Following the same steps, we can also prove that $\hat{\nu}_* \geq \nu_*$ and $\hat{\rho}_* \geq \rho_*$. This completes the proof.
\end{proof}

It is worth noting that the inequality relationships such as $\hat{\rho}_* \geq \rho_*$ proven in Lm. \ref{Lm:EstimatedOptimalSystemIndices} cannot be established between the estimates proposed in \eqref{Eq:EstimatedOptimalSystemIndicesOld} \cite{Zakeri2019} and the optimal values stated in \eqref{Eq:Lm:OptimalSystemIndices} (e.g., between $\hat{\rho}$ and $\rho_*$) due to the omission of the storage function in \eqref{Eq:EstimatedOptimalSystemIndicesOld} \cite{Zakeri2019}. However, we will revisit this point at the end of the next section. 

In the next section, we propose an efficient solution for the fractional function optimization problems of the form \eqref{Eq:Lm:EstimatedOptimalSystemIndices}.


\section{Fractional Function Optimization}\label{Sec:FractionalFunctionOptimization}

In this section, we study a class of fractional function optimization problems (FFOPs) that includes all three problem forms in  \eqref{Eq:Lm:EstimatedOptimalSystemIndices}. The goal here is to develop an efficient solution for this class of FFOPs so as to systematically solve \eqref{Eq:Lm:EstimatedOptimalSystemIndices}.

\paragraph{\textbf{Preliminaries}} We first establish a minor algebraic result that will be used later on. 

\begin{lemma}\label{Lm:SimpleAlgebraicResult}
For any $a,c \in \R$ and $b,d\in\R_{>0}$, 
\begin{equation}\label{Eq:Lm:SimpleAlgebraicResult}
    \frac{a}{b} \lesseqqgtr \frac{a+c}{b+d} \lesseqqgtr \frac{c}{d}.
\end{equation}
\end{lemma}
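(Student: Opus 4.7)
The plan is to reduce the three-term chain to a single sign comparison. Since the notation $\lesseqqgtr$ in \eqref{Eq:Lm:SimpleAlgebraicResult} asserts a consistent chain (all $\leq$, all $=$, or all $\geq$) determined by the relationship between $\frac{a}{b}$ and $\frac{c}{d}$, it suffices to show that both $\frac{a+c}{b+d} - \frac{a}{b}$ and $\frac{c}{d} - \frac{a+c}{b+d}$ carry the same sign, and that this common sign agrees with $\frac{c}{d} - \frac{a}{b}$.

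First I would compute the two differences directly. A straightforward common-denominator calculation yields
\begin{equation*}
\frac{a+c}{b+d} - \frac{a}{b} = \frac{b(a+c) - a(b+d)}{b(b+d)} = \frac{bc - ad}{b(b+d)},
\end{equation*}
and similarly
\begin{equation*}
\frac{c}{d} - \frac{a+c}{b+d} = \frac{c(b+d) - d(a+c)}{d(b+d)} = \frac{bc - ad}{d(b+d)}.
\end{equation*}
Since $b, d \in \R_{>0}$, the denominators $b(b+d)$ and $d(b+d)$ are strictly positive, so both differences share the same sign as the scalar $bc - ad$.

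Next I would connect this common sign to the outer comparison $\frac{a}{b}$ versus $\frac{c}{d}$. Because $bd > 0$, we have
\begin{equation*}
\frac{c}{d} - \frac{a}{b} = \frac{bc - ad}{bd},
\end{equation*}
so the sign of $\frac{c}{d} - \frac{a}{b}$ is also exactly the sign of $bc-ad$. Thus the three quantities $\frac{a+c}{b+d} - \frac{a}{b}$, $\frac{c}{d} - \frac{a+c}{b+d}$, and $\frac{c}{d} - \frac{a}{b}$ are simultaneously positive, zero, or negative, which is precisely the chained inequality \eqref{Eq:Lm:SimpleAlgebraicResult}.

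There is no real obstacle here; the lemma is essentially the classical mediant inequality, and the only subtlety is verifying that all three relations in the chain are governed by the single quantity $bc-ad$ (which is guaranteed by the positivity hypothesis on $b$ and $d$). No assumption on the signs of $a$ or $c$ is needed.
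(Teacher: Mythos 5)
Your proof is correct and follows essentially the same route as the paper: both arguments reduce every comparison in the chain to the single sign of $bc-ad$, using only the positivity of $b$ and $d$. The paper phrases this as cross-multiplication equivalences ($\frac{a}{b}<\frac{a+c}{b+d}\iff ad<bc\iff\frac{a+c}{b+d}<\frac{c}{d}$, with the $=$ and $>$ cases handled by the same steps), while you compute the three differences explicitly so all sign cases are covered at once --- a cosmetic difference only.
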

\begin{proof}
First, we prove $\frac{a}{b} < \frac{a+c}{b+d} \iff \frac{a+c}{b+d} < \frac{c}{d}$. Note that, since $b>0$ and $b+d>0$, we can write
\begin{equation}\label{Eq:Lm:SimpleAlgebraicResultStep1}
  \frac{a}{b} < \frac{a+c}{b+d} \iff ab + ad < ab + bc \iff ad < bc.  
\end{equation}
Now, adding a $cd$ term to the both sides of the last inequality and using the fact that $d>0$ and $b+d > 0$, we can obtain
\begin{equation}\label{Eq:Lm:SimpleAlgebraicResultStep2}
  ad < bc \iff ad + cd < bc + cd \iff \frac{a+c}{b+d} < \frac{c}{d}. 
\end{equation}
Combining \eqref{Eq:Lm:SimpleAlgebraicResultStep1} and \eqref{Eq:Lm:SimpleAlgebraicResultStep2}, we get $\frac{a}{b} < \frac{a+c}{b+d} \iff \frac{a+c}{b+d} < \frac{c}{d}$.

Now, note that these same steps can be used to prove 
$\frac{a}{b} = \frac{a+c}{b+d} \iff \frac{a+c}{b+d} = \frac{c}{d}$ and  
$\frac{a}{b} > \frac{a+c}{b+d} \iff \frac{a+c}{b+d} > \frac{c}{d}$. 
Therefore, $\frac{a}{b} \lesseqqgtr \frac{a+c}{b+d} \iff \frac{a+c}{b+d} \lesseqqgtr \frac{c}{d}$.
\end{proof}

\paragraph{\textbf{The Fractional Function}}

Let $t_o,t_f\in\R_{\geq 0}$ where $t_o<t_f$ be two continuous-time instants and $p(t)$,$q(t)$ where $p:[t_o,t_f) \rightarrow \R$, $q:[t_o,t_f)\rightarrow \R_{\geq 0}$ be two continuous-time signals. Moreover, let $r(\tau)$, $s(\tau)$ where $r:\R_{>0}\rightarrow \R$, $s:\R_{>0}\rightarrow \R_{\geq 0}$ be two continuous linear mappings and $\dot{r}(0) \triangleq \lim_{\tau \rightarrow 0} \frac{dr(\tau)}{d\tau}$, $\dot{s}(0) \triangleq \lim_{\tau \rightarrow 0} \frac{ds(\tau)}{d\tau}$ where $\dot{r}(0)\in\R$, $\dot{s}(0)\in\R_{\geq 0}$ be two known constants.

Now, for any continuous-time interval $[t_0,t_1) \subseteq [t_o,t_f)$, let us define a corresponding \emph{fractional function} of the form:
\begin{equation}\label{Eq:GenericRationalFunctionCont}
    \psi(t_0,t_1) \triangleq
    \frac{\int_{t_0}^{t_1} p(t)dt + r(t_1-t_0)}{\int_{t_0}^{t_1} q(t)dt + s(t_1-t_0)}.
\end{equation}
To ensure that \eqref{Eq:GenericRationalFunctionCont} is well-defined over any $[t_0,t_1) \subseteq [t_o,t_f)$, we require the following assumption. 

\begin{assumption}\label{As:NoDeadZoneCont} (No-Dead-Zones) 
The continuous-time signal $q(t)$ involved in the fractional function \eqref{Eq:GenericRationalFunctionCont} is such that $q:[t_o,t_f)\rightarrow \R_{> 0}$, i.e., $q(t) >0, \forall t\in[t_o,t_f)$. 
\end{assumption}

Due to obvious reasons, we call this assumptions as the ``no-dead-zone'' (in $q(t)$) assumption. However, we also point out that we will relax this assumption later on in this section.

Under As. \ref{As:NoDeadZoneCont}, note that the denominator of the fractional function $\psi(t_0,t_1)$ defined in \eqref{Eq:GenericRationalFunctionCont} will always be positive definite. Next, we establish the following lemma. 

\begin{lemma}\label{Lm:RationalFunctionPropertyCont}
For any fractional function $\psi(t_0,t_1)$ of the form \eqref{Eq:GenericRationalFunctionCont} under As. \ref{As:NoDeadZoneCont}: 
\begin{equation}\label{Eq:Lm:RationalFunctionPropertyCont}
    \psi(t_0,t_m) \lesseqqgtr \psi(t_0,t_1) \lesseqqgtr \psi(t_m,t_1),
\end{equation}
for any $t_m\in\R_{>0}$ such that $t_0<t_m<t_1$. 
\end{lemma}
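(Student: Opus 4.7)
The plan is to reduce the claimed inequality chain to the elementary mediant inequality already proved in Lemma \ref{Lm:SimpleAlgebraicResult}, by splitting the numerator and denominator of $\psi(t_0,t_1)$ at the intermediate time $t_m$.

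First, I would exploit the fact that $r$ and $s$ are continuous linear mappings on $\R_{>0}$, so that $r(\tau)=\dot r(0)\tau$ and $s(\tau)=\dot s(0)\tau$. Consequently they are additive in the sense that $r(t_1-t_0)=r(t_m-t_0)+r(t_1-t_m)$ and $s(t_1-t_0)=s(t_m-t_0)+s(t_1-t_m)$, whenever $t_0<t_m<t_1$. Combined with the additivity of the integrals over $[t_0,t_m]$ and $[t_m,t_1]$, this lets me decompose both the numerator and denominator of $\psi(t_0,t_1)$ into the sum of the corresponding numerators and denominators of $\psi(t_0,t_m)$ and $\psi(t_m,t_1)$.

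Concretely, I would define
$$a\triangleq\int_{t_0}^{t_m}p(t)\,dt+r(t_m-t_0),\quad b\triangleq\int_{t_0}^{t_m}q(t)\,dt+s(t_m-t_0),$$
$$c\triangleq\int_{t_m}^{t_1}p(t)\,dt+r(t_1-t_m),\quad d\triangleq\int_{t_m}^{t_1}q(t)\,dt+s(t_1-t_m).$$
Then by the decomposition just described, $\psi(t_0,t_m)=a/b$, $\psi(t_m,t_1)=c/d$, and $\psi(t_0,t_1)=(a+c)/(b+d)$. To apply Lemma \ref{Lm:SimpleAlgebraicResult} with these $a,b,c,d$, I must verify that $b>0$ and $d>0$; this is precisely where Assumption \ref{As:NoDeadZoneCont} enters, since $q(t)>0$ on $[t_o,t_f)$ forces both integrals in $b$ and $d$ to be strictly positive, and the $s(\cdot)$ terms are nonnegative by hypothesis on $s$.

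With $b,d\in\R_{>0}$ and $a,c\in\R$ established, Lemma \ref{Lm:SimpleAlgebraicResult} immediately yields $\tfrac{a}{b}\lesseqqgtr\tfrac{a+c}{b+d}\lesseqqgtr\tfrac{c}{d}$, which is exactly \eqref{Eq:Lm:RationalFunctionPropertyCont}. There is no real obstacle here beyond being careful with the linearity/additivity of $r$ and $s$; the only subtle point is noting that \emph{continuous linear} maps on $\R_{>0}$ are determined by their derivative at $0$, so the additivity used in the decomposition follows rigorously, and Assumption \ref{As:NoDeadZoneCont} is exactly what rules out division by zero in the two sub-interval fractions.
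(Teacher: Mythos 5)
Your proposal is correct and mirrors the paper's own proof: the same splitting at $t_m$ into $a,b,c,d$, the same use of Assumption \ref{As:NoDeadZoneCont} to guarantee $b,d>0$, and the same appeal to Lemma \ref{Lm:SimpleAlgebraicResult}. Your explicit remark that linearity of $r$ and $s$ yields the additivity $r(t_1-t_0)=r(t_m-t_0)+r(t_1-t_m)$ is a small clarification the paper leaves implicit, but the argument is otherwise identical.
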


\begin{proof}
Since $t_0<t_1$ (by definition), there always exists $t_m\in\R_{>0}$ such that $t_0 < t_m < t_1$. Using this fact and the nature of the fractional function $\psi(\cdot,\cdot)$ defined in \eqref{Eq:GenericRationalFunctionCont}, we can re-state the terms $\psi(t_0,t_m)$, $\psi(t_0,t_1)$ and $\psi(t_m,t_1)$ as 
\begin{equation}\label{Eq:Lm:RationalFunctionPropertyContStep0}
\psi(t_0,t_m) = \frac{a}{b},\ \ 
\psi(t_0,t_1) = \frac{a+c}{b+d}\ \ \mbox{ and }\ \ 
\psi(t_m,t_1) = \frac{c}{d},
\end{equation}
respectively, where 
$a \triangleq \int_{t_0}^{t_m}p(t)dt + r(t_m-t_0)$,
$b \triangleq \int_{t_0}^{t_m}q(t)dt + s(t_m-t_0)$,
$c \triangleq \int_{t_m}^{t_1}p(t)dt + r(t_1-t_m)$ and 
$d \triangleq \int_{t_m}^{t_1}q(t)dt + s(t_1-t_m)$.    
Note that, $a,c\in\R$, and under As. \ref{As:NoDeadZoneCont}, $b,d\in\R_{>0}$. Therefore, Lm. \ref{Lm:SimpleAlgebraicResult} can now be applied to compare the terms involved in \eqref{Eq:Lm:RationalFunctionPropertyContStep0} as 
\begin{equation*}
    \frac{a}{b} \lesseqqgtr \frac{a+c}{b+d} \lesseqqgtr \frac{c}{d} 
    \iff
    \psi(t_0,t_m) \lesseqqgtr \psi(t_0,t_1) \lesseqqgtr \psi(t_m,t_1),
\end{equation*}
which completes the proof.
\end{proof}

\paragraph{\textbf{Fractional Function Optimization}}
Under As. \ref{As:NoDeadZoneCont}, let us denote the space over which a fractional function $\psi(t_0,t_1)$ of the form \eqref{Eq:GenericRationalFunctionCont} is well-defined as  
\begin{equation}\label{Eq:GenericRationalFunctionContSpace}
    \Psi \triangleq \{(t_0,t_1):[t_0,t_1) \subseteq [t_o,t_f)\}.
\end{equation}
In the following theorem, we establish our main theoretical result on optimizing a fractional function $\psi(t_0,t_1)$ of the form \eqref{Eq:GenericRationalFunctionCont} over all $(t_0,t_1)\in\Psi$.

\begin{theorem}\label{Th:RationalFunctionPropertyCont}
Any fractional function $\psi(t_0,t_1)$ of the form \eqref{Eq:GenericRationalFunctionCont} under As. \ref{As:NoDeadZoneCont} can be efficiently optimized over the corresponding space $(t_0,t_1)\in\Psi$ exploiting the relationships:
\begin{equation}\label{Eq:Th:RationalFunctionPropertyMaxCont}
    \Big\{ 
    \underset{(t_0,t_1)\in\Psi}{\max}\ 
    \psi(t_0,t_1)  
    \Big\} 
    \equiv
    \Big\{
    \underset{t\in [t_o,t_f)}{\max}\ 
    \lim_{\Delta \rightarrow 0} \psi(t,t+\Delta) 
    \Big\},
\end{equation}
\begin{equation}\label{Eq:Th:RationalFunctionPropertyMinCont}
    \Big\{ 
    \underset{(t_0,t_1)\in\Psi}{\min}\ 
    \psi(t_0,t_1) 
    \Big\} 
    \equiv
    \Big\{
    \underset{t\in [t_o,t_f)}{\min}\ 
    \lim_{\Delta \rightarrow 0} \psi(t,t+\Delta) 
    \Big\},
\end{equation}
where 
\begin{equation*}
    \lim_{\Delta \rightarrow 0} \psi(t,t+\Delta) = \frac{p(t)+\dot{r}(0)}{q(t)+\dot{s}(0)}.
\end{equation*}
\end{theorem}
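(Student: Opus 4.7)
The plan is to combine the mediant inequality of Lemma~\ref{Lm:RationalFunctionPropertyCont} with a bisection argument and a direct limit computation. I will treat the max-equivalence \eqref{Eq:Th:RationalFunctionPropertyMaxCont} in detail; the min-equivalence \eqref{Eq:Th:RationalFunctionPropertyMinCont} then follows by the verbatim dual argument.

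First I would verify the stated closed form of $\lim_{\Delta\to 0}\psi(t,t+\Delta)$. Because $r$ and $s$ are continuous \emph{linear} mappings, one has $r(\tau)=\dot{r}(0)\,\tau$ and $s(\tau)=\dot{s}(0)\,\tau$ for all $\tau>0$, so the ratios $r(\Delta)/\Delta$ and $s(\Delta)/\Delta$ are identically equal to $\dot{r}(0)$ and $\dot{s}(0)$. Dividing the numerator and denominator of \eqref{Eq:GenericRationalFunctionCont} by $\Delta$ and invoking the fundamental theorem of calculus on the continuous signals $p,q$, the limit becomes $(p(t)+\dot{r}(0))/(q(t)+\dot{s}(0))$, as claimed. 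This also identifies the right-hand sides of \eqref{Eq:Th:RationalFunctionPropertyMaxCont}--\eqref{Eq:Th:RationalFunctionPropertyMinCont} with scalar optimizations of an explicit pointwise-evaluated ratio, which is the source of the ``efficiency'' asserted in the theorem.

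One direction of \eqref{Eq:Th:RationalFunctionPropertyMaxCont} is immediate: for every $t\in[t_o,t_f)$ and every sufficiently small $\Delta>0$, $(t,t+\Delta)\in\Psi$ and hence $\psi(t,t+\Delta)\le \max_{(t_0,t_1)\in\Psi}\psi(t_0,t_1)$. Passing to the limit in $\Delta$ and then maximizing over $t$ yields $\max_{t}\lim_{\Delta\to 0}\psi(t,t+\Delta)\le \max_{\Psi}\psi$.

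The reverse inequality is the main step, and the one I expect to be the chief obstacle. Let $(t_0^\star,t_1^\star)\in\Psi$ be an (approximate) maximizer and set $t_m=(t_0^\star+t_1^\star)/2$. Lemma~\ref{Lm:RationalFunctionPropertyCont} places $\psi(t_0^\star,t_1^\star)$ between $\psi(t_0^\star,t_m)$ and $\psi(t_m,t_1^\star)$, so at least one of the two half-intervals has $\psi$-value at least $\psi(t_0^\star,t_1^\star)$. Selecting the dominating half and iterating produces a nested sequence $[a_n,b_n]$ with $b_n-a_n = 2^{-n}(t_1^\star-t_0^\star)\to 0$ and $\psi(a_n,b_n)\ge \psi(t_0^\star,t_1^\star)$ for every $n$. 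Both endpoint sequences converge to a common $t^\star\in[t_o,t_f]$, and the same division-by-$\Delta$ computation used above, applied along this shrinking sequence, yields $\lim_n \psi(a_n,b_n) = (p(t^\star)+\dot{r}(0))/(q(t^\star)+\dot{s}(0))\le \max_{t}\lim_{\Delta\to 0}\psi(t,t+\Delta)$. Chaining the inequalities gives $\psi(t_0^\star,t_1^\star)\le \max_{t}\lim_{\Delta\to 0}\psi(t,t+\Delta)$, which together with the easy direction proves \eqref{Eq:Th:RationalFunctionPropertyMaxCont}. The only delicate edge case is the degenerate possibility $t^\star=t_f$, where $p,q$ are not defined on the right endpoint; this can be bypassed by first restricting the outer $\max$ to pairs with $t_1\le t_f-\epsilon$, establishing the equivalence on $[t_o,t_f-\epsilon]$, and then sending $\epsilon\to 0^+$. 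The argument for \eqref{Eq:Th:RationalFunctionPropertyMinCont} is identical up to choosing the \emph{lesser} half at each bisection step.
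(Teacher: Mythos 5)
Your proof is correct and follows essentially the same route as the paper: both reduce the two-dimensional optimization to vanishing-length intervals via the mediant-type Lemma~\ref{Lm:RationalFunctionPropertyCont} and then identify the limiting value $\lim_{\Delta\to 0}\psi(t,t+\Delta)$ with the pointwise ratio $(p(t)+\dot r(0))/(q(t)+\dot s(0))$ (you via linearity of $r,s$ and the fundamental theorem of calculus, the paper via L'Hospital). Your bisection construction and the explicit easy-direction inequality are simply a more careful rendering of the paper's contradiction argument that the optimal endpoints must be ``infinitesimally close.''
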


\begin{proof}
We will first prove the equivalence relation in \eqref{Eq:Th:RationalFunctionPropertyMaxCont}. Let $t_0^*,t_1^*$ be the optimal arguments corresponding to the optimization problem in the left hand side (LHS) of \eqref{Eq:Th:RationalFunctionPropertyMaxCont}. Note that, irrespective of $t_0$,  
\begin{align*}
    \lim_{t_1 \rightarrow t_0} \psi(t_0,t_1) 
    = &\lim_{t_1 \rightarrow t_0} \frac{\int_{t_0}^{t_1}p(t)dt+r(t_1-t_0)}{\int_{t_0}^{t_1}q(t)dt+s(t_1-t_0)}\\
    = & \lim_{t_1 \rightarrow t_0} \frac{\frac{d}{dt_1}\int_{t_0}^{t_1}p(t)dt+ \frac{d}{dt_1}r(t_1-t_0)}{\frac{d}{dt_1}\int_{t_0}^{t_1}q(t)dt+\frac{d}{dt_1}s(t_1-t_0)}\\
    = & \lim_{t_1 \rightarrow t_0} \frac{p(t_1)+ \dot{r}(t_1-t_0)}{q(t_1)+\dot{s}(t_1-t_0)}
    = \frac{p(t_0)+ \dot{r}(0)}{q(t_0)+\dot{s}(0)}.
\end{align*}
In the above simplification, we have used the L'Hospital's rule as $r(0)=0$, $s(0)=0$ (recall that $r(\tau),s(\tau)$ are linear mappings). 
From this result, it is clear that the equivalence relation in \eqref{Eq:Th:RationalFunctionPropertyMaxCont} is implied if $t_0^*,t_1^*$ are such that $t_1^*=t_0^*+\epsilon$ where $\epsilon$ is an infinitesimally small positive number. In other words, \eqref{Eq:Th:RationalFunctionPropertyMaxCont} is proved if we can prove that $t_1^*$ is infinitesimally close to $t_0^*$. In the sequel, we prove this using a contradiction. 

Let us assume $t_0^*$ and $t_1^*$ are not infinitesimally close. Then, there should exist $t_m\in\R_{>0}$ such that $t_0^*<t_m<t_1^*$. According to Lm. \ref{Lm:RationalFunctionPropertyCont}, for any such $t_m$, either: 
\begin{equation}
    \begin{cases}
    \psi(t_0^*,t_1^*) < \psi(t_m,t_1^*),\\
    \psi(t_0^*,t_1^*) = \psi(t_0^*,t_m) = \psi(t_m,t_1^*),\ \ \mbox{ or }\ \ \\
    \psi(t_0^*,t_1^*) < \psi(t_0^*,t_m).
    \end{cases}
\end{equation}
In other words, $\psi(t_0^*,t_1^*) \leq \min\{\psi(t_0^*,t_m),  \psi(t_m,t_1^*)\}$. This implies that $\psi(t_0^*,t_1^*)$ is sub-optimal - which is a contradiction. Therefore, it is clear that the optimal arguments $t_0^*,t_1^*$ should be infinitesimally close. As stated earlier, this completes the proof of \eqref{Eq:Th:RationalFunctionPropertyMaxCont}. The same steps can be used to prove the equivalence condition in  \eqref{Eq:Th:RationalFunctionPropertyMinCont}.
\end{proof}

It is worth noting that the application of Th. \ref{Th:RationalFunctionPropertyCont} can significantly reduce the computational cost associated with optimizing fractional functions of the form \eqref{Eq:GenericRationalFunctionCont}. In particular, it can reduce the search space by one dimension (see \eqref{Eq:Th:RationalFunctionPropertyMaxCont} and \eqref{Eq:Th:RationalFunctionPropertyMinCont}). Note also that the resulting simplified optimization problems take a particular form that can easily be implemented in an on-line setting without using much memory.

However, note that, if we relaxed As. \ref{As:NoDeadZoneCont}, we no longer can consider optimizing fractional functions of the form \eqref{Eq:GenericRationalFunctionCont} over the space $\Psi$ \eqref{Eq:GenericRationalFunctionContSpace}. Instead, we will have to optimize such functions over a confined space $\Psi\backslash\Psi_q$ (where $\Psi_q$ will be defined in the sequel). Hence, Th. \ref{Th:RationalFunctionPropertyCont} needs to be generalized to address situations where As. \ref{As:NoDeadZoneCont} does not hold.

\paragraph{\textbf{Relaxing As. \ref{As:NoDeadZoneCont}}}

When As. \ref{As:NoDeadZoneCont} is relaxed, the fractional function $\psi(t_0,t_1)$ in \eqref{Eq:GenericRationalFunctionCont} becomes ill-defined over intervals $[t_0,t_1)\subseteq [t_o,t_f)$ where $\int_{t_0}^{t_1}q(t)dt + s(t_1-t_0) = 0$. Recall that, by definition, $q:[t_o,t_f)\rightarrow \R_{\geq 0}$ is an arbitrary signal and $s:\R_{>0}\rightarrow \R_{\geq 0}$ is a linear mapping. Therefore, $\int_{t_0}^{t_1}q(t)dt + s(t_1-t_0) = 0 \iff q(t) = 0, \forall t\in[t_0,t_1)$ and $s(\tau)=0, \forall \tau \in \R_{>0}$. This inspires the following assumption.

\begin{assumption}\label{As:NoLinearMappingInDenominator}
The linear mapping $s(\tau)$ involved in the fractional function \eqref{Eq:GenericRationalFunctionCont} is such that $s(\tau)=0, \forall \tau \in \R_{>0}$.
\end{assumption}

Recalling $\Psi$ \eqref{Eq:GenericRationalFunctionContSpace}, we define 
$$
\Psi_q \triangleq \{(t_0,t_1)\in\Psi: q(t)=0, \forall t\in[t_0,t_1)\}
$$
as the space of ``sub-dead-zones'' (i.e., intervals $[t_0,t_1) \subseteq [t_o,t_f)$ where $q(t) = 0, \forall t\in [t_0,t_1)$).  Note that any subset of a sub-dead-zone will also be a sub-dead-zone, and thus, $\Psi_q$ may contain infinitely many $(t_0,t_1)$ tuples. We also define 
$$\bar{\Psi}_q \triangleq \{(t_0,t_1)\in\Psi_q: q(t_0-\epsilon)>0, q(t_1)>0\},$$
where $\epsilon$ is an infinitesimally small constant, as the space of ``dead-zones'' (i.e., the largest possible sub-dead-zones). Consequently, $\bar{\Psi}_q$ only contains a finite number of $(t_0,t_1)$ tuples and $\bar{\Psi}_q$ is empty if there are no dead-zones inside $[t_o,t_f)$. For notational convenience, let us also define 
$$\tilde{\Psi}_q \triangleq \{t: t \in [t_o,t_f), q(t)=0\}$$ 
as the set of time instants corresponding to the dead-zones (note that $\tilde{\Psi}_q = \cup_{(t_0,t_1)\in\bar{\Psi}_q}[t_0,t_1) = \cup_{(t_0,t_1)\in \Psi_q}[t_0,t_1)$). 

The following theorem generalizes Th. \ref{Th:RationalFunctionPropertyCont} to address the presence of dead-zones inside $[t_o,t_f)$ (i.e., when $\vert \bar{\Psi}_q \vert >0$).



\begin{table*}
\hrule 
\begin{align}\label{Eq:Co:RationalFunctionPropertyMaxCont}
    &\Big\{ 
    \underset{(t_0,t_1) \in \Psi \backslash \Psi_q }{\max}\ 
    \psi(t_0,t_1)  
    \Big\} 
    \equiv\ 
    \max 
    \Big\{
    \big\{
    \underset{t \in [t_o,t_f)\backslash \tilde{\Psi}_q}{\max}\ 
    \lim_{\Delta \rightarrow 0} \psi(t,t+\Delta) 
    \big\},\ 
    \big\{
    \underset{\substack{t\in [t_{zo},t_{zf})\\(t_{zo},t_{zf})\in\bar{\Psi}_q}}{\max}\ 
    \lim_{\Delta \rightarrow 0} \psi(t_{zo}-\Delta,t)
    \big\},\ 
    \big\{
    \underset{\substack{t\in [t_{zo},t_{zf})\\(t_{zo},t_{zf})\in\bar{\Psi}_q}}{\max}\ 
    \lim_{\Delta \rightarrow 0} \psi(t,t_{zf}+\Delta) 
    \big\}
    \Big\},
    \\
    \label{Eq:Co:RationalFunctionPropertyMinCont}
    &\Big\{ 
    \underset{(t_0,t_1)\in\Psi\backslash \Psi_q}{\min}\ 
    \psi(t_0,t_1)  
    \Big\} 
    \equiv\ 
    \min 
    \Big\{
    \big\{
    \underset{t \in [t_o,t_f)\backslash \tilde{\Psi}_q}{\min}\ 
    \lim_{\Delta \rightarrow 0} \psi(t,t+\Delta) 
    \big\},\ 
    \big\{
    \underset{\substack{t\in [t_{zo},t_{zf})\\(t_{zo},t_{zf})\in\bar{\Psi}_q}}{\min}\ 
    \lim_{\Delta \rightarrow 0} \psi(t_{zo}-\Delta,t)
    \big\},\  
    \big\{
    \underset{\substack{t\in [t_{zo},t_{zf})\\(t_{zo},t_{zf})\in\bar{\Psi}_q}}{\min}\ 
    \lim_{\Delta \rightarrow 0} \psi(t,t_{zf}+\Delta)
    \big\}
    \Big\},\\ \label{Eq:Co:RationalFunctionPropertyObjectives1}
    &\psi(t_0,t_1) 
    \triangleq\  \frac{\int_{t_0}^{t_1}p(t)dt+r(t_1-t_0)}{\int_{t_0}^{t_1}q(t)dt}, 
    \ \ \ \mbox{(i.e., \eqref{Eq:GenericRationalFunctionCont} under As. \ref{As:NoLinearMappingInDenominator}) \ \ \ and }\ \ \ 
    \lim_{\Delta \rightarrow 0} \psi(t,t+\Delta) 
    = \frac{p(t)+\dot{r}(0)}{q(t)},\\ \label{Eq:Co:RationalFunctionPropertyObjectives2}
    &\lim_{\Delta \rightarrow 0} \psi(t_{zo}-\Delta,t) 
    =   
    \begin{cases}
    +\infty, \mbox{ if } \int_{t_{zo}}^t p(\tau)d\tau + r(t-t_{zo}) > 0,\\
    -\infty, \mbox{ if } \int_{t_{zo}}^t p(\tau)d\tau + r(t-t_{zo}) < 0,
    \end{cases}
    \mbox{ and }\ \ \ 
    \lim_{\Delta \rightarrow 0} \psi(t,t_{zf}+\Delta) 
    = 
    \begin{cases}
    +\infty, \mbox{ if } \int_t^{t_{zf}} p(\tau)d\tau + r(t_{zf}-t) > 0,\\
    -\infty, \mbox{ if } \int_t^{t_{zf}} p(\tau)d\tau + r(t_{zf}-t) < 0.
    \end{cases}
\end{align}
\hrule
\end{table*}

\begin{theorem}\label{Th:Th:RationalFunctionPropertyCont}
Any fractional function $\psi(t_0,t_1)$ of the form \eqref{Eq:GenericRationalFunctionCont} under As. \ref{As:NoLinearMappingInDenominator} can be efficiently optimized over the corresponding space $(t_0,t_1)\in\Psi\backslash \Psi_q$ exploiting the relationships: \eqref{Eq:Co:RationalFunctionPropertyMaxCont} and \eqref{Eq:Co:RationalFunctionPropertyMinCont} (under the objective functions in  \eqref{Eq:Co:RationalFunctionPropertyObjectives1} and \eqref{Eq:Co:RationalFunctionPropertyObjectives2}). 
\end{theorem}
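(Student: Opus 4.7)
The approach mimics the proof of Theorem~\ref{Th:RationalFunctionPropertyCont} --- iteratively apply Lemma~\ref{Lm:RationalFunctionPropertyCont} to ``shrink'' any candidate optimal interval --- but carefully accounts for the dead-zones that can now block a naive shrinking step. Under Assumption~\ref{As:NoLinearMappingInDenominator}, the denominator in \eqref{Eq:GenericRationalFunctionCont} reduces to $\int_{t_0}^{t_1} q(t)\,dt$, so $(t_0,t_1)\notin\Psi_q$ is equivalent to this denominator being strictly positive, and Lemma~\ref{Lm:RationalFunctionPropertyCont} remains applicable to a split-point $t_m$ provided both resulting integrals of $q$ remain positive.

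\textbf{Setup.} For any feasible $(t_0^*,t_1^*)\in\Psi\setminus\Psi_q$, I would define $t_L \triangleq \sup\{t : q\equiv 0 \mbox{ on } [t_0^*,t]\}$ and $t_R \triangleq \inf\{t : q\equiv 0 \mbox{ on } [t,t_1^*)\}$ (with $t_L = t_0^*$ or $t_R = t_1^*$ when these sets are empty). Because $(t_0^*,t_1^*)\notin\Psi_q$, we have $t_L < t_R$, and any $t_m\in(t_L,t_R)$ yields two sub-intervals in $\Psi\setminus\Psi_q$ to which Lemma~\ref{Lm:RationalFunctionPropertyCont} directly applies. Thus we may iterate the ``shrink to the equally or more optimal sub-interval'' procedure exactly as in the proof of Theorem~\ref{Th:RationalFunctionPropertyCont}, producing a nested sequence of intervals along which $\psi$ is non-decreasing (for the max) or non-increasing (for the min).

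\textbf{Case analysis on the accumulation of the nested intervals.}
\emph{(i)} If the intervals collapse to an infinitesimal neighborhood of a point $t\in[t_o,t_f)\setminus\tilde\Psi_q$ (so $q(t)>0$), then L'H\^opital's rule --- applied exactly as in the proof of Theorem~\ref{Th:RationalFunctionPropertyCont} --- gives $\lim_{\Delta\to 0}\psi(t,t+\Delta) = (p(t)+\dot r(0))/q(t)$, matching the first term of \eqref{Eq:Co:RationalFunctionPropertyMaxCont} together with \eqref{Eq:Co:RationalFunctionPropertyObjectives1}.
\emph{(ii)} If the shrinking drives the left endpoint against the left edge $t_{zo}$ of some dead-zone $(t_{zo},t_{zf})\in\bar\Psi_q$, with the right endpoint pinned at some $t\in[t_{zo},t_{zf})$ inside the dead-zone, the limit of interest is $\lim_{\Delta\to 0}\psi(t_{zo}-\Delta,t)$. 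As $\Delta\to 0$, the denominator $\int_{t_{zo}-\Delta}^{t_{zo}}q\,d\tau\to 0^{+}$ (since $q\equiv 0$ on $[t_{zo},t)$) while the numerator tends to $\int_{t_{zo}}^{t}p\,d\tau + r(t-t_{zo})$, yielding $\pm\infty$ according to the sign of that numerator --- precisely \eqref{Eq:Co:RationalFunctionPropertyObjectives2}.
\emph{(iii)} The mirror case, where the right endpoint is pushed against $t_{zf}$ from the right while the left endpoint $t\in[t_{zo},t_{zf})$ sits inside the dead-zone, produces the third term of \eqref{Eq:Co:RationalFunctionPropertyMaxCont} by the symmetric limit. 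Taking the maximum (respectively, minimum) over all three accumulation types yields the right-hand side of \eqref{Eq:Co:RationalFunctionPropertyMaxCont} (respectively, \eqref{Eq:Co:RationalFunctionPropertyMinCont}).

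\textbf{Main obstacle.} The delicate part is justifying that cases~(i)--(iii) above are \emph{exhaustive} --- that iterated shrinking of an optimal (or supremum-attaining) sequence must accumulate at one of these three canonical configurations, rather than getting stuck in some other degenerate arrangement such as an interval that indefinitely straddles an entire dead-zone from both sides, or an interval trapped between two adjacent dead-zones. The resolution is to observe that any interval containing an entire dead-zone in its interior admits a valid split $t_m$ strictly outside that dead-zone on whichever side has $q>0$, producing a strictly smaller feasible sub-interval and reducing to a simpler configuration; hence the only truly non-reducible accumulations are the single-point collapse of case~(i) and the one-sided ``sneaks'' across a single dead-zone boundary of cases~(ii) and~(iii). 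Carefully formalizing this reduction, and verifying that the associated infimum/supremum is actually attained (in the limit sense) by the formulas in \eqref{Eq:Co:RationalFunctionPropertyObjectives1}--\eqref{Eq:Co:RationalFunctionPropertyObjectives2}, is the main bookkeeping effort of the proof.
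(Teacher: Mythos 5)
Your proposal is correct and takes essentially the same route as the paper's proof: both use Lm.~\ref{Lm:RationalFunctionPropertyCont} to argue that any feasible interval which can be split into two non-sub-dead-zone pieces is (weakly) dominated by one of them, reduce the search to the three irreducible interval types (an infinitesimal interval at a point with $q(t)>0$, and the two one-sided crossings of a dead-zone boundary), and then evaluate the limiting objective values via L'Hospital's rule or the sign of the residual numerator, exactly as in \eqref{Eq:Co:RationalFunctionPropertyObjectives1}--\eqref{Eq:Co:RationalFunctionPropertyObjectives2}. Your $t_L,t_R$ construction and the exhaustiveness discussion simply formalize the step the paper asserts as ``easy to see,'' so no genuinely different argument is involved.
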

\begin{proof}
The main idea behind this proof is to consider all possible intervals inside $[t_o,t_f)$ that are: (i) not sub-dead-zones (i.e., have well-defined $\psi(\cdot,\cdot)$ values) and (ii) not further breakable into two or more non-sub-dead-zones (as they would be sub-optimal according to Lm. \ref{Lm:RationalFunctionPropertyCont}). It is easy to see that there are three types of such intervals:
\begin{enumerate}
    \item $\lbrack t,t+\epsilon),\ \ \forall t \in [t_o,t_f) \backslash \tilde{\Psi}_q,$
    \item $\lbrack t_{zo}-\epsilon,t),\ \ \forall t \in [t_{zo},t_{zf}), \forall (t_{zo},t_{zf})\in\bar{\Psi}_q,$
    \item $\lbrack t,t_{zf}+\epsilon),\ \ \forall t \in [t_{zo},t_{zf}), \forall (t_{zo},t_{zf})\in\bar{\Psi}_q,$
\end{enumerate}
where $\epsilon$ is an infinitesimally small positive number. These three interval types give the three different inner optimization problems in the right hand side (RHS) of \eqref{Eq:Co:RationalFunctionPropertyMaxCont} and \eqref{Eq:Co:RationalFunctionPropertyMinCont}. Finally, the simplified objective function expressions in \eqref{Eq:Co:RationalFunctionPropertyObjectives1} and \eqref{Eq:Co:RationalFunctionPropertyObjectives2} can be obtained using the L'Hospital's rule following the same steps as in the proof of Th. \ref{Th:RationalFunctionPropertyCont}.  
\end{proof}

From comparing \eqref{Eq:Co:RationalFunctionPropertyMaxCont}, \eqref{Eq:Co:RationalFunctionPropertyMinCont} in Th. \ref{Th:Th:RationalFunctionPropertyCont} with \eqref{Eq:Th:RationalFunctionPropertyMaxCont}, \eqref{Eq:Th:RationalFunctionPropertyMinCont} in Th. \ref{Th:RationalFunctionPropertyCont}, it should be clear that the main consequence of the presence of dead-zones is the last two inner optimization problems in the RHS of \eqref{Eq:Co:RationalFunctionPropertyMaxCont}, \eqref{Eq:Co:RationalFunctionPropertyMinCont}. Note that, based on \eqref{Eq:Co:RationalFunctionPropertyObjectives2}, these additional inner optimization problems can be either decisive or irrelevant (but never in-between).

The following corollary shows that, under some special conditions, when optimizing a fractional function of the form \eqref{Eq:GenericRationalFunctionCont}, the dead-zones can be conveniently disregarded. 

\begin{assumption}\label{As:NumeratorDenominatorConnection}
The signals $p(t)$ and $q(t)$ involved in the fractional function \eqref{Eq:GenericRationalFunctionCont} are such that $q(t)=0 \implies p(t) = 0$ for any $t\in[t_o,t_f)$. 
\end{assumption}

\begin{corollary}\label{Co:Th:RationalFunctionPropertyCont2}
Any fractional function $\psi(t_0,t_1)$ of the form \eqref{Eq:GenericRationalFunctionCont} under As. \ref{As:NoLinearMappingInDenominator} and As. \ref{As:NumeratorDenominatorConnection} can be efficiently optimized over the corresponding space $(t_0,t_1)\in\Psi\backslash\Psi_q$ exploiting the relationships: 
(i) If $r(\tau) \leq 0, \forall \tau \in\R_{>0}$: 
\begin{equation}\label{Eq:Co:Th:RationalFunctionPropertyCont2Max}
    \Big\{ 
    \underset{(t_0,t_1) \in \Psi\backslash\Psi_q}{\max}\ 
    \psi(t_0,t_1) 
    \Big\} 
    \equiv
    \Big\{
    \underset{t \in [t_o,t_f)\backslash \tilde{\Psi}_q}{\max}\ 
    \lim_{\Delta \rightarrow 0} \psi(t,t+\Delta) 
    \Big\},
\end{equation}
(ii) If $r(\tau) \geq 0, \forall \tau \in\R_{>0}$:
\begin{equation}\label{Eq:Co:Th:RationalFunctionPropertyCont2Min}
    \Big\{ 
    \underset{(t_0,t_1) \in \Psi\backslash\Psi_q}{\min}\ 
    \psi(t_0,t_1) 
    \Big\} 
    \equiv
    \Big\{
    \underset{t \in [t_o,t_f)\backslash \tilde{\Psi}_q}{\min}\ 
    \lim_{\Delta \rightarrow 0} \psi(t,t+\Delta) 
    \Big\},
\end{equation}
where 
\begin{equation*}
    \lim_{\Delta \rightarrow 0} \psi(t,t+\Delta) = \frac{p(t)+\dot{r}(0)}{q(t)}.
\end{equation*}
\end{corollary}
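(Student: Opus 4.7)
The plan is to invoke Theorem~\ref{Th:Th:RationalFunctionPropertyCont} (whose hypothesis As.~\ref{As:NoLinearMappingInDenominator} is already in force here) and then show that, under the additional As.~\ref{As:NumeratorDenominatorConnection} together with the sign assumption on $r$, the last two inner optimization problems in \eqref{Eq:Co:RationalFunctionPropertyMaxCont} (resp.\ \eqref{Eq:Co:RationalFunctionPropertyMinCont}) collapse to values that cannot affect the outer $\max$ (resp.\ $\min$). What remains is only the first inner problem, which is precisely \eqref{Eq:Co:Th:RationalFunctionPropertyCont2Max} (resp.\ \eqref{Eq:Co:Th:RationalFunctionPropertyCont2Min}).

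The key observation is that on any dead-zone $(t_{zo},t_{zf})\in\bar{\Psi}_q$ one has $q(t)=0$ for every $t\in[t_{zo},t_{zf})$ by definition of $\bar{\Psi}_q$, so As.~\ref{As:NumeratorDenominatorConnection} forces $p(t)=0$ throughout the same interval. Consequently $\int_{t_{zo}}^t p(\tau)\,d\tau=0$ and $\int_t^{t_{zf}} p(\tau)\,d\tau=0$ for every such $t$. Substituting into \eqref{Eq:Co:RationalFunctionPropertyObjectives2}, the numerator-sign tests reduce to testing the sign of $r(t-t_{zo})$ and $r(t_{zf}-t)$, respectively. In case (i), $r(\tau)\leq 0$ on $\R_{>0}$, so these tests can only produce $-\infty$ (when strict), which is harmless to an outer $\max$; in case (ii), $r(\tau)\geq 0$ symmetrically produces only $+\infty$, harmless to an outer $\min$. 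The corresponding inner problems may therefore be dropped.

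The main obstacle I foresee is the borderline subcase in which the sign test in \eqref{Eq:Co:RationalFunctionPropertyObjectives2} yields equality rather than strict inequality. Since $r$ is linear with $r(0)=0$, the only way this can happen for all positive arguments under either sign hypothesis is $r\equiv 0$, in which case \eqref{Eq:Co:RationalFunctionPropertyObjectives2} does not directly apply. I would handle this by going back to \eqref{Eq:Co:RationalFunctionPropertyObjectives1}: with $r\equiv 0$ and the interior $p$-integral vanishing, $\lim_{\Delta\to 0}\psi(t_{zo}-\Delta,t)$ becomes a $0/0$ form whose L'Hospital evaluation yields $p(t_{zo}^-)/q(t_{zo}^-)$, i.e., the one-sided limit of the first-inner objective $\frac{p(\cdot)+\dot r(0)}{q(\cdot)}$ as its argument approaches $t_{zo}$ from the left (noting $\dot r(0)=0$ when $r\equiv 0$); the analogous argument covers $\lim_{\Delta\to 0}\psi(t,t_{zf}+\Delta)$. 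Thus in every subcase the second and third inner problems either contribute $\mp\infty$ or are subsumed by the first, which establishes both claimed equivalences.
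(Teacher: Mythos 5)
Your proposal is correct and follows essentially the same route as the paper: the paper's (one-line) proof likewise just applies As.~\ref{As:NumeratorDenominatorConnection} and the sign condition on $r$ within Th.~\ref{Th:Th:RationalFunctionPropertyCont} so that the two dead-zone inner problems in \eqref{Eq:Co:RationalFunctionPropertyMaxCont}--\eqref{Eq:Co:RationalFunctionPropertyMinCont} can only contribute $-\infty$ to a $\max$ (resp.\ $+\infty$ to a $\min$) and hence drop out. Your additional treatment of the borderline $r\equiv 0$ case via L'Hospital is a detail the paper glosses over, but it is consistent with its argument and does not change the approach.
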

\begin{proof}
These results directly follow from applying the assumed conditions in As. \ref{As:NumeratorDenominatorConnection} and Co. \ref{Co:Th:RationalFunctionPropertyCont2} in Th. \ref{Th:Th:RationalFunctionPropertyCont}. 
\end{proof}

We are now ready to use the developed theories for FFOPs to solve the optimization problems formulated in \eqref{Eq:Lm:EstimatedOptimalSystemIndices}.

\paragraph{\textbf{Application to Solve \eqref{Eq:Lm:EstimatedOptimalSystemIndices}}}
First, following the same notation $\tilde{\Psi}_q$ used before (e.g., see \eqref{Eq:Co:Th:RationalFunctionPropertyCont2Max}), let us define 
$$\tilde{\Psi}_u \triangleq \{t: t\in [t_o,t_f), u(t)=0\},$$ 
$$\tilde{\Psi}_y \triangleq \{t: t\in [t_o,t_f), y(t)=0\}.$$
The following theorem provides an efficient approach to solve the optimization problems formulated in \eqref{Eq:Lm:EstimatedOptimalSystemIndices}.

\begin{theorem}\label{Th:EstimatedOptimalSystemIndices}
Using a given input-output profile of the system $\mathcal{H}$ \eqref{Eq:NonlinearSystem}, the optimal system indices \eqref{Eq:Lm:OptimalSystemIndices}: L2G $\gamma_*$, IFP $\nu_*$ and OFP $\rho_*$ can respectively be estimated by  
\begin{equation}\label{Eq:Th:EstimatedOptimalSystemIndices}
\begin{gathered}
    \hat{\gamma}^2_* = \max_{t \in [t_o,t_f)\backslash \tilde{\Psi}_u} 
    \frac{y^\T(t)y(t) - K_s}{u^\T(t)u(t)},\\   
    \hat{\nu}_* = \min_{t \in [t_o,t_f)\backslash \tilde{\Psi}_u} 
    \frac{u^\T(t)y(t) + K_s}{u^\T(t)u(t)},\\
    \hat{\rho}_* = \min_{t \in [t_o,t_f)\backslash \tilde{\Psi}_y}
    \frac{u^\T(t)y(t) + K_s}{y^\T(t)y(t)},
\end{gathered}
\end{equation}
so that $\hat{\gamma}^2_* \leq \gamma^2_*$, $\hat{\nu}_* \geq \nu_*$ and $\hat{\rho}_* \geq \rho_*$ holds true.
\end{theorem}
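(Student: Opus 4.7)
The plan is to apply Corollary \ref{Co:Th:RationalFunctionPropertyCont2} directly to each of the three fractional function optimization problems posed in \eqref{Eq:Lm:EstimatedOptimalSystemIndices}. First I would identify each objective in \eqref{Eq:EstimatedFractionalFunctionsOfInterest} as an instance of the generic fractional function \eqref{Eq:GenericRationalFunctionCont}. For L2G, the identification is $p(t)=y^\T(t)y(t)$, $q(t)=u^\T(t)u(t)$, $r(\tau)=-K_s\tau$ (so $\dot{r}(0)=-K_s$) and $s(\tau)=0$; for IFP, $p(t)=u^\T(t)y(t)$, $q(t)=u^\T(t)u(t)$, $r(\tau)=K_s\tau$, $s(\tau)=0$; and for OFP, $p(t)=u^\T(t)y(t)$, $q(t)=y^\T(t)y(t)$, $r(\tau)=K_s\tau$, $s(\tau)=0$.

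Next I verify the hypotheses. Assumption \ref{As:NoLinearMappingInDenominator} is immediate since $s(\tau)\equiv 0$ in all three cases. For the L2G case (a maximization) $r(\tau)\leq 0$, so part (i) of Corollary \ref{Co:Th:RationalFunctionPropertyCont2} is the relevant one; for the IFP and OFP cases (minimizations) $r(\tau)\geq 0$, matching part (ii). Applying the corollary and substituting the identified $p$, $q$ and $\dot{r}(0)$ values into the general limit expression yields precisely the three formulas of \eqref{Eq:Th:EstimatedOptimalSystemIndices}. The exclusion sets specialize as $\tilde{\Psi}_q=\tilde{\Psi}_u$ for L2G and IFP (since $u^\T u=0$ iff $u=0$) and $\tilde{\Psi}_q=\tilde{\Psi}_y$ for OFP. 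The inequality bounds $\hat{\gamma}^2_*\leq \gamma^2_*$, $\hat{\nu}_*\geq\nu_*$ and $\hat{\rho}_*\geq \rho_*$ are then inherited directly from Lemma \ref{Lm:EstimatedOptimalSystemIndices}, since the corollary only reformulates the optimum without changing its value.

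The main obstacle is justifying Assumption \ref{As:NumeratorDenominatorConnection}, which is required by Corollary \ref{Co:Th:RationalFunctionPropertyCont2}: it is not automatically true, for instance in the L2G case, that $u(t)=0\Rightarrow y(t)=0$ for arbitrary non-linear systems. If this step is suspect, I would instead invoke the full Theorem \ref{Th:Th:RationalFunctionPropertyCont} and show that the two additional dead-zone inner optimizations appearing on the right-hand side of \eqref{Eq:Co:RationalFunctionPropertyMaxCont}--\eqref{Eq:Co:RationalFunctionPropertyMinCont} are non-binding. Concretely, within any dead-zone $[t_{zo},t_{zf})$ of $u$, the $\mathcal{L}_2$-dissipation inequality \eqref{Eq:Def:L2Gain} combined with Assumption \ref{As:LipschitzStorageFunction} gives $\int_{t_{zo}}^{t} y^\T y\,d\tau\leq K_s(t-t_{zo})$, so the dead-zone numerator in \eqref{Eq:Co:RationalFunctionPropertyObjectives2} is non-positive and both extra terms evaluate to $-\infty$, which cannot raise the overall maximum. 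Parallel sign arguments, using \eqref{Eq:Def:IFPIndex} and \eqref{Eq:Def:OFPIndex} together with Assumption \ref{As:LipschitzStorageFunction}, dispose of the corresponding dead-zone contributions for the IFP and OFP minimizations, so that only the first inner optimization in Theorem \ref{Th:Th:RationalFunctionPropertyCont} is ever decisive, recovering \eqref{Eq:Th:EstimatedOptimalSystemIndices}.
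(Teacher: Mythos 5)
Your proposal is correct and follows essentially the same route as the paper: reduce to showing equivalence of \eqref{Eq:Lm:EstimatedOptimalSystemIndices} and \eqref{Eq:Th:EstimatedOptimalSystemIndices}, apply Co. \ref{Co:Th:RationalFunctionPropertyCont2} with the same identifications of $p$, $q$, $r$ (noting As. \ref{As:NumeratorDenominatorConnection} holds automatically for the IFP/OFP cases but not for L2G), and inherit the inequalities from Lm. \ref{Lm:EstimatedOptimalSystemIndices}. Your fallback for the L2G dead-zones is just a more explicit version of the paper's own step, which likewise assumes a finite L2G so that, via \eqref{Eq:Co:RationalFunctionPropertyObjectives2}, the dead-zone terms in \eqref{Eq:Co:RationalFunctionPropertyMaxCont} cannot be decisive; your use of \eqref{Eq:Def:L2Gain} with As. \ref{As:LipschitzStorageFunction} to show the dead-zone numerators are non-positive makes that justification concrete.
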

\begin{proof}
Due to Lm. \ref{Lm:EstimatedOptimalSystemIndices}, here we only need to prove the equivalence of the respective optimization problems in \eqref{Eq:Lm:EstimatedOptimalSystemIndices} and \eqref{Eq:Th:EstimatedOptimalSystemIndices}. To this end, we use the proposed Th. \ref{Th:Th:RationalFunctionPropertyCont} and Co. \ref{Co:Th:RationalFunctionPropertyCont2}. 

First, note that As. \ref{As:NoLinearMappingInDenominator} holds for any fractional function defined in \eqref{Eq:EstimatedFractionalFunctionsOfInterest} and As. \ref{As:NumeratorDenominatorConnection} holds for the fractional functions $\hat{\psi}_\nu$ and $\hat{\psi}_\rho$ defined in \eqref{Eq:EstimatedFractionalFunctionsOfInterest} (as $u(t)=\0$ or $y(t)=\0$ implies $u^\T(t)y(t)=\0$). Hence, Co. \ref{Co:Th:RationalFunctionPropertyCont2} (in particular, the equivalence relationship in \eqref{Eq:Co:Th:RationalFunctionPropertyCont2Min}) can be used to efficiently evaluate the estimates for the optimal system indices: IFP $\hat{\nu}_*$ and OFP $\hat{\rho}_*$ given in \eqref{Eq:Lm:EstimatedOptimalSystemIndices}. 

Therefore, by applying $p(t)=u^\T(t)y(t)$, $q(t)=u^\T(t)u(t)$ and $r(\tau) = K_s(\tau)$ in \eqref{Eq:Co:Th:RationalFunctionPropertyCont2Min}, we can obtain the required equivalence of $\hat{\nu}_*$ expressions given in \eqref{Eq:Lm:EstimatedOptimalSystemIndices} and \eqref{Eq:Th:EstimatedOptimalSystemIndices}. Similarly, by applying $p(t)=u^\T(t)y(t)$, $q(t)=y^\T(t)y(t)$ and $r(\tau) = K_s(\tau)$ in \eqref{Eq:Co:Th:RationalFunctionPropertyCont2Min}, we can obtain the required equivalence of $\hat{\rho}_*$ expressions given in \eqref{Eq:Lm:EstimatedOptimalSystemIndices} and \eqref{Eq:Th:EstimatedOptimalSystemIndices}. 

On the other hand, note that, As. \ref{As:NumeratorDenominatorConnection} does not necessarily hold for the fractional function $\hat{\psi}_\gamma$ defined in \eqref{Eq:EstimatedFractionalFunctionsOfInterest} (as $u(t)=\0$ does not imply $y^\T(t)y(t)=\0$). However, since As. \ref{As:NoLinearMappingInDenominator} holds for this fractional function $\hat{\psi}_\gamma$, we can apply Th. \ref{Th:Th:RationalFunctionPropertyCont} (in particular, the equivalence relationship in \eqref{Eq:Co:RationalFunctionPropertyMaxCont}) to efficiently evaluate an estimate for the optimal system index: L2G $\hat{\gamma}_*$ given in \eqref{Eq:Lm:EstimatedOptimalSystemIndices}. Nevertheless, if we assumed that the system $\mathcal{H}$ \eqref{Eq:NonlinearSystem} has a finite L2G value, then, by applying \eqref{Eq:Co:RationalFunctionPropertyObjectives2} in \eqref{Eq:Co:RationalFunctionPropertyMaxCont}, we can obtain the same equivalence relationship \eqref{Eq:Co:Th:RationalFunctionPropertyCont2Max} proposed in Co. \ref{Co:Th:RationalFunctionPropertyCont2}. Hence, Co. \ref{Co:Th:RationalFunctionPropertyCont2} (in particular \eqref{Eq:Co:Th:RationalFunctionPropertyCont2Max}) can also be used here to efficiently evaluate $\hat{\gamma}_*$ given in \eqref{Eq:Lm:EstimatedOptimalSystemIndices}. 

Therefore, by applying $p(t)=y^\T(t)y(t)$, $q(t)=u^\T(t)u(t)$ and $r(\tau) = -K_s(\tau)$ in \eqref{Eq:Co:Th:RationalFunctionPropertyCont2Max}, we can obtain the required equivalence of $\hat{\gamma}_*$ expressions given in \eqref{Eq:Lm:EstimatedOptimalSystemIndices} and \eqref{Eq:Th:EstimatedOptimalSystemIndices}. This completes the proof.
\end{proof}

\paragraph{\textbf{Discussion}} 
Comparing Lm. \ref{Lm:EstimatedOptimalSystemIndices} \eqref{Eq:Lm:EstimatedOptimalSystemIndices} and Th. \ref{Th:EstimatedOptimalSystemIndices} \eqref{Eq:Th:EstimatedOptimalSystemIndices}, we can make the following set of remarks.

\begin{remark}\textbf{(Efficiency)} \label{Rm:Efficiency}
Compared to \eqref{Eq:Lm:EstimatedOptimalSystemIndices}, \eqref{Eq:Th:EstimatedOptimalSystemIndices} is significantly computationally inefficient due to: (i) the reduced search space size, (ii) not having to compute/analyze continuous integrals, and (iii) the ability to implement on-line using less amount of memory. 
\end{remark}

\begin{remark}\textbf{(Sampling)}\label{Rm:Sampling}
When using a digital processor to compute the estimates of the optimal system indices (via \eqref{Eq:Lm:EstimatedOptimalSystemIndices} or \eqref{Eq:Th:EstimatedOptimalSystemIndices}), we cannot use the complete continuous input-output signals $\{u(t):t\in [t_o,t_f)\}$ and $\{y(t): t\in [t_o,t_f)\}$. Instead, we may only use sampled sets of data points from those signals $\{u(t): t \in D_s \subset [t_o,t_f)\}$ and $\{y(t): t \in D_s\subset [t_0,t_f)\}$, where $D_s$ is a \textbf{discrete} (and possibly \textbf{non-uniform}) set of sampling time instants selected from the continuous-time interval $[t_o,t_f)$. In a such scenario, note that we still can use \eqref{Eq:Th:EstimatedOptimalSystemIndices} (as opposed to \eqref{Eq:Lm:EstimatedOptimalSystemIndices}) and the bounds stated in Th. \ref{Th:EstimatedOptimalSystemIndices} will hold true. This is because, sampling will only reduce the feasible space size of the optimization problems in \eqref{Eq:Th:EstimatedOptimalSystemIndices} leading to increase the error between the estimates and the optimal values. Note also that this estimation error will decrease as the number of samples (i.e., $\vert D_s \vert$) increases. 
\end{remark}

\begin{remark}\textbf{(Control for Estimation)}\label{Rm:Control}
Consider a scenario where signals $\{p(t)\in\R:t\in[t_o,t_f)\}$ and $\{q(t)\in\R_{\geq 0}:t\in[t_o,t_f)\}$ in the fractional function $\psi(\cdot, \cdot)$ \eqref{Eq:GenericRationalFunctionCont} are \emph{controllable} over time $t$ from their initial values $p(t_o)$ and $q(t_o)$ via selecting $\dot{p}(t) \triangleq \frac{dp(t)}{dt}$ and $\dot{q}(t) \triangleq \frac{dq(t)}{dt}$ values, respectively. Note that, under As. \ref{As:NoDeadZoneCont} and As. \ref{As:NoLinearMappingInDenominator}, the RHS of \eqref{Eq:Th:RationalFunctionPropertyMaxCont} (equally applies for \eqref{Eq:Th:RationalFunctionPropertyMinCont}) can be written as  
\begin{equation}\label{Eq:ControlForEstimation1}
    \max_{t\in[t_o,t_f)} \psi(t) \triangleq \frac{p(t)+\dot{r}(0)}{q(t)}. 
\end{equation}
The dynamics of $\psi(t)$ defined above can be derived as
\begin{equation}
    \dot{\psi}(t) = -\frac{1}{q(t)}\left(\psi(t)\dot{q}(t)-\dot{p}(t)\right).
\end{equation}
Now, taking $x = [x_1,x_2,x_3]^\T \triangleq [\psi,p,q]^\T$ as the state variable and $u = [u_1,u_2]^\T \triangleq [\dot{p},\dot{q}]^\T$ as the control input, a non-linear state space representation $\dot{x} = g(x)u$, i.e.,
\begin{equation}
    \begin{bmatrix}
    \dot{x}_1 \\ \dot{x}_2 \\ \dot{x}_3
    \end{bmatrix}
    =
    \begin{bmatrix}
    \frac{1}{x_3} & -\frac{x_1}{x_3}\\
    1 & 0\\
    0 & 1
    \end{bmatrix}
    \begin{bmatrix}
    u_1 \\ u_2
    \end{bmatrix}, 
\end{equation}
can be obtained to represent the underlying dynamics. Using this setup, an optimal control problem can be formulated as 
\begin{equation}
\begin{aligned}
    \underset{{t_f,u_{t_o:t_f}}}{\arg\min}&\ \int_{t_o}^{t_f} L(x(t),u(t),t) dt + \phi(x(t_f),t_f),\\
    \mbox{such that }&\ \dot{x}(t) = g(x(t))u(t), \ \ \forall t\in [t_o,t_f)
\end{aligned}
\end{equation}
(e.g., with $L = u^\T(t) u(t)$ and $\phi = -x_1(t_f) + t_f = -\psi(t_f) + t_f$) to determine the optimal control input profile that rapidly leads to maximize $\psi(t)$ in \eqref{Eq:ControlForEstimation1}.
\end{remark}

\begin{remark}
\textbf{(Realizations)} \label{Rm:Realizations}
In reality, we might not have a single long input-output profile of the system to carry out the on-line estimation process (using \eqref{Eq:Lm:EstimatedOptimalSystemIndices} or \eqref{Eq:Th:EstimatedOptimalSystemIndices}). Instead, we might only have a collection of past realizations of the system (with arbitrary initial/terminal conditions) apart from a short recent input-output profile. In this kind of a setting, it is not clear how \eqref{Eq:Lm:EstimatedOptimalSystemIndices} can be used across all the realizations. However, in such a setting, \eqref{Eq:Th:EstimatedOptimalSystemIndices} can easily be used by treating all the realizations as a single long input-output profile (sequenced in an arbitrary order).
\end{remark}

\paragraph{\textbf{Selection of $K_s$}} 
For the system indices, as opposed to the proposed estimates in \eqref{Eq:EstimatedOptimalSystemIndicesOld} \cite{Zakeri2019}, the proposed estimates in this paper \eqref{Eq:Th:EstimatedOptimalSystemIndices} are dependent on the rate bound $K_s\in\R_{\geq 0}$ assumed for the storage function $S:X\rightarrow \R_{\geq 0}$ in As. \ref{As:LipschitzStorageFunction}. To compute this $K_s$ parameter, theoretically, under some assumptions, one can exploit the concept of available storage function $S_a:X\rightarrow \R_{\geq 0}$ defined in Def. \ref{Def:AvailableStorage}. However, here we take a more practical approach as given in the sequel. 

First, by applying $K_s=0$ in \eqref{Eq:Th:EstimatedOptimalSystemIndices}, we obtain parameter-free estimates for the optimal system indices \eqref{Eq:Lm:OptimalSystemIndices}: L2G $\gamma_*$, IFP $\nu_*$ and OFP $\rho_*$ respectively as 
\begin{equation}\label{Eq:EstimatedOptimalSystemIndicesParameterFree}
\begin{gathered}
    \tilde{\gamma}^2 = \max_{t \in [t_o,t_f)\backslash \tilde{\Psi}_u} 
    \frac{y^\T(t)y(t)}{u^\T(t)u(t)},\\   
    \tilde{\nu} = \min_{t \in [t_o,t_f)\backslash \tilde{\Psi}_u} 
    \frac{u^\T(t)y(t)}{u^\T(t)u(t)},\\
    \tilde{\rho} = \min_{t \in [t_o,t_f)\backslash \tilde{\Psi}_y}
    \frac{u^\T(t)y(t)}{y^\T(t)y(t)}.
\end{gathered}
\end{equation} 

The following lemma compares \eqref{Eq:Lm:OptimalSystemIndices}, \eqref{Eq:EstimatedOptimalSystemIndicesOld}, \eqref{Eq:Th:EstimatedOptimalSystemIndices} and \eqref{Eq:EstimatedOptimalSystemIndicesParameterFree}.

\begin{lemma}\label{Lm:Bounds}
The optimal system indices \eqref{Eq:Lm:OptimalSystemIndices} and their on-line estimates given in \eqref{Eq:EstimatedOptimalSystemIndicesOld}, \eqref{Eq:Th:EstimatedOptimalSystemIndices} and \eqref{Eq:EstimatedOptimalSystemIndicesParameterFree} satisfy:
\begin{equation}\label{Eq:Lm:Bounds}
\begin{aligned}
\hat{\gamma}^2_* &\leq \gamma^2_*,  \ \ \ \   & \hat{\nu}_* &\geq \nu_*,  \ \ \ \  & \hat{\rho}_* &\geq \rho_*,\\
\hat{\gamma}^2_* &\leq \tilde{\gamma}^2,            & \hat{\nu}_* &\geq \tilde{\nu},             & \hat{\rho}_* &\geq \tilde{\rho},\\
\hat{\gamma}^2 &\leq \tilde{\gamma}^2,             & \hat{\nu} &\geq \tilde{\nu},             & \hat{\rho} &\geq \tilde{\rho}.
\end{aligned}
\end{equation}
\end{lemma}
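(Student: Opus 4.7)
The plan is to handle the three rows of \eqref{Eq:Lm:Bounds} separately, since they correspond to structurally different comparisons. The first row is exactly the conclusion of Lm. \ref{Lm:EstimatedOptimalSystemIndices} and needs no new argument. For the second row, I would note that $\hat{\gamma}^2_*$ in \eqref{Eq:Th:EstimatedOptimalSystemIndices} and $\tilde{\gamma}^2$ in \eqref{Eq:EstimatedOptimalSystemIndicesParameterFree} are maxima of the same kind of integrand over the same feasible set $[t_o,t_f)\setminus \tilde{\Psi}_u$, differing only by a $-K_s$ term in the numerator. Since $K_s\geq 0$ and $u^\T(t)u(t)>0$ on that set, $\frac{y^\T(t) y(t) - K_s}{u^\T(t) u(t)}\leq \frac{y^\T(t) y(t)}{u^\T(t) u(t)}$ holds pointwise, and taking the max over the common domain preserves this ordering. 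The IFP and OFP inequalities follow identically, with $+K_s$ shifting the numerator upward and hence each minimum.

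For the third row I would invoke Th. \ref{Th:RationalFunctionPropertyCont} applied to the generic fractional function \eqref{Eq:GenericRationalFunctionCont} with $r(\tau)\equiv 0$ and $s(\tau)\equiv 0$ (i.e., $K_s=0$). Under this choice, the averaging-based estimate $\hat{\gamma}^2$ in \eqref{Eq:EstimatedOptimalSystemIndicesOld} is literally $\psi(t_o,t_f)$ with $p(t)=y^\T(t)y(t)$ and $q(t)=u^\T(t)u(t)$, i.e., a single evaluation of \eqref{Eq:GenericRationalFunctionCont} at the endpoint pair, while $\tilde{\gamma}^2$ in \eqref{Eq:EstimatedOptimalSystemIndicesParameterFree} equals $\max_{t}\lim_{\Delta\to 0}\psi(t,t+\Delta)$ with the same substitutions, which by Th. \ref{Th:RationalFunctionPropertyCont} coincides with $\max_{(t_0,t_1)\in\Psi}\psi(t_0,t_1)$. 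Since $(t_o,t_f)$ is one feasible pair in $\Psi$, $\hat{\gamma}^2\leq\tilde{\gamma}^2$ is immediate; swapping $\max$ for $\min$ yields $\hat{\nu}\geq\tilde{\nu}$ and $\hat{\rho}\geq\tilde{\rho}$ by the same argument.

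The main obstacle I anticipate is the L2G case of the third row when $u$ has dead-zones, because As. \ref{As:NumeratorDenominatorConnection} fails for the L2G substitution (since $u(t)=\0$ does not force $y(t)=\0$), so strictly one must invoke Th. \ref{Th:Th:RationalFunctionPropertyCont}, whose right-hand side contains two extra inner optimizations over limits across dead-zones. As observed inside the proof of Th. \ref{Th:EstimatedOptimalSystemIndices}, however, assuming that $\mathcal{H}$ has a finite optimal L2G value forces $y(t)$ to vanish on every dead-zone of $u(t)$: otherwise the dead-zone limits in \eqref{Eq:Co:RationalFunctionPropertyObjectives2} would be $+\infty$ and $\tilde{\gamma}^2$ would already be infinite, trivializing the bound. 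Once $y$ vanishes on every dead-zone, those intervals contribute nothing to either integral in $\hat{\gamma}^2$, and the single-evaluation-versus-maximum argument goes through unchanged. No such caveat arises for IFP or OFP, since As. \ref{As:NumeratorDenominatorConnection} is satisfied there and Co. \ref{Co:Th:RationalFunctionPropertyCont2} applies directly.
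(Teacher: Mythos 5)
Your decomposition matches the paper's proof in substance (the paper organizes it by column rather than by row, but the ingredients are the same): the first row is quoted from Lm.~\ref{Lm:EstimatedOptimalSystemIndices}/Th.~\ref{Th:EstimatedOptimalSystemIndices}, the second row is the pointwise comparison using $K_s\geq 0$ and $u^\T(t)u(t)>0$ over the common feasible set, and the third row is the chain ``single evaluation at $(t_o,t_f)$ $\leq$ maximum over subintervals $=$ maximum of instantaneous ratios,'' the last equality coming from the FFOP equivalence results (the paper invokes Co.~\ref{Co:Th:RationalFunctionPropertyCont2} here). The IFP and OFP entries of the third row are indeed covered by Co.~\ref{Co:Th:RationalFunctionPropertyCont2} since As.~\ref{As:NumeratorDenominatorConnection} holds for them, as you note.

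The gap is in your dead-zone patch for the L2G entry of the third row. First, $\tilde{\gamma}^2$ in \eqref{Eq:EstimatedOptimalSystemIndicesParameterFree} is by definition a maximum over $[t_o,t_f)\backslash\tilde{\Psi}_u$, so a dead zone of $u$ on which $y\neq\0$ does \emph{not} make $\tilde{\gamma}^2$ infinite; it makes the interval-based maximum over $\Psi\backslash\Psi_q$ (with $r\equiv 0$) infinite, which is exactly where the chain breaks, so no contradiction is produced. Second, finiteness of the optimal L2G does not force $y(t)\equiv\0$ on intervals where $u(t)\equiv\0$: with zero input, \eqref{Eq:Def:L2Gain} and As.~\ref{As:LipschitzStorageFunction} only give $\int_{t_0}^{t_1}y^\T(t)y(t)\,dt \leq S(x_0)-S(x_1) \leq K_s(t_1-t_0)$, i.e.\ the output energy is bounded by the released storage, not zero --- and this bound is of no help once $K_s$ is set to zero as in \eqref{Eq:EstimatedOptimalSystemIndicesParameterFree}. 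In fact the inequality you are trying to rescue can fail in that regime: take $u\equiv 1$, $y\equiv 1$ on $[0,1)$ and $u\equiv 0$, $y\equiv 3$ on $[1,2)$ (a zero-input response driven by stored energy); then $\hat{\gamma}^2=(1+9)/1=10$ while $\tilde{\gamma}^2=1$. So the L2G part of the third row tacitly requires As.~\ref{As:NoDeadZoneCont} (no dead zones of $u$ in the data) or the condition $u(t)=\0\Rightarrow y(t)=\0$ of As.~\ref{As:NumeratorDenominatorConnection} --- which is precisely what the paper's appeal to Co.~\ref{Co:Th:RationalFunctionPropertyCont2} presupposes; this hypothesis must be assumed, not derived from finiteness of $\gamma_*$ as you attempt. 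Your treatment of the remaining entries is sound.
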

\begin{proof}
Consider the first set of inequalities in \eqref{Eq:Lm:Bounds}. In there, note that, the inequality $\hat{\gamma}^2_* \leq \gamma^2_*$ has already been established in Th. \ref{Th:EstimatedOptimalSystemIndices}. Further, due to the fact that $K_s\geq 0$ and $u^\T(t)u(t)>0$, from comparing \eqref{Eq:Th:EstimatedOptimalSystemIndices} with \eqref{Eq:EstimatedOptimalSystemIndicesParameterFree}, it is easy to see that $\hat{\gamma}^2_* \leq \tilde{\gamma}^2$. Furthermore, using: \eqref{Eq:EstimatedOptimalSystemIndicesOld}, $(t_o,t_f)\in\hat{\Psi}_\gamma$, Co. \ref{Co:Th:RationalFunctionPropertyCont2} and \eqref{Eq:EstimatedOptimalSystemIndicesParameterFree}, we can simplify 
\begin{align*}
\hat{\gamma}^2 
\triangleq \frac{\int_{t_o}^{t_f} y^\T(t)y(t)dt}{\int_{t_o}^{t_f}u^\T(t)u(t)dt} 
\leq& \underset{(t_0,t_1)\in\hat{\Psi}_\gamma}{\max} 
\frac{\int_{t_0}^{t_1} y^\T(t)y(t)dt}{\int_{t_0}^{t_1}u^\T(t)u(t)dt}\\
=& \underset{t\in[t_o,t_f)\backslash \tilde{\Psi}_u}{\max} 
\frac{y^\T(t)y(t)}{u^\T(t)u(t)} 
\triangleq \tilde{\gamma}^2,
\end{align*}
and thus, $\hat{\gamma}^2 \leq \tilde{\gamma}^2$. Following the same steps, the last two sets of inequalities in \eqref{Eq:Lm:Bounds} can also be established.
\end{proof}

A critical consequence of Lm. \ref{Lm:Bounds} is that it can be used to establish bounds on the optimal system indices \eqref{Eq:Lm:OptimalSystemIndices} with respect to their estimates proposed in \eqref{Eq:EstimatedOptimalSystemIndicesOld} \cite{Zakeri2019} as: 
\begin{equation}\label{Eq:BoundsForOldEstimates}
    \begin{aligned}
    \hat{\gamma}^2 - 2(\tilde{\gamma}^2-\hat{\gamma}_*^2) \leq \gamma_*^2 \ \ &\mbox{ with } \ \  (\tilde{\gamma}^2-\hat{\gamma}_*^2)\geq0,\\
    \hat{\nu} + 2(\hat{\nu}_*-\tilde{\nu}) \geq \nu_* \ \ &\mbox{ with }\ \ 
    (\hat{\nu}_*-\tilde{\nu})\geq 0,\\
    \hat{\rho} + 2(\hat{\rho}_*-\tilde{\rho}) \geq \rho_* \ \ &\mbox{ with } \ \ 
    (\hat{\rho}_*-\tilde{\rho})\geq 0.
    \end{aligned}
\end{equation}

On the other hand, note that, as $K_s\rightarrow 0$, the optimal estimates given in \eqref{Eq:Th:EstimatedOptimalSystemIndices} may become: (i) inaccurate (due to violation of As. \ref{As:LipschitzStorageFunction}), (ii) close to the optimal values \eqref{Eq:Lm:OptimalSystemIndices}, (iii) close to the estimates given in \eqref{Eq:EstimatedOptimalSystemIndicesParameterFree} and (iv) distant from the estimates given in \eqref{Eq:EstimatedOptimalSystemIndicesOld}. 
On the other hand, as $K_s$ increases, the optimal estimates given in \eqref{Eq:Th:EstimatedOptimalSystemIndices} may become: (i) accurate (due to satisfaction of As. \ref{As:LipschitzStorageFunction}), (ii) distant from the optimal values \eqref{Eq:Lm:OptimalSystemIndices}, (iii) distant from the estimates given in \eqref{Eq:EstimatedOptimalSystemIndicesParameterFree}, and (iv) close to the estimates given in \eqref{Eq:EstimatedOptimalSystemIndicesOld}. Therefore, the importance of selecting a moderate $K_s$ value is clear. 

For this purpose, we assume the availability of a test input-output profile of the system \eqref{Eq:NonlinearSystem} (i.e., $\{(u(t),y(t)):t\in[t_o,t_f)\}$) and the corresponding evaluated estimates given in \eqref{Eq:EstimatedOptimalSystemIndicesOld} (i.e., $\hat{\gamma}^2,\hat{\nu},\hat{\rho}$) and \eqref{Eq:EstimatedOptimalSystemIndicesParameterFree} (i.e., $\tilde{\gamma}^2,\tilde{\nu},\tilde{\rho}$). Let us define 
\begin{equation}\label{Eq:MeanEstimates}
    \bar{\gamma}^2 \triangleq \frac{1}{2}(\hat{\gamma}^2+\tilde{\gamma}^2),\ \ \ \ 
    \bar{\nu} \triangleq \frac{1}{2}(\hat{\nu}+\tilde{\nu}), \ \ \ \ 
    \bar{\rho} \triangleq \frac{1}{2}(\hat{\rho}+\tilde{\rho}),
\end{equation}
and three candidate $K_s$ parameter values:
\begin{equation}\label{Eq:CandidateK_sValues}
    \begin{aligned}
    K_s \triangleq&\ \max_{t\in[t_o,t_f)\backslash \tilde{\Psi}_u} y^\T(t)y(t) - \bar{\gamma}^2 u^\T(t)u(t),\\
    K_s \triangleq&\ \max_{t\in[t_o,t_f)\backslash \tilde{\Psi}_u} \bar{\nu} u^\T(t)u(t) - u^\T(t)y(t),\\
    K_s \triangleq&\ \max_{t\in[t_o,t_f)\backslash \tilde{\Psi}_y} \bar{\rho} y^\T(t)y(t) - u^\T(t)y(t).\\
    \end{aligned}
\end{equation}


\begin{theorem}\label{Th:CandidateK_sValues}
When the three optimization problems in \eqref{Eq:Th:EstimatedOptimalSystemIndices} are solved using the respective $K_s$ parameters given in \eqref{Eq:CandidateK_sValues}, the resulting estimates of the optimal system indices \eqref{Eq:Th:EstimatedOptimalSystemIndices} will lie within the bounds \eqref{Eq:BoundsForOldEstimates} found for the estimates given by \eqref{Eq:EstimatedOptimalSystemIndicesOld} \cite{Zakeri2019} (implying an increased accuracy). 
\end{theorem}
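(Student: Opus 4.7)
The plan is to establish two facts for each of the three system indices: first, that the $K_s$ choice in \eqref{Eq:CandidateK_sValues} is engineered so that the corresponding estimate in \eqref{Eq:Th:EstimatedOptimalSystemIndices} collapses exactly to the mean estimate in \eqref{Eq:MeanEstimates}; and second, that this mean estimate sits in the interval determined by \eqref{Eq:BoundsForOldEstimates} together with the containment in Lm.~\ref{Lm:EstimatedOptimalSystemIndices}.

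I would start with the L2G case. The defining relation $K_s = \max_{t\in[t_o,t_f)\setminus\tilde{\Psi}_u}[y^\T(t)y(t)-\bar{\gamma}^2 u^\T(t)u(t)]$ yields the pointwise inequality $y^\T(t)y(t) - K_s \leq \bar{\gamma}^2 u^\T(t)u(t)$; dividing by $u^\T(t)u(t)>0$ and maximizing over $t$ gives $\hat{\gamma}^2_* \leq \bar{\gamma}^2$, while equality at the argmaximizer produces $\hat{\gamma}^2_* \geq \bar{\gamma}^2$, so the two coincide and $\hat{\gamma}^2_* = \bar{\gamma}^2 = \tfrac{1}{2}(\hat{\gamma}^2+\tilde{\gamma}^2)$. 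The IFP and OFP cases are symmetric: for IFP, $K_s = \max_t[\bar{\nu}u^\T u - u^\T y]$ similarly forces $\hat{\nu}_* = \bar{\nu}$; for OFP the same argument with $u^\T u$ replaced by $y^\T y$ and $\tilde{\Psi}_u$ by $\tilde{\Psi}_y$ gives $\hat{\rho}_* = \bar{\rho}$.

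With the collapse in hand, the second step is a short algebraic check. Substituting $\hat{\gamma}^2_* = \tfrac{1}{2}(\hat{\gamma}^2+\tilde{\gamma}^2)$ into the L2G bound $\hat{\gamma}^2-2(\tilde{\gamma}^2-\hat{\gamma}^2_*)\leq \gamma^2_*$ collapses its left-hand side to $2\hat{\gamma}^2-\tilde{\gamma}^2$. Combined with the upper containment $\hat{\gamma}^2_*\leq\gamma^2_*$ from Th.~\ref{Th:EstimatedOptimalSystemIndices}, showing that $\hat{\gamma}^2_*$ itself lies inside $[\,2\hat{\gamma}^2-\tilde{\gamma}^2,\,\gamma^2_*\,]$ reduces to $2\hat{\gamma}^2-\tilde{\gamma}^2 \leq \tfrac{1}{2}(\hat{\gamma}^2+\tilde{\gamma}^2)$, i.e.\ to $\hat{\gamma}^2\leq\tilde{\gamma}^2$, which is the third line of \eqref{Eq:Lm:Bounds}. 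The IFP and OFP verifications are identical with inequalities reversed, driven by $\tilde{\nu}\leq\hat{\nu}$ and $\tilde{\rho}\leq\hat{\rho}$ from the same lemma.

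The main obstacle is a care-of-attainment subtlety: the collapse step requires that the maximum in each $K_s$ definition be achieved at a point with $u^\T u > 0$ (or $y^\T y > 0$), so that division and the subsequent equality are legitimate. This is built in by restricting the domain of the max to $[t_o,t_f)\setminus\tilde{\Psi}_u$ (respectively $\setminus\tilde{\Psi}_y$), and it is automatic in the sampled-data regime of Rm.~\ref{Rm:Sampling}. In the continuous-time idealization one either invokes continuity of $u,y$ on the restricted domain, or re-reads the equalities in the first step as sup/inf relations; either way the algebraic verification in the second step passes through unchanged, and the statement about increased accuracy follows since the new estimates have been shown to sit strictly between the Zakeri-style estimates and the true optima.
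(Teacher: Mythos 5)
Your proof is correct and follows essentially the same route as the paper: use the defining maximum of $K_s$ in \eqref{Eq:CandidateK_sValues} to get the pointwise inequality relating the new estimate to the mean $\bar{\gamma}^2$ (resp.\ $\bar{\nu}$, $\bar{\rho}$), then combine with $\hat{\gamma}^2\leq\tilde{\gamma}^2$ (resp.\ $\hat{\nu}\geq\tilde{\nu}$, $\hat{\rho}\geq\tilde{\rho}$) from Lm.~\ref{Lm:Bounds} and the containment $\hat{\gamma}^2_*\leq\gamma^2_*$ to place the estimate inside \eqref{Eq:BoundsForOldEstimates}. The only difference is cosmetic: you assert the exact collapse $\hat{\gamma}^2_*=\bar{\gamma}^2$ (which needs attainment of the maximum at a point with $u^\T(t)u(t)>0$), whereas the paper---and, as you yourself note, your own verification---only needs the one-sided relation $\hat{\gamma}^2_*\leq\bar{\gamma}^2$ (and its analogues), so the attainment subtlety is immaterial.
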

\begin{proof}
We first consider the estimation of the L2G value (using the first equations in \eqref{Eq:Th:EstimatedOptimalSystemIndices} and \eqref{Eq:CandidateK_sValues}). From \eqref{Eq:CandidateK_sValues}, 
\begin{align}
    K_s &\geq y^\T(t)y(t)-\bar{\gamma}^2 u^\T(t)u(t),  \ \ \forall t\in [t_o,t_f)\backslash \tilde{\Psi}_u \nonumber\\
    &\iff \bar{\gamma}^2  \geq \frac{y^\T(t)y(t)-K_s}{u^\T(t)u(t)}, \ \ \forall t\in [t_o,t_f)\backslash \tilde{\Psi}_u \nonumber\\
    &\iff \bar{\gamma}^2 \geq \hat{\gamma}_*^2, \ \ \mbox{(sub. from \eqref{Eq:Th:EstimatedOptimalSystemIndices})}\nonumber\\
    &\iff \frac{1}{2}(\hat{\gamma}^2+\tilde{\gamma}^2) \geq \hat{\gamma}_*^2. \ \ \mbox{(sub. from \eqref{Eq:MeanEstimates})}\label{Eq:Th:CandidateK_sValuesStep1}
\end{align}
According to Lm. \ref{Lm:Bounds}, $\tilde{\gamma}^2 \geq \hat{\gamma}^2$. It is easy see that 
$$\tilde{\gamma}^2 \geq \hat{\gamma}^2 \iff 2\tilde{\gamma}^2-\hat{\gamma}^2 \geq \frac{1}{2}(\hat{\gamma}^2+\tilde{\gamma}^2).$$
Applying this result in \eqref{Eq:Th:CandidateK_sValuesStep1}, we get 
\begin{align}
    2\tilde{\gamma}^2-\hat{\gamma}^2 \geq \hat{\gamma}_*^2 
    &\iff 
    2\tilde{\gamma}^2-2\hat{\gamma}_*^2 -\hat{\gamma}^2 \geq -\hat{\gamma}_*^2 \nonumber\\
    &\iff \hat{\gamma}^2 - 2(\tilde{\gamma}^2-\hat{\gamma}_*^2) \leq \hat{\gamma}_*^2 \nonumber\\
    &\iff \hat{\gamma}^2 - 2(\tilde{\gamma}^2-\hat{\gamma}_*^2) \leq \hat{\gamma}_*^2 \leq \gamma_*^2 
    \ \ \mbox{(Lm. \ref{Lm:Bounds})}\nonumber 
\end{align}
Comparing this result with \eqref{Eq:BoundsForOldEstimates} leads to the conclusion that the estimate $\hat{\gamma}_*^2$ (evaluated using \eqref{Eq:Th:EstimatedOptimalSystemIndices} under \eqref{Eq:CandidateK_sValues}) will fall within the bound given in \eqref{Eq:BoundsForOldEstimates} (for the estimate \eqref{Eq:EstimatedOptimalSystemIndicesOld} \cite{Zakeri2019})
\end{proof}

\section{Generalization for Discrete-Time Systems}\label{Sec:GeneralizationToDiscreteTimeSystems}

So far, we have focused only on continuous-time non-linear dynamical systems \eqref{Eq:NonlinearSystem}. In contrast, this section will focus on discrete-time non-linear dynamical systems and briefly specialize the previously presented results.

\paragraph{\textbf{Discrete-Time System}}  
Consider the discrete-time non-linear system: 
\begin{equation}\label{Eq:NonlinearSystemDisc}
    \mathcal{H}:
    \begin{cases}
    x(t+1) &= f(x(t),u(t)),\\
    y(t) &= h(x(t),u(t)),
    \end{cases}
\end{equation}
where $t\in\Z_{\geq 0}$. It is easy to see that As. \ref{As:Dynamics}, Defs. \ref{Def:SupplyRate}-\ref{Def:OFPIndex}, Rm. \ref{Rm:ActualSystemIndices}, Lm. \ref{Lm:OptimalSystemIndices}, \eqref{Eq:EstimatedOptimalSystemIndicesOld} and Lm. \ref{Lm:EstimatedOptimalSystemIndices} are also valid with regard to the discrete-time system \eqref{Eq:NonlinearSystemDisc} upon some obvious minor modifications (e.g., replacing continuous integrals with discrete summations). Hence, we omit re-stating them here in the interest of brevity. Note however that As. \ref{As:LipschitzStorageFunction} now requires a special modification, and thus, we re-state it as follows.  
\begin{assumption}
The magnitude change in the storage function $S:X\rightarrow \R_{\geq 0}$ over a single time step is bounded such that $\vert S(x(t+1))-S(x(t)) \vert \leq K_s$ over the discrete-time period $t\in [t_o,t_f)$, where $K_s$ is a known constant.
\end{assumption}

\paragraph{\textbf{Fractional Function Optimization}}
Considering the discrete-time case, we will now specialize some key results established in Sec. \ref{Sec:FractionalFunctionOptimization}. First, let us define the discrete-time version of the generic fractional function form given in \eqref{Eq:FractionalFunctionsOfInterest}. Let $t_o,t_f\in\Z_{\geq 0}$ where $t_o<t_f$ be two discrete-time instants and $p(t)$,$q(t)$ where $p:[t_o,t_f) \rightarrow \R$, $q:[t_o,t_f)\rightarrow \R_{\geq 0}$ be two discrete-time signals. Moreover, let $r(\tau)$, $s(\tau)$ where $r:\Z_{>0}\rightarrow \R$, $s:\Z_{>0}\rightarrow \R_{\geq 0}$ be two discrete linear mappings.

Now, for any discrete-time interval $[t_0,t_1) \subseteq [t_o,t_f)$, let us define a corresponding \emph{fractional function} of the form:
\begin{equation}\label{Eq:GenericRationalFunctionDisc}
    \psi(t_0,t_1) \triangleq
    \frac{\sum_{t\in [t_0,t_1)} p(t) + r(t_1-t_0)}{\sum_{t\in[t_0,t_1)} q(t) + s(t_1-t_0)}.
\end{equation}
Note that As. \ref{As:NoDeadZoneCont}, Lm. \ref{Lm:RationalFunctionPropertyCont} and the set $\Psi$ \eqref{Eq:GenericRationalFunctionContSpace} are also valid for any fractional function $\psi(t_0,t_1)$ of the form \eqref{Eq:GenericRationalFunctionDisc} as well. Thus, parallel to Th. \ref{Th:RationalFunctionPropertyCont}, the following theorem can be proved.

\begin{table*}[!h]
\hrule 
\begin{align}\label{Eq:Co:RationalFunctionPropertyMaxDisc}
    &\Big\{ 
    \underset{(t_0,t_1) \in \Psi \backslash \Psi_q }{\max}\ 
    \psi(t_0,t_1)  
    \Big\} 
    \equiv\ 
    \max 
    \Big\{
    \big\{
    \underset{t \in [t_o,t_f)\backslash \tilde{\Psi}_q}{\max}\ 
    \psi(t,t+1) 
    \big\},\ 
    \big\{
    \underset{\substack{t\in [t_{zo},t_{zf})\\(t_{zo},t_{zf})\in\bar{\Psi}_q}}{\max}\ 
    \psi(t_{zo}-1,t)
    \big\},\ 
    \big\{
    \underset{\substack{t\in [t_{zo},t_{zf})\\(t_{zo},t_{zf})\in\bar{\Psi}_q}}{\max}\ 
    \psi(t,t_{zf}+1) 
    \big\}
    \Big\},
    \\
    \label{Eq:Co:RationalFunctionPropertyMinDisc}
    &\Big\{ 
    \underset{(t_0,t_1)\in\Psi\backslash \Psi_q}{\min}\ 
    \psi(t_0,t_1)  
    \Big\} 
    \equiv\ 
    \min 
    \Big\{
    \big\{
    \underset{t \in [t_o,t_f)\backslash \tilde{\Psi}_q}{\min}\ 
    \psi(t,t+1) 
    \big\},\ 
    \big\{
    \underset{\substack{t\in [t_{zo},t_{zf})\\(t_{zo},t_{zf})\in\bar{\Psi}_q}}{\min}\ 
    \psi(t_{zo}-1,t)
    \big\},\  
    \big\{
    \underset{\substack{t\in [t_{zo},t_{zf})\\(t_{zo},t_{zf})\in\bar{\Psi}_q}}{\min}\ 
    \psi(t,t_{zf}+1)
    \big\}
    \Big\},
    \\ \label{Eq:Co:RationalFunctionPropertyObjectives1Disc}
    &\psi(t_0,t_1) 
    \triangleq\  \frac{\sum_{t\in[t_0,t_1)} p(t) + r(t_1-t_0)}{\sum_{t\in[t_0,t_1)} q(t)}, 
    \ \ \ \mbox{(i.e., \eqref{Eq:GenericRationalFunctionDisc} under As. \ref{As:NoLinearMappingInDenominator})\ \ \  and }\ \ \ 
    \psi(t,t+1) 
    = \frac{p(t)+r(1)}{q(t)},\\ \label{Eq:Co:RationalFunctionPropertyObjectives2Disc}
    &\psi(t_{zo}-1,t) 
    =   
    \frac{\sum_{\tau \in[t_{zo}-1,t)} p(\tau) + r(t-t_{zo}+1)}{q(t_{zo}-1)}, \ \ \ 
    \mbox{ and }\ \ \ 
    \psi(t,t_{zf}+1) 
    =
    \frac{\sum_{\tau \in[t,t_{zf}+1)} p(\tau) + r(t_{zf}+1-t)}{q(t_{zf})}.
\end{align}
\hrule
\end{table*}

\begin{theorem}\label{Th:RationalFunctionPropertyDisc}
Any fractional function $\psi(t_0,t_1)$ of the form \eqref{Eq:GenericRationalFunctionDisc} under As. \ref{As:NoDeadZoneCont} can be efficiently optimized over the corresponding space $(t_0,t_1)\in\Psi$ exploiting the relationships:
\begin{equation}\label{Eq:Th:RationalFunctionPropertyMaxDisc}
    \Big\{ 
    \underset{(t_0,t_1)\in\Psi}{\max}\ 
    \psi(t_0,t_1)  
    \Big\} 
    \equiv
    \Big\{
    \underset{t\in [t_o,t_f)}{\max}\ 
    \psi(t,t+1) 
    \Big\},
\end{equation}
\begin{equation}\label{Eq:Th:RationalFunctionPropertyMinDisc}
    \Big\{ 
    \underset{(t_0,t_1)\in\Psi}{\min}\ 
    \psi(t_0,t_1) 
    \Big\} 
    \equiv
    \Big\{
    \underset{t\in [t_o,t_f)}{\min}\ 
    \psi(t,t+1) 
    \Big\},
\end{equation}
where 
\begin{equation*}
    \psi(t,t+1) = \frac{p(t)+r(1)}{q(t)+s(1)}.
\end{equation*}
\end{theorem}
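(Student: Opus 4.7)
The plan is to mirror the proof of Theorem \ref{Th:RationalFunctionPropertyCont}, replacing ``infinitesimally close continuous times'' by ``adjacent discrete times'' and bypassing the L'Hospital argument entirely, since the discrete evaluation $\psi(t,t+1)$ is already a closed-form expression. Since the excerpt observes that Assumption \ref{As:NoDeadZoneCont}, Lemma \ref{Lm:RationalFunctionPropertyCont}, and the space $\Psi$ defined in \eqref{Eq:GenericRationalFunctionContSpace} all carry over verbatim to the discrete-time setting, the underlying ``splitting'' inequality is already at our disposal.

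First I would prove the equivalence \eqref{Eq:Th:RationalFunctionPropertyMaxDisc} by contradiction. Let $(t_0^*,t_1^*)$ denote an optimal pair for the left-hand side, and suppose for contradiction that $t_1^* - t_0^* \geq 2$. Then there exists an integer $t_m \in \Z_{>0}$ with $t_0^* < t_m < t_1^*$, so the discrete version of Lemma \ref{Lm:RationalFunctionPropertyCont} applies and yields
\begin{equation*}
\psi(t_0^*,t_m) \lesseqqgtr \psi(t_0^*,t_1^*) \lesseqqgtr \psi(t_m,t_1^*).
\end{equation*}
In particular, $\psi(t_0^*,t_1^*) \leq \max\{\psi(t_0^*,t_m), \psi(t_m,t_1^*)\}$, so one of the sub-intervals $(t_0^*,t_m)$ or $(t_m,t_1^*)$, both of which are in $\Psi$, achieves an objective value no smaller than $\psi(t_0^*,t_1^*)$. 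Iterating this splitting (or, equivalently, invoking a minimal-length optimal pair by well-ordering of $\Z_{>0}$), I obtain an optimal pair with $t_1^* = t_0^* + 1$, which shows that the outer maximum in \eqref{Eq:Th:RationalFunctionPropertyMaxDisc} is attained by a unit-length interval, proving the equivalence.

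Next, I would evaluate $\psi(t,t+1)$ directly from \eqref{Eq:GenericRationalFunctionDisc}. The discrete sums reduce to single terms, $\sum_{\tau\in[t,t+1)} p(\tau) = p(t)$ and $\sum_{\tau\in[t,t+1)} q(\tau) = q(t)$, and the linear-mapping contributions become $r(1)$ and $s(1)$, giving
\begin{equation*}
\psi(t,t+1) = \frac{p(t) + r(1)}{q(t) + s(1)},
\end{equation*}
which matches the stated formula. Under Assumption \ref{As:NoDeadZoneCont} the denominator is strictly positive, so the expression is well-defined. The proof of the minimization equivalence \eqref{Eq:Th:RationalFunctionPropertyMinDisc} is identical, with inequalities reversed.

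The main obstacle I anticipate is essentially cosmetic: ensuring that the ``splitting'' Lemma \ref{Lm:RationalFunctionPropertyCont} (originally stated in the continuous setting) is genuinely valid in the discrete setting under \eqref{Eq:GenericRationalFunctionDisc}. This reduces to verifying that with $a \triangleq \sum_{t\in[t_0,t_m)} p(t) + r(t_m-t_0)$, $b \triangleq \sum_{t\in[t_0,t_m)} q(t) + s(t_m-t_0)$, $c \triangleq \sum_{t\in[t_m,t_1)} p(t) + r(t_1-t_m)$, $d \triangleq \sum_{t\in[t_m,t_1)} q(t) + s(t_1-t_m)$, we have $a,c \in \R$ and $b,d \in \R_{>0}$ (the latter by Assumption \ref{As:NoDeadZoneCont}), and additivity $\sum_{t\in[t_0,t_1)} = \sum_{t\in[t_0,t_m)} + \sum_{t\in[t_m,t_1)}$ together with linearity $r(t_1-t_0) = r(t_m-t_0) + r(t_1-t_m)$ and $s(t_1-t_0) = s(t_m-t_0) + s(t_1-t_m)$. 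Lemma \ref{Lm:SimpleAlgebraicResult} then furnishes the required inequalities, completing the proof.
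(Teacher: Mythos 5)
Your proposal is correct and follows essentially the route the paper intends: the paper omits this proof, stating only that it "follows the same steps" as Theorem \ref{Th:RationalFunctionPropertyCont}, and your argument is precisely that adaptation --- the discrete splitting inequality via Lemma \ref{Lm:SimpleAlgebraicResult} (using additivity of the sums and linearity of $r,s$), a minimal-length/well-ordering argument in place of "infinitesimally close" endpoints, and direct evaluation of $\psi(t,t+1)$ with no need for L'Hospital. Your verification that Assumption \ref{As:NoDeadZoneCont} keeps the denominators $b,d$ strictly positive is exactly the point that makes Lemma \ref{Lm:RationalFunctionPropertyCont} carry over, so nothing is missing.
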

\begin{proof}
The proof follows the same steps as that of Th. \ref{Th:RationalFunctionPropertyCont} and is, therefore, omitted.
\end{proof}

Note that unlike in Th. \ref{Th:RationalFunctionPropertyCont}, Th. \ref{Th:RationalFunctionPropertyDisc} does not involve $\dot{r}(0)$ and $\dot{s}(0)$ values and limits $\lim_{\Delta\rightarrow 0}\psi(t,t+\Delta)$. Instead, it now involves $r(1)$ and $s(1)$ values and unit steps $\psi(t,t+1)$.

We next relax As. \ref{As:NoDeadZoneCont} in Th. \ref{Th:RationalFunctionPropertyDisc} by following the same steps as before, i.e., by making a simplifying assumption (As. \ref{As:NoLinearMappingInDenominator}) and defining sets $\Psi_q$, $\bar{\Psi}_q$ and $\tilde{\Psi}_q$ appropriately. Note that As. \ref{As:NoLinearMappingInDenominator} and the sets $\Psi_q$, $\bar{\Psi}_q$ (with $\epsilon=1$) and $\tilde{\Psi}_q$ defined earlier are also valid for any fractional function $\psi(t_0,t_1)$ of the form \eqref{Eq:GenericRationalFunctionDisc} as well. Therefore, parallel to Th. \ref{Th:Th:RationalFunctionPropertyCont}, the following theorem now can be established.

\begin{theorem}\label{Th:Th:RationalFunctionPropertyDisc}
Any fractional function $\psi(t_0,t_1)$ of the form \eqref{Eq:GenericRationalFunctionDisc} under As. \ref{As:NoLinearMappingInDenominator} can be efficiently optimized over the corresponding space $(t_0,t_1)\in\Psi\backslash \Psi_q$ exploiting the relationships: 
\eqref{Eq:Co:RationalFunctionPropertyMaxDisc} and \eqref{Eq:Co:RationalFunctionPropertyMinDisc} (under the objective functions in  \eqref{Eq:Co:RationalFunctionPropertyObjectives1Disc} and \eqref{Eq:Co:RationalFunctionPropertyObjectives2Disc}). 
\end{theorem}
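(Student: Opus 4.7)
The plan is to mirror the proof of Th. \ref{Th:Th:RationalFunctionPropertyCont} with appropriate discrete-time adaptations. The core idea carries over unchanged: we identify the minimal non-sub-dead-zone subintervals of $[t_o,t_f)$, since Lm. \ref{Lm:RationalFunctionPropertyCont} (which continues to hold for the discrete fractional function \eqref{Eq:GenericRationalFunctionDisc} because its proof only relied on the algebraic identity in Lm. \ref{Lm:SimpleAlgebraicResult} together with the additivity of sums/integrals over sub-intervals) guarantees that any further-breakable interval is dominated above by one of its parts and dominated below by the other. Consequently the optimum is attained on an interval that cannot be split into two non-sub-dead-zone parts.

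First, I would catalogue the three families of such minimal intervals in the discrete setting. In contrast to the continuous case (where minimal width is an infinitesimal $\epsilon$), the minimal width here is $1$, so the three families become: (i) $[t,t+1)$ for each $t\in[t_o,t_f)\setminus \tilde{\Psi}_q$; (ii) $[t_{zo}-1,t)$ for $t\in[t_{zo},t_{zf})$ and $(t_{zo},t_{zf})\in\bar{\Psi}_q$, where the single index $t_{zo}-1$ just outside the dead-zone contributes a strictly positive $q(t_{zo}-1)$ to the denominator; and (iii) symmetrically $[t,t_{zf}+1)$, where the index $t_{zf}$ contributes $q(t_{zf})>0$. Any $(t_0,t_1)\in \Psi\setminus\Psi_q$ can be cut into pieces each of which either is one of these three minimal types or else can be further split, and Lm. \ref{Lm:RationalFunctionPropertyCont} then implies that the extreme value over $\Psi\setminus\Psi_q$ equals the corresponding extreme value over the union of these three families, yielding \eqref{Eq:Co:RationalFunctionPropertyMaxDisc} and \eqref{Eq:Co:RationalFunctionPropertyMinDisc}.

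Next I would verify the closed-form objectives in \eqref{Eq:Co:RationalFunctionPropertyObjectives1Disc} and \eqref{Eq:Co:RationalFunctionPropertyObjectives2Disc}. Under As. \ref{As:NoLinearMappingInDenominator} the denominator reduces to $\sum_{t\in[t_0,t_1)}q(t)$, and a one-step interval $[t,t+1)$ collapses the sums to $p(t)$ and $q(t)$, giving the stated $\psi(t,t+1)=(p(t)+r(1))/q(t)$. For family (ii), the denominator $\sum_{\tau\in[t_{zo}-1,t)} q(\tau)$ equals $q(t_{zo}-1)$ because $q(\tau)=0$ for all $\tau\in[t_{zo},t_{zf})\supseteq[t_{zo},t)$; family (iii) is analogous with $q(t_{zf})$. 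No L'Hospital-type limit argument is required here (this is the main simplification over the continuous proof), because $q(t_{zo}-1)$ and $q(t_{zf})$ are strictly positive finite quantities, so no $\pm\infty$ branches arise in \eqref{Eq:Co:RationalFunctionPropertyObjectives2Disc} — this is why the discrete objective has a single expression rather than a two-branch case analysis.

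The chief subtlety, and the step I would be most careful about, is the boundary case where a dead-zone abuts $t_o$ or $t_f$: if $t_{zo}=t_o$ then the index $t_{zo}-1$ does not belong to $[t_o,t_f)$ and family (ii) must be omitted for that dead-zone, and similarly for family (iii) if $t_{zf}=t_f$. This does not affect the validity of \eqref{Eq:Co:RationalFunctionPropertyMaxDisc}--\eqref{Eq:Co:RationalFunctionPropertyMinDisc} because the indexing sets $\bar{\Psi}_q$ and the corresponding ranges of $t$ restrict us to admissible intervals by construction, but it is the one place where the direct ``$\epsilon\to 0$'' analogy with the continuous case needs a discrete‐specific sanity check. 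With that noted, invoking Lm. \ref{Lm:RationalFunctionPropertyCont} exhaustively over the partition just described completes the argument.
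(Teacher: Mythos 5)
Your proof is correct and takes essentially the same approach as the paper: the paper's own proof simply states that it follows the same steps as Th.~\ref{Th:Th:RationalFunctionPropertyCont} and omits the details, and your argument is exactly that adaptation — the three families of minimal non-sub-dead-zone intervals, sub-optimality of splittable intervals via Lm.~\ref{Lm:RationalFunctionPropertyCont} (which, as you note, carries over since it rests only on Lm.~\ref{Lm:SimpleAlgebraicResult}), and direct evaluation of the unit-step and dead-zone-boundary objectives. Your added observations (no L'Hospital-type limit is needed because $q(t_{zo}-1)$ and $q(t_{zf})$ are strictly positive, and dead-zones abutting $t_o$ or $t_f$ are excluded by the construction of $\bar{\Psi}_q$) are consistent with, and slightly more careful than, the paper's sketch.
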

\begin{proof}
The proof follows the same steps as that of Th. \ref{Th:Th:RationalFunctionPropertyCont} and is, therefore, omitted.
\end{proof}

In parallel to Co. \ref{Co:Th:RationalFunctionPropertyCont2}, the following corollary shows that under some special conditions, when optimizing a fractional function of the form \eqref{Eq:GenericRationalFunctionDisc}, the dead-zones can be conveniently disregarded. Before proceeding, we point out that As. \ref{As:NumeratorDenominatorConnection} is also valid (upon some obvious minor modifications) for any fractional function of the form \eqref{Eq:GenericRationalFunctionDisc} as well. 

\begin{corollary}\label{Co:Th:RationalFunctionPropertyDisc2}
Any fractional function $\psi(t_0,t_1)$ of the form \eqref{Eq:GenericRationalFunctionDisc} under As. \ref{As:NoLinearMappingInDenominator} and As. \ref{As:NumeratorDenominatorConnection} can be efficiently optimized over the corresponding space $(t_0,t_1)\in\Psi\backslash\Psi_q$ exploiting the relationships: 
(i) If $r(\tau) \leq 0, \forall \tau \in\Z_{>0}$: 
\begin{equation}\label{Eq:Co:Th:RationalFunctionPropertyDisc2Max}
    \Big\{ 
    \underset{(t_0,t_1) \in \Psi\backslash\Psi_q}{\max}\ 
    \psi(t_0,t_1) 
    \Big\} 
    \equiv
    \Big\{
    \underset{t \in [t_o,t_f)\backslash \tilde{\Psi}_q}{\max}\ 
    \psi(t,t+1) 
    \Big\},
\end{equation}
(ii) If $r(\tau) \geq 0, \forall \tau \in\Z_{>0}$:
\begin{equation}\label{Eq:Co:Th:RationalFunctionPropertyDisc2Min}
    \Big\{ 
    \underset{(t_0,t_1) \in \Psi\backslash\Psi_q}{\min}\ 
    \psi(t_0,t_1) 
    \Big\} 
    \equiv
    \Big\{
    \underset{t \in [t_o,t_f)\backslash \tilde{\Psi}_q}{\min}\ 
    \psi(t,t+1) 
    \Big\},
\end{equation}
where 
\begin{equation*}
    \psi(t,t+1) = \frac{p(t)+r(1)}{q(t)}.
\end{equation*}
\end{corollary}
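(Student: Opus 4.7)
The plan is to derive Corollary \ref{Co:Th:RationalFunctionPropertyDisc2} as a direct specialization of Theorem \ref{Th:Th:RationalFunctionPropertyDisc}, mirroring how Corollary \ref{Co:Th:RationalFunctionPropertyCont2} was deduced from Theorem \ref{Th:Th:RationalFunctionPropertyCont} in the continuous-time setting. The statement of Theorem \ref{Th:Th:RationalFunctionPropertyDisc} already exposes the three inner optimization subproblems in \eqref{Eq:Co:RationalFunctionPropertyMaxDisc} and \eqref{Eq:Co:RationalFunctionPropertyMinDisc}, so the task reduces to showing that, under the additional Assumption \ref{As:NumeratorDenominatorConnection} and the sign hypothesis on $r$, the second and third inner subproblems are dominated by (and hence redundant with respect to) the first one.

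First, I would invoke Theorem \ref{Th:Th:RationalFunctionPropertyDisc} under Assumption \ref{As:NoLinearMappingInDenominator} to put the fractional function in the form of \eqref{Eq:Co:RationalFunctionPropertyObjectives1Disc}, in which the unit-step value collapses to $\psi(t,t+1)=(p(t)+r(1))/q(t)$, matching the corollary's claim. Next, I would apply Assumption \ref{As:NumeratorDenominatorConnection} inside every dead-zone interval $[t_{zo},t_{zf})\in\bar{\Psi}_q$: since $q(t)=0$ forces $p(t)=0$ there, the summations in \eqref{Eq:Co:RationalFunctionPropertyObjectives2Disc} collapse to the contribution of the single surviving non-dead-zone endpoint, yielding
\[
\psi(t_{zo}-1,t)=\frac{p(t_{zo}-1)+r(t-t_{zo}+1)}{q(t_{zo}-1)}, \qquad \psi(t,t_{zf}+1)=\frac{p(t_{zf})+r(t_{zf}+1-t)}{q(t_{zf})}.
\]

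Then, using the linearity of $r$ (so $r(\tau)=r(1)\tau$) together with the sign hypothesis, I would argue case by case. In case (i), with $r(\tau)\leq 0$ and maximization, $r(t-t_{zo}+1)$ is maximized over $t\in[t_{zo},t_{zf})$ at $t=t_{zo}$, producing the unit step $\psi(t_{zo}-1,t_{zo})$ evaluated at the non-dead-zone time $t_{zo}-1\in[t_o,t_f)\setminus\tilde{\Psi}_q$, which already appears in the first inner max; an analogous monotonicity argument on $r(t_{zf}+1-t)$ shows that the third subproblem's maximum is bounded above by the unit step $\psi(t_{zf},t_{zf}+1)$, which is also already in the first inner max. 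The symmetric case (ii) with $r(\tau)\geq 0$ and minimization is handled identically with the inequalities flipped. Consequently, the second and third subproblems in \eqref{Eq:Co:RationalFunctionPropertyMaxDisc}/\eqref{Eq:Co:RationalFunctionPropertyMinDisc} are redundant, leaving only the first inner problem and yielding \eqref{Eq:Co:Th:RationalFunctionPropertyDisc2Max} and \eqref{Eq:Co:Th:RationalFunctionPropertyDisc2Min}.

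The main obstacle I anticipate is the careful bookkeeping around the dead-zone boundaries: one must verify that $t_{zo}-1$ and $t_{zf}$ lie in $[t_o,t_f)\setminus\tilde{\Psi}_q$ (so that the dominating unit steps are indeed feasible for the first inner problem), and that the linearity and sign conditions on $r$ furnish the correct monotonicity of $r(t-t_{zo}+1)$ and $r(t_{zf}+1-t)$ in $t$ to justify the extremal substitutions above. Once these routine verifications are in place, the corollary follows immediately from Theorem \ref{Th:Th:RationalFunctionPropertyDisc}.
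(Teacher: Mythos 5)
Your proposal is correct and follows essentially the same route as the paper: the paper's proof of Co.~\ref{Co:Th:RationalFunctionPropertyDisc2} is a one-line appeal to Th.~\ref{Th:Th:RationalFunctionPropertyDisc} together with As.~\ref{As:NumeratorDenominatorConnection} and the sign condition on $r$, which is exactly your specialization argument. Your write-up merely makes explicit what the paper leaves implicit, namely that As.~\ref{As:NumeratorDenominatorConnection} collapses the dead-zone objectives in \eqref{Eq:Co:RationalFunctionPropertyObjectives2Disc} to $\bigl(p(t_{zo}-1)+r(t-t_{zo}+1)\bigr)/q(t_{zo}-1)$ and $\bigl(p(t_{zf})+r(t_{zf}+1-t)\bigr)/q(t_{zf})$, and that linearity of $r$ with the assumed sign makes these dominated by the unit-step terms at $t_{zo}-1$ and $t_{zf}$, which is a valid and welcome elaboration.
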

\begin{proof}
These results directly follow from applying the assumed conditions in As. \ref{As:NumeratorDenominatorConnection} and Co. \ref{Co:Th:RationalFunctionPropertyDisc2} in Th. \ref{Th:Th:RationalFunctionPropertyDisc}.
\end{proof}

\paragraph{\textbf{Application to Solve the Discrete Version of \eqref{Eq:Lm:EstimatedOptimalSystemIndices}}}
For the discrete-time system \eqref{Eq:NonlinearSystemDisc}, recall that we had to specialize several theoretical results such as Th. \ref{Th:RationalFunctionPropertyCont}, Th. \ref{Th:Th:RationalFunctionPropertyCont} and Co. \ref{Co:Th:RationalFunctionPropertyCont2} as Th. \ref{Th:RationalFunctionPropertyDisc}, Th. \ref{Th:Th:RationalFunctionPropertyDisc} and Co. \ref{Co:Th:RationalFunctionPropertyDisc2}, respectively. However, when it comes to the application, i.e., solving the optimization problems in \eqref{Eq:Lm:EstimatedOptimalSystemIndices} considering the discrete-time system \eqref{Eq:NonlinearSystemDisc}, it is easy to see that Th. \ref{Th:EstimatedOptimalSystemIndices} can still be used (i.e., valid - in particular, in the light of Co. \ref{Co:Th:RationalFunctionPropertyDisc2}).

In essence, Th. \ref{Th:EstimatedOptimalSystemIndices} can be used to estimate the optimal system indices of both continuous-time \eqref{Eq:NonlinearSystem} as well as discrete-time \eqref{Eq:NonlinearSystemDisc} systems. Moreover, the same can be said regarding the Remarks \ref{Rm:Efficiency}-\ref{Rm:Realizations} and the theoretical results reported in Lm. \ref{Lm:Bounds} and Th. \ref{Th:EstimatedOptimalSystemIndices}. Therefore, we omit re-stating these results here in the interest of brevity.


\section{Numerical Results}
\label{Sec:NumericalResults}

In our numerical experiments, we considered two linear systems $\mathcal{H}_1, \mathcal{H}_2$ and two non-linear systems $\mathcal{H}_3, \mathcal{H}_4$: 
\begin{equation}\label{Eq:ExampledSystems}
\begin{aligned}
    \mathcal{H}_1:&
    \begin{cases}
    \dot{x} = 
    \begin{bmatrix}
    -0.261 & 1.027 & -0.074\\
   -0.8786 & -0.184 & -0.644\\
    0.537 &  0.364 & -0.576
    \end{bmatrix}x + 
    \begin{bmatrix}
    0 \\ 0.936 \\ 0
    \end{bmatrix}u, \\
    y = 
    \begin{bmatrix}
     -3.020 & -1.103 & -1.032
    \end{bmatrix} x, 
    \end{cases}\\
    \mathcal{H}_2:&
    \begin{cases}
    \dot{x} = 
    \begin{bmatrix}
    -0.681  & 0.495  &-0.651 \\
     0.107  &-0.815  &-0.686 \\
     0.811  & 0.487  &-0.466
    \end{bmatrix}x + 
    \begin{bmatrix}
    0.133\\
    0\\
    -1.399
    \end{bmatrix}u, \\
    y = 
    \begin{bmatrix}
    0 & 0 & 1.025
    \end{bmatrix} x +
    \begin{bmatrix}
    -0.380
    \end{bmatrix}u,
    \end{cases}\\
    \mathcal{H}_3:&
    \begin{cases}
    \dot{x} = 
    -x^3 -x -u, \\
    y = 2x + u,
    \end{cases} \\
    \mathcal{H}_4:& 
    \begin{cases}
    \dot{x} = 
    -x^3 -\frac{1}{3}x -\frac{2}{3} u, \\
    y = \frac{2}{3}x + \frac{4}{3}u.
    \end{cases}  
\end{aligned}
\end{equation}
The corresponding optimal system indices \eqref{Eq:Lm:OptimalSystemIndices} (evaluated analytically in off-line using MATLAB\tsup{\textregistered} and \cite{Xia2018}) are given in Tab. \ref{Tab:NumericalResults} (Col. 3). 
In Tab. \ref{Tab:NumericalResults}, note that we have omitted system index types with infinite or unknown (not established theoretically) optimal index values (e.g., optimal OFP index of $\mathcal{H}_1$ is $-\infty$ and optimal IFP index of $\mathcal{H}_3$ is unknown).

\begin{table*}[!h]
\caption{The summary of the observed numerical results.}
\label{Tab:NumericalResults}
\resizebox{\textwidth}{!}{
\begin{tabular}{|c|l|r|r|rr|rr|r|rr|r|}
\hline
\multicolumn{1}{|c|}{\multirow{2}{*}{System}} &
  \multicolumn{1}{c|}{\multirow{2}{*}{\begin{tabular}[c]{@{}c@{}}System \\ Index\end{tabular}}} &
  \multicolumn{1}{c|}{\multirow{2}{*}{\begin{tabular}[c]{@{}c@{}}Optimal \\ Index Val.\end{tabular}}} &
  \multicolumn{1}{c|}{\multirow{2}{*}{\begin{tabular}[c]{@{}l@{}}Learned \\ $K_s$ Value \end{tabular}}} &
  \multicolumn{2}{c|}{Estimate At $t=100$} &
  \multicolumn{2}{c|}{Abs. Est. Error} &
  \multicolumn{1}{c|}{\multirow{2}{*}{\begin{tabular}[c]{@{}c@{}}\textbf{$\%$ AEE} \\ \textbf{Improvement}\end{tabular}}} &
  \multicolumn{2}{c|}{Mean Abs. Est. Error} &
  \multicolumn{1}{c|}{\multirow{2}{*}{\begin{tabular}[c]{@{}c@{}}\textbf{$\%$ MAEE}  \\ \textbf{Improvement}\end{tabular}}} \\ \cline{5-8} \cline{10-11}
\multicolumn{1}{|c|}{} &
  \multicolumn{1}{c|}{} &
  \multicolumn{1}{c|}{} &
  \multicolumn{1}{c|}{} &
  \multicolumn{1}{c|}{AVG \cite{Zakeri2019}} &
  \multicolumn{1}{c|}{FFO} &
  \multicolumn{1}{c|}{AVG \cite{Zakeri2019}} &
  \multicolumn{1}{c|}{FFO} &
  \multicolumn{1}{c|}{} &
  \multicolumn{1}{c|}{AVG \cite{Zakeri2019}} &
  \multicolumn{1}{c|}{FFO} &
  \multicolumn{1}{c|}{} \\ \hline
\multirow{2}{*}{$\mathcal{H}_1$} &
  L2G &
  17.575 &
  2165.0 &
  \multicolumn{1}{r|}{7.933} &
  16.931 &
  \multicolumn{1}{r|}{9.642} &
  0.644 &
  \textbf{93.32} &
  \multicolumn{1}{r|}{9.344} &
  1.003 &
  \textbf{89.27} \\ \cline{2-12} 
 &
  IFP &
  -8.067 &
  322.5 &
  \multicolumn{1}{r|}{-2.629} &
  -3.850 &
  \multicolumn{1}{r|}{5.438} &
  4.217 &
  \textbf{22.45} &
  \multicolumn{1}{r|}{5.455} &
  4.274 &
  \textbf{21.65} \\ \hline
\multirow{2}{*}{$\mathcal{H}_2$} &
  IFP &
  -2.017 &
  35.06 &
  \multicolumn{1}{r|}{-0.955} &
  -1.276 &
  \multicolumn{1}{r|}{1.062} &
  0.741 &
  \textbf{30.23} &
  \multicolumn{1}{r|}{1.040} &
  0.748 &
  \textbf{28.08} \\ \cline{2-12} 
 &
  OFP &
  -2.630 &
  6.634 &
  \multicolumn{1}{r|}{-0.983} &
  -2.231 &
  \multicolumn{1}{r|}{1.647} &
  0.399 &
  \textbf{75.77} &
  \multicolumn{1}{r|}{1.672} &
  0.399 &
  \textbf{76.14} \\ \hline
$\mathcal{H}_3$ &
  L2G &
  1.000 &
  12.37 &
  \multicolumn{1}{r|}{0.249} &
  0.627 &
  \multicolumn{1}{r|}{0.751} &
  0.373 &
  \textbf{50.33} &
  \multicolumn{1}{r|}{0.742} &
  0.373 &
  \textbf{49.73} \\ \hline
$\mathcal{H}_4$ &
  OFP &
  0.750 &
  2.876 &
  \multicolumn{1}{r|}{0.850} &
  0.799 &
  \multicolumn{1}{r|}{0.100} &
  0.049 &
  \textbf{50.95} &
  \multicolumn{1}{r|}{0.098} &
  0.049 &
  \textbf{50.20} \\ \hline
\end{tabular}}
\end{table*}

To synthetically generate input-output data (on-line), we used Simulink\tsup{\textregistered}. In particular, we set each system initial state as $x(0)=\0$ and generated the output $y(t)$ for the input $u(t)$:
\begin{equation}\label{Eq:InputSignal}
    u(t) = a + cos(bt) + pul(2t) + 0.01v(t), \ \forall t\in[0,100),  
\end{equation}
where $v(t)$ represents a normal random process and $pul(t)$ represents a pulse wave \cite{Oppenheim1997} with $50\%$ duty-cycle, unit period and unit amplitude. For the considered four systems in \eqref{Eq:ExampledSystems}, the parameters $a$ and $b$ in \eqref{Eq:InputSignal} were selected as $16.71,9.71,4.71,4.71$ and $1.02,0.96,0.1,0.1$, respectively. The observed input-output profiles are shown in Fig. \ref{Fig:InputOutputProfiles}.

\begin{figure}[!h]
    \centering
    \begin{subfigure}[t]{0.23\textwidth}
        \centering
        \captionsetup{justification=centering}
        \includegraphics[width=1.6in]{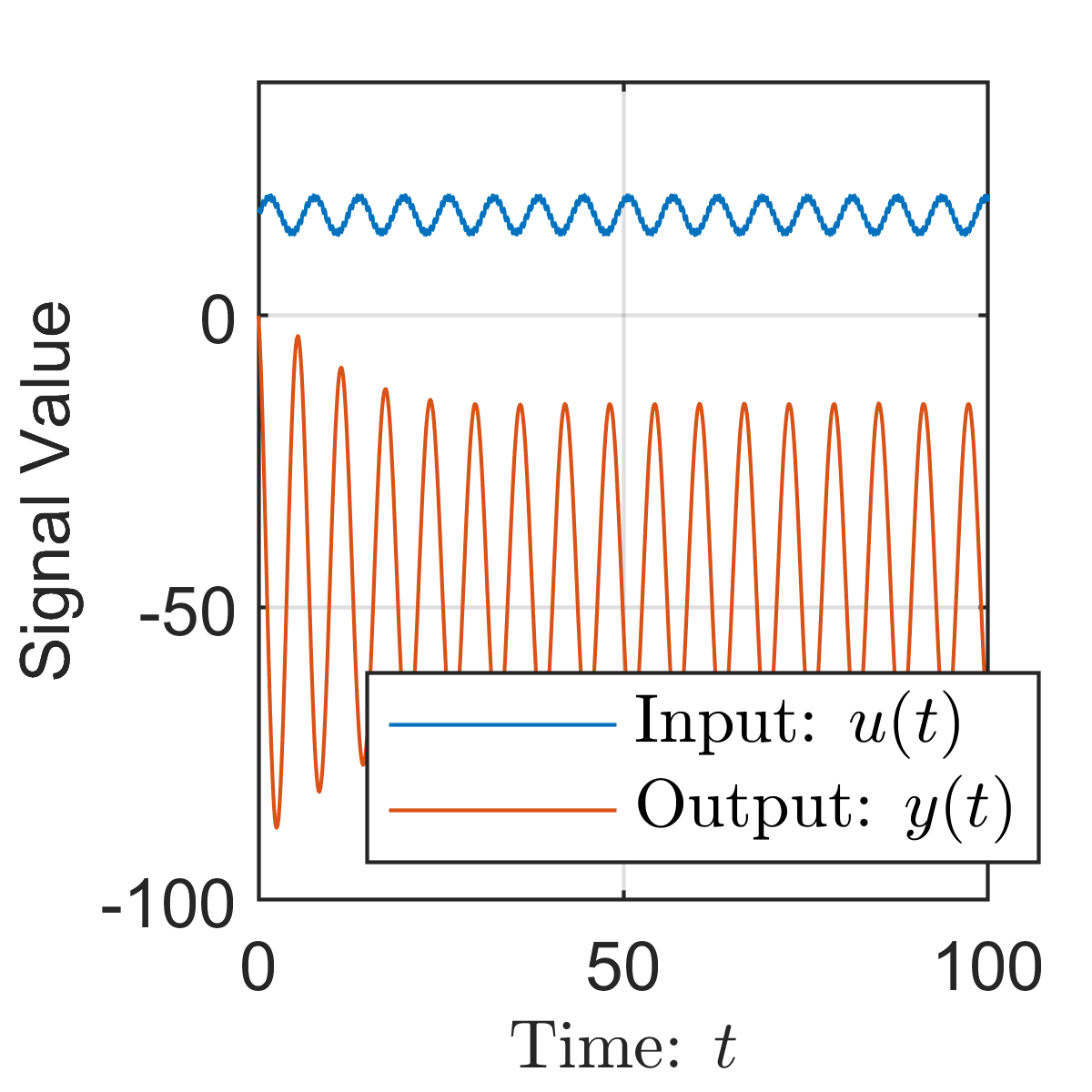}
        \caption{System: $\mathcal{H}_1$}
    \end{subfigure}
    \hfill
    \begin{subfigure}[t]{0.23\textwidth}
        \centering
        \captionsetup{justification=centering}
        \includegraphics[width=1.6in]{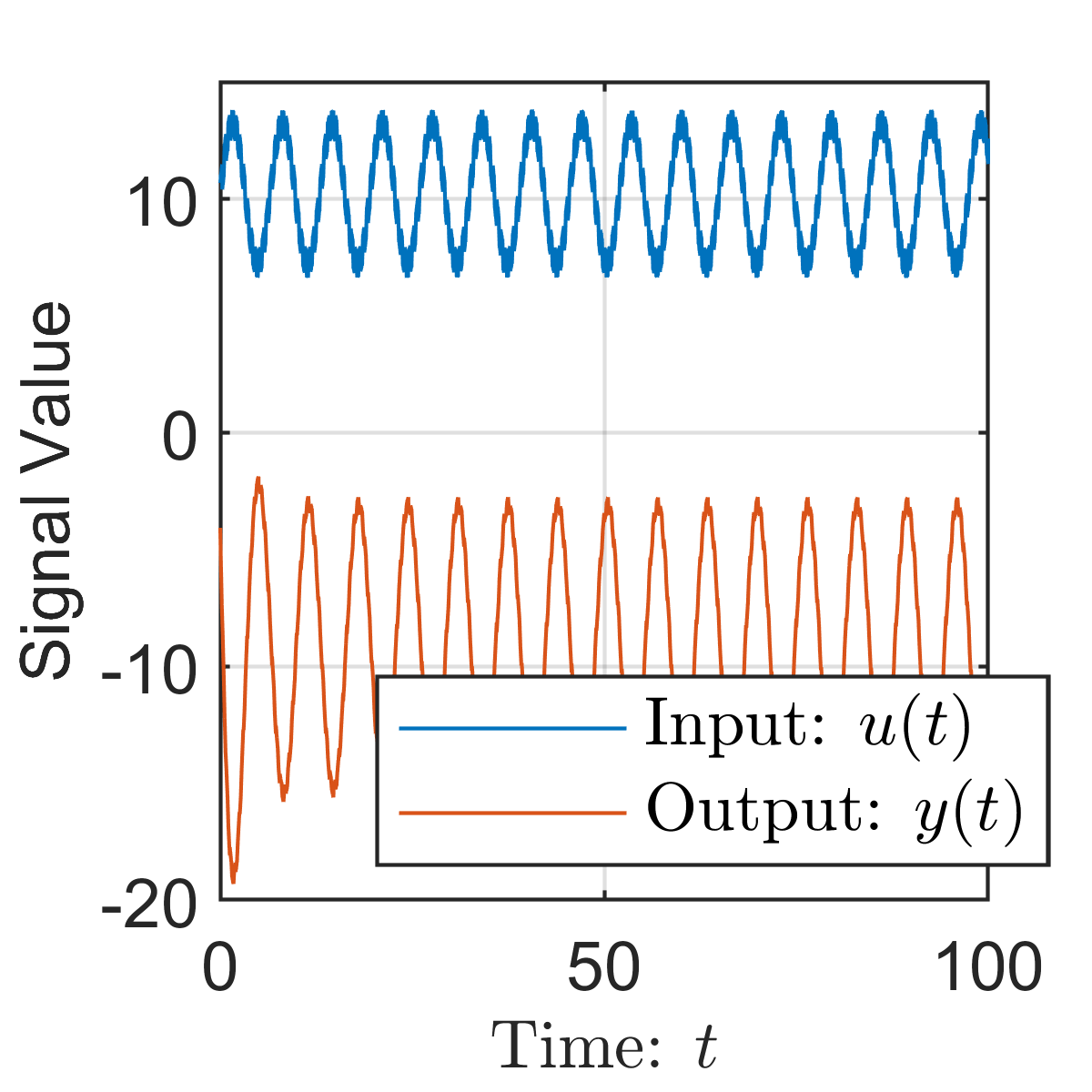}
        \caption{System: $\mathcal{H}_2$}
    \end{subfigure}\\
    \begin{subfigure}[t]{0.23\textwidth}
        \centering
        \captionsetup{justification=centering}
        \includegraphics[width=1.6in]{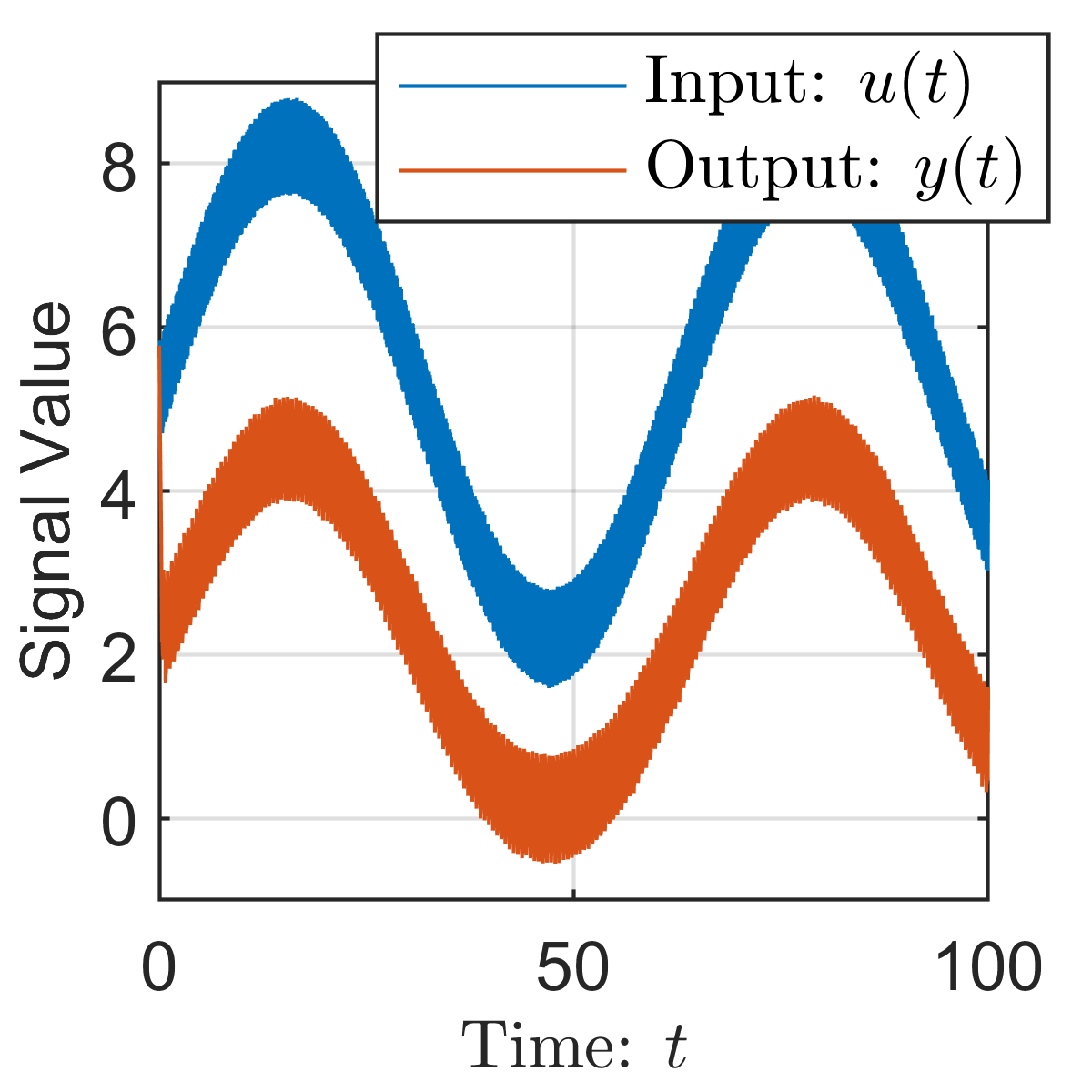}
        \caption{System: $\mathcal{H}_3$}
    \end{subfigure}
    \hfill
    \begin{subfigure}[t]{0.23\textwidth}
        \centering
        \captionsetup{justification=centering}
        \includegraphics[width=1.6in]{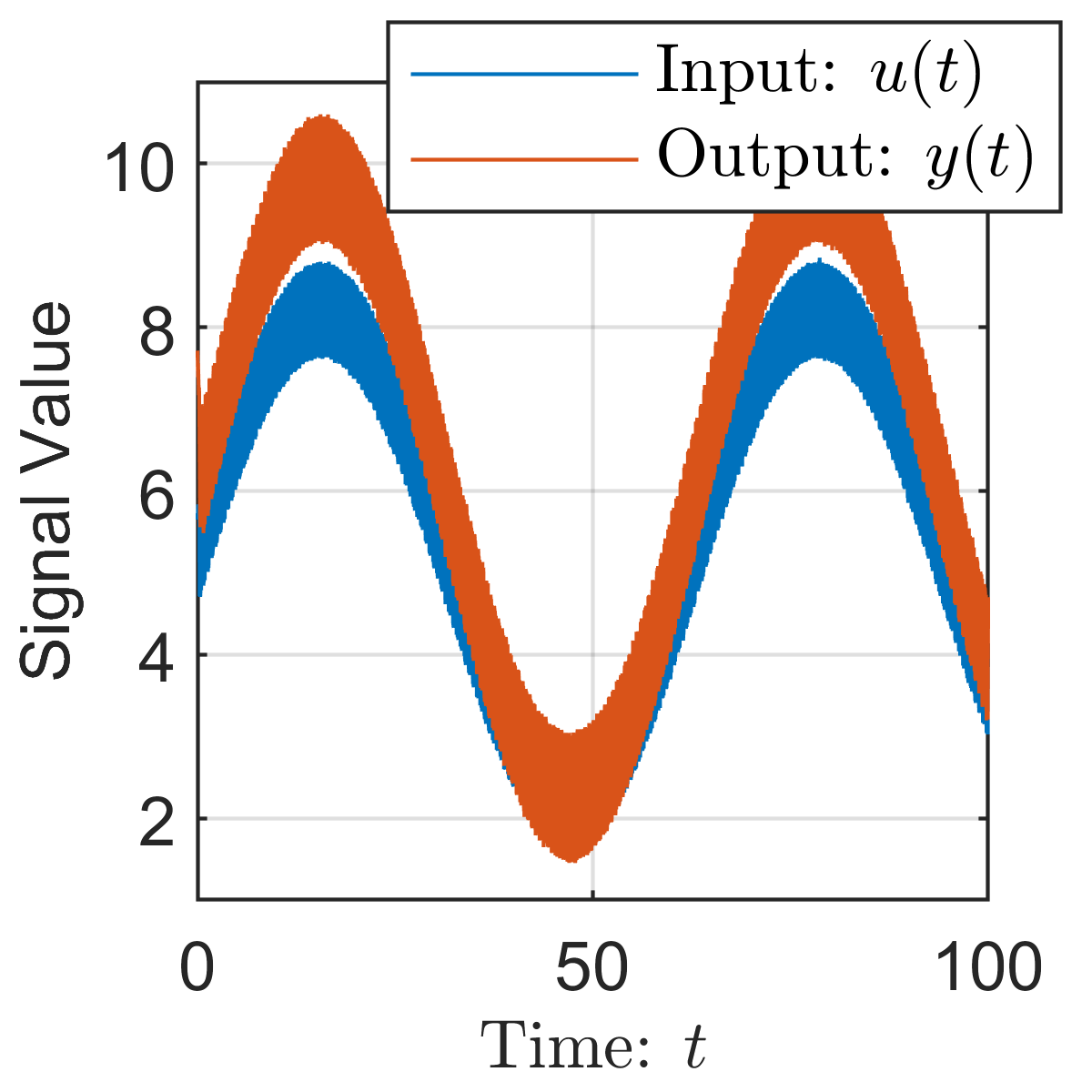}
        \caption{System: $\mathcal{H}_4$}
    \end{subfigure}
    \caption{Observed input-output profiles.}
    \label{Fig:InputOutputProfiles}
\end{figure}

In each experiment, inspired by Th. \ref{Th:CandidateK_sValues}, the input-output data seen over the period $t\in [0,10)$ (sampled at a rate $1\,kHz$) was used to compute/learn the $K_s$ parameter value via \eqref{Eq:CandidateK_sValues}. The learned $K_s$ values are given in Tab. \ref{Tab:NumericalResults} (Col. 4).

In Simulink\tsup{\textregistered}, parallel to the systems in \eqref{Eq:ExampledSystems}, we also implemented the on-line system indices \emph{estimators} described by: 
(i) the averaging-based approach \eqref{Eq:EstimatedOptimalSystemIndicesOld} proposed in \cite{Zakeri2019} (labeled as ``AVG'') and 
(ii) the FFOP-based approach \eqref{Eq:Th:EstimatedOptimalSystemIndices} proposed in this paper (labeled as ``FFO''). Figure \ref{Fig:ObservedEstimates} shows the variations of these observed on-line estimates of different interested system indices over the period $t\in[0,100)$.

\begin{figure}[!h]
    \centering
    \begin{subfigure}[t]{0.23\textwidth}
        \centering
        \captionsetup{justification=centering}
        \includegraphics[width=1.6in]{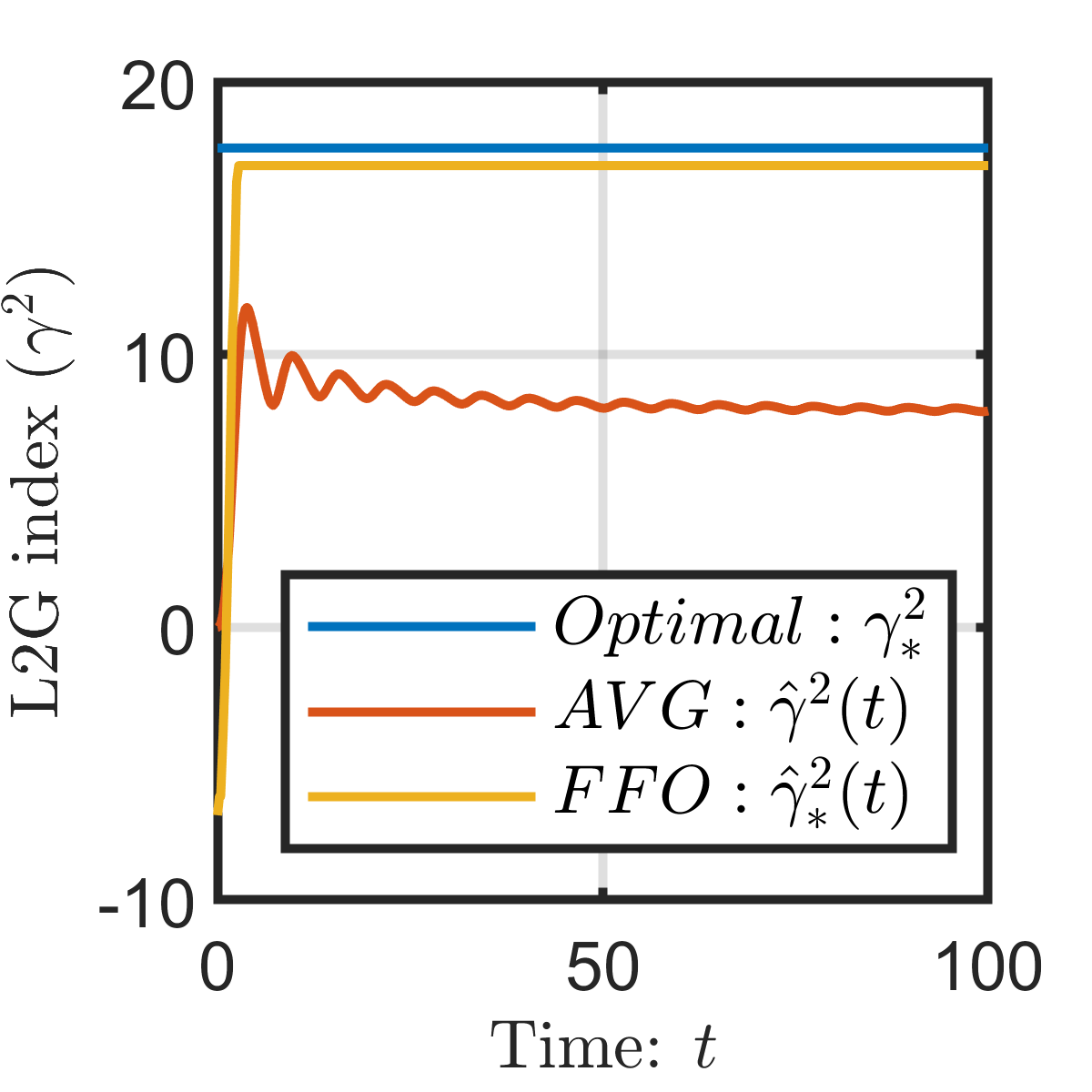}
        \caption{System: $\mathcal{H}_1$, Index: L2G}
    \end{subfigure}
    \hfill
    \begin{subfigure}[t]{0.23\textwidth}
        \centering
        \captionsetup{justification=centering}
        \includegraphics[width=1.6in]{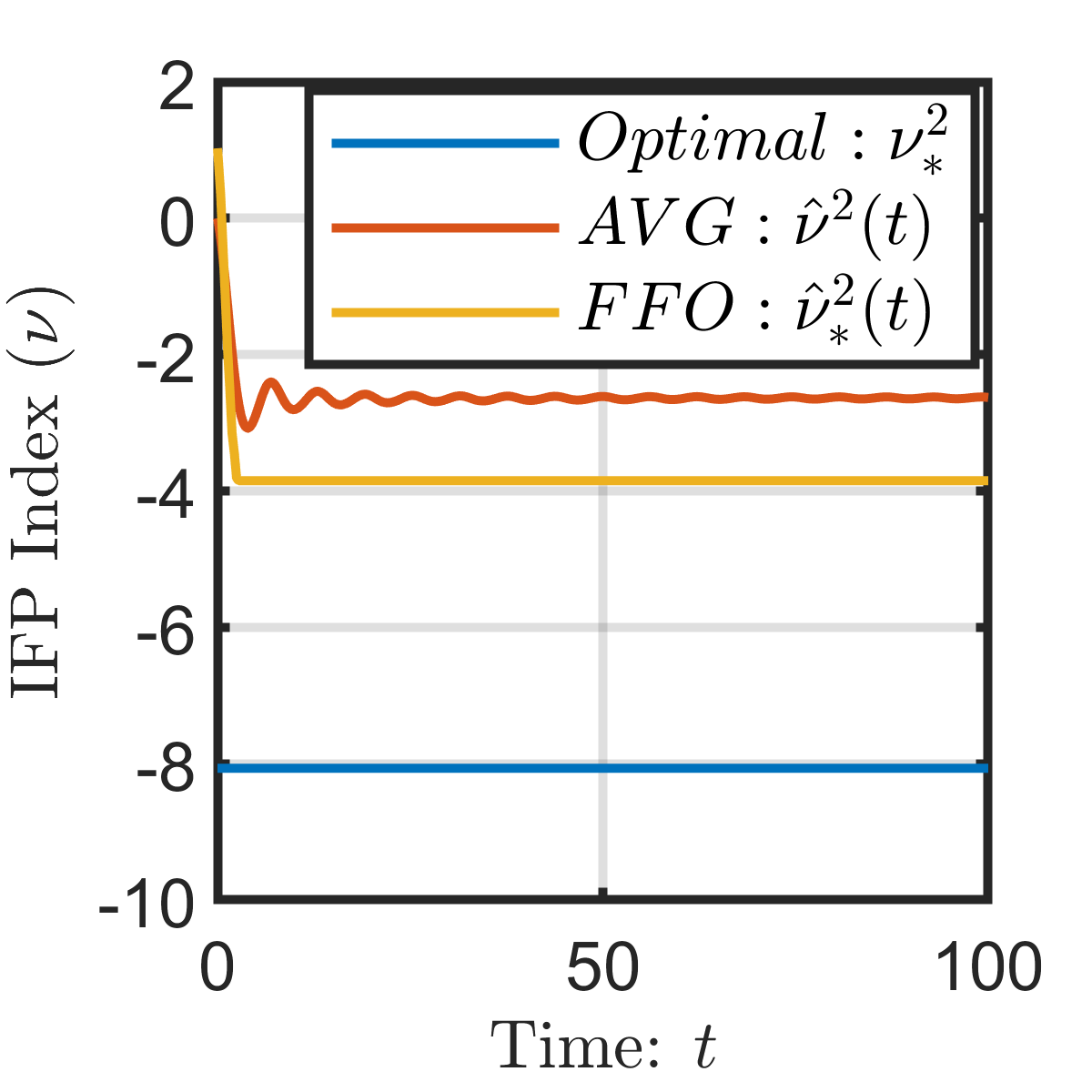}
        \caption{System: $\mathcal{H}_1$, Index: IFP}
    \end{subfigure}\\
    \begin{subfigure}[t]{0.23\textwidth}
        \centering
        \captionsetup{justification=centering}
        \includegraphics[width=1.6in]{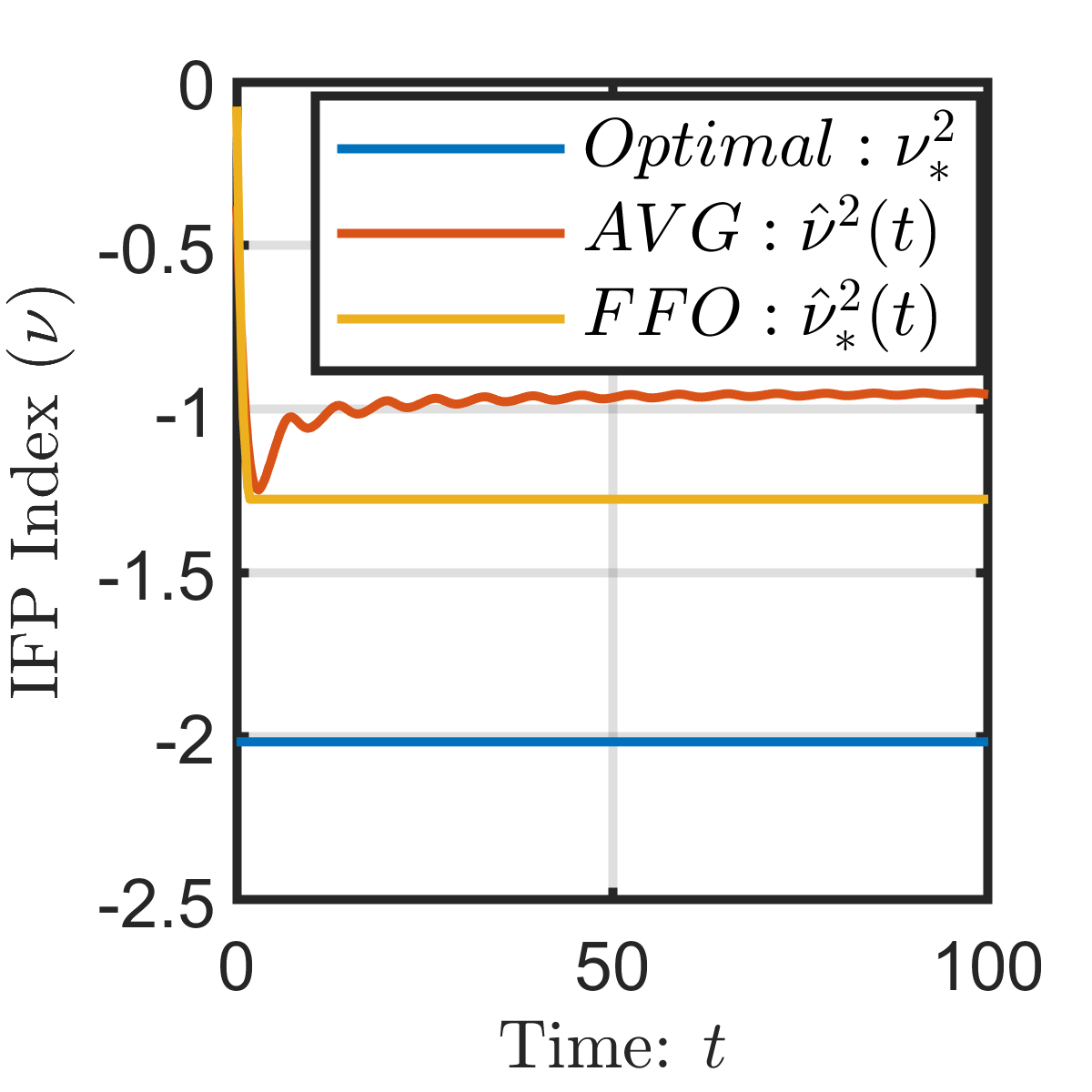}
        \caption{System: $\mathcal{H}_2$, Index: IFP}
    \end{subfigure}
    \hfill
    \begin{subfigure}[t]{0.23\textwidth}
        \centering
        \captionsetup{justification=centering}
        \includegraphics[width=1.6in]{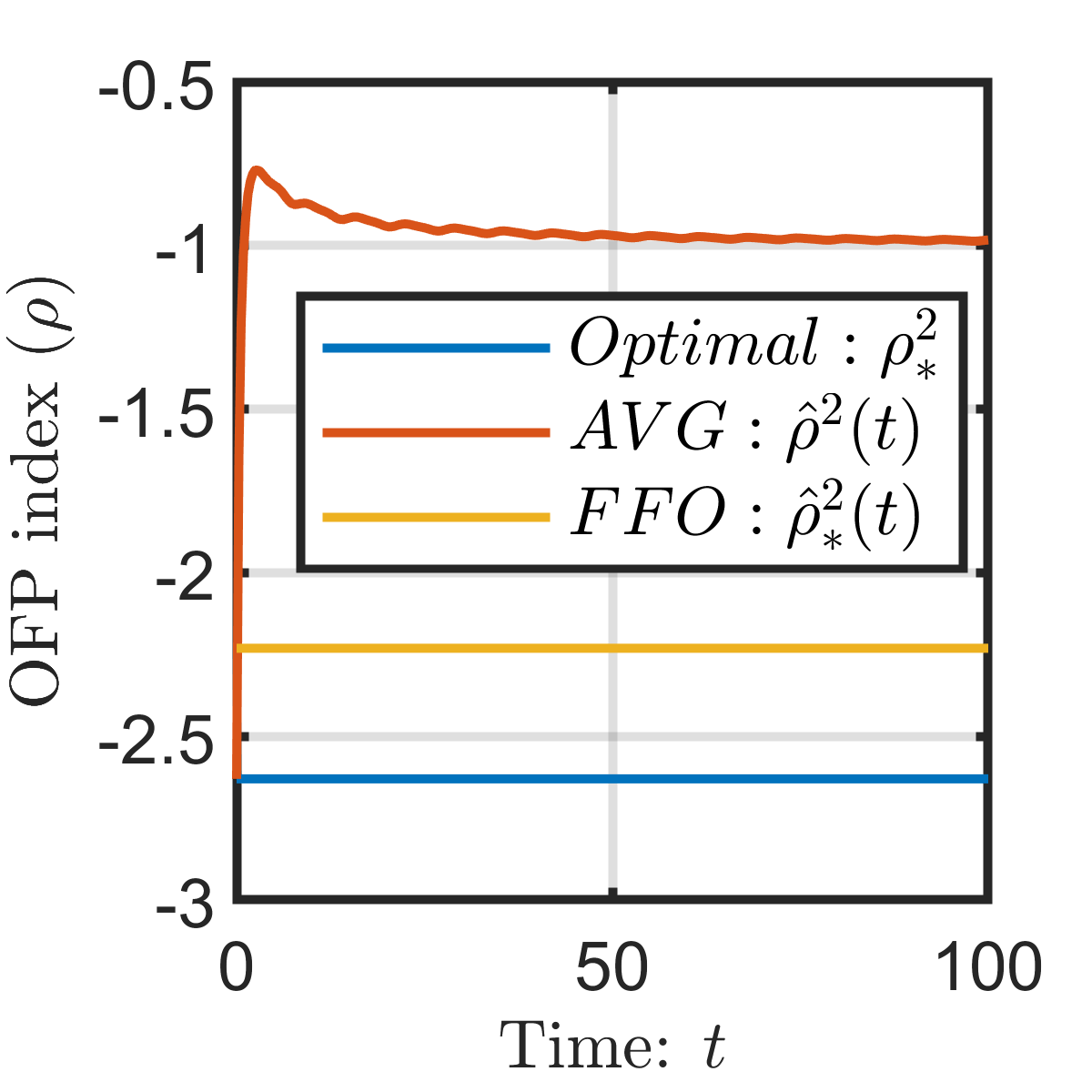}
        \caption{System: $\mathcal{H}_2$, Index: OFP}
    \end{subfigure}\\
    \begin{subfigure}[t]{0.23\textwidth}
        \centering
        \captionsetup{justification=centering}
        \includegraphics[width=1.6in]{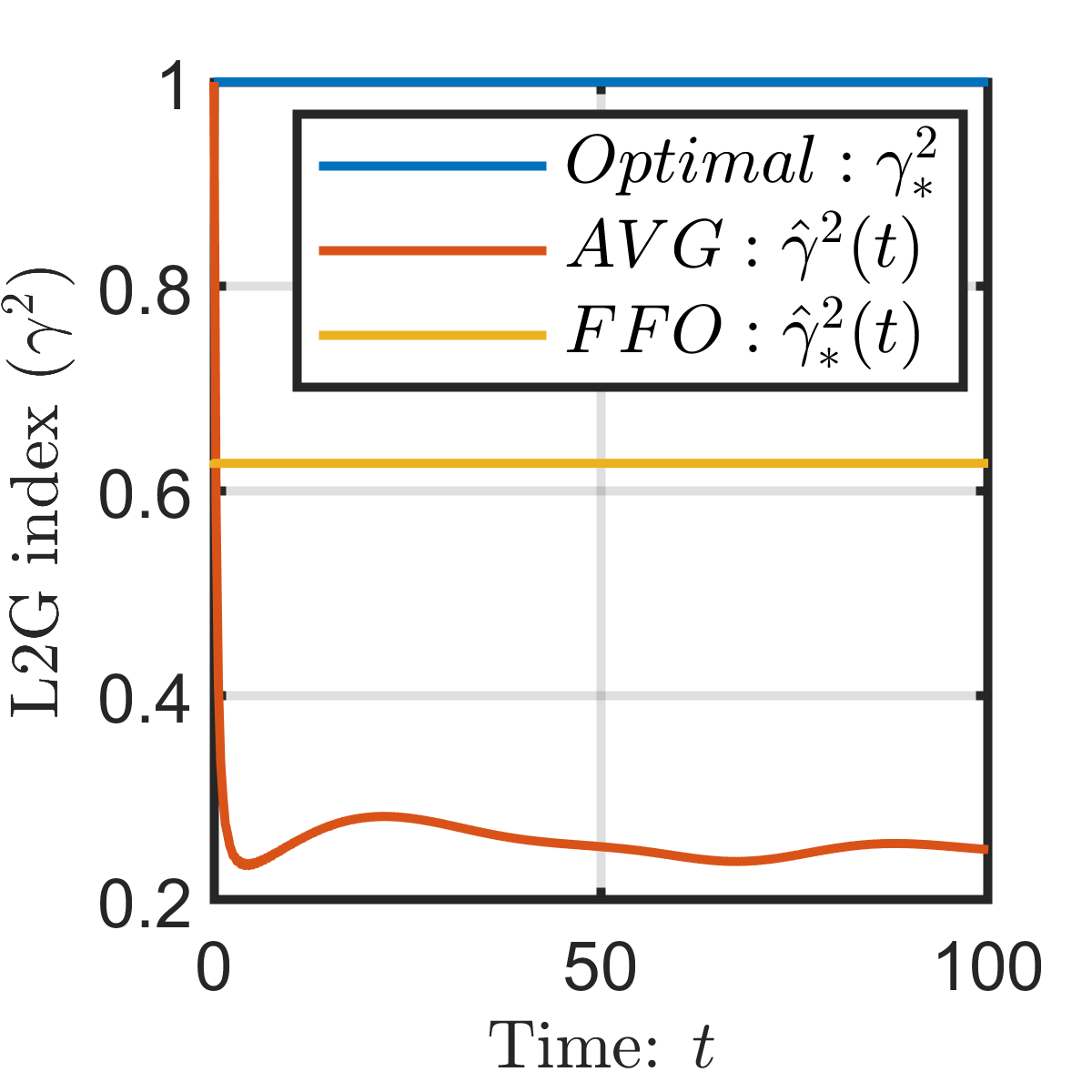}
        \caption{System: $\mathcal{H}_3$, Index: L2G}
    \end{subfigure}
    \hfill
    \begin{subfigure}[t]{0.23\textwidth}
        \centering
        \captionsetup{justification=centering}
        \includegraphics[width=1.6in]{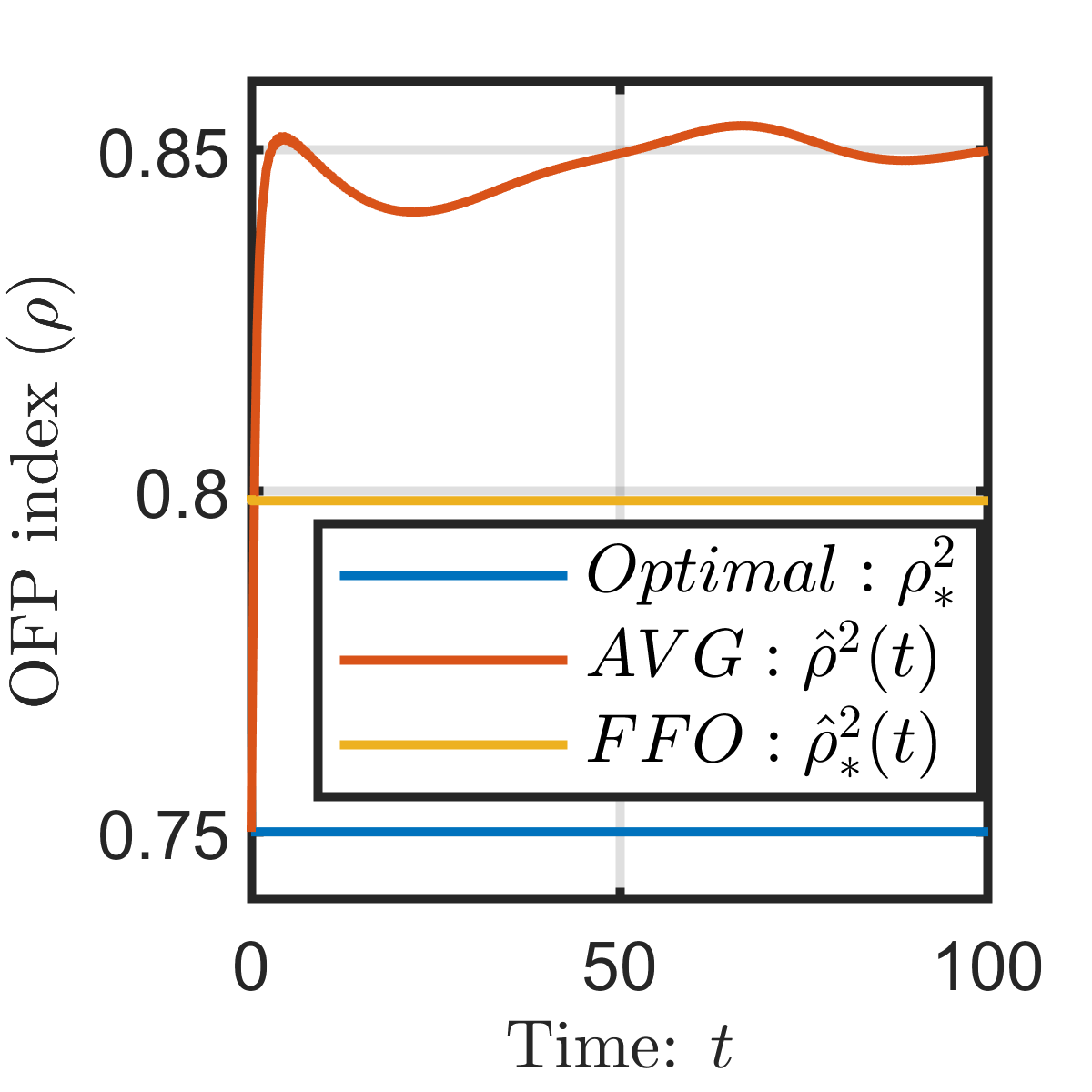}
        \caption{System: $\mathcal{H}_4$, Index: OFP}
    \end{subfigure}
    \caption{Observed on-line estimates of different interested system indices. The on-line AVG and FFO estimators respectively use \eqref{Eq:EstimatedOptimalSystemIndicesOld} and \eqref{Eq:Th:EstimatedOptimalSystemIndices} (with $t_o=0$ and $t_f = t$).}
    \label{Fig:ObservedEstimates}
\end{figure}

We define the absolute estimation error (AEE) as the absolute difference between an observed terminal (at $t=100$) estimate value of an interested system index and the corresponding optimal system index value. For example, with regard to the L2G index of a system, $\vert \gamma^2_*-\hat{\gamma}^2(100) \vert$ and $\vert \gamma_*^2 - \hat{\gamma}_*^2(100) \vert$ are the respective AEE values of AVG \cite{Zakeri2019} and FFO estimators. The terminal estimate values, the corresponding AEE values and the percentage AEE improvement achieved by the proposed on-line estimator in this paper (i.e., FFO) are shown in Tab. \ref{Tab:NumericalResults} (Cols. 5-9). From these results, it is clear that the proposed FFO estimator is significantly more accurate than the AVG estimator proposed in \cite{Zakeri2019}.  


Finally, we define the mean AEE (MAEE) as the mean estimation error of an interested system index with respect to the corresponding optimal system index value over the period $t\in [0,100)$. For example, with regard to the OFP index of a system, $\frac{1}{100} \int_0^{100}\vert \rho_*-\hat{\rho}(t)\vert dt$ and $\frac{1}{100}\int_0^{100}\vert \rho_* - \hat{\rho}_*(t) \vert dt$ are the respective MAEE values of AVG \cite{Zakeri2019} and FFO estimators. The observed MAEE values and the percentage MAEE improvement achieved by the proposed on-line estimator in this paper (i.e., FFO) are shown in Tab. \ref{Tab:NumericalResults} (Cols. 10-12). These results again confirm the significant accuracy improvement achieved by the proposed FFO estimator compared to the AVG estimator proposed in \cite{Zakeri2019}.

We conclude this paper by re-emphasizing several critical advantages of the proposed on-line estimator: 
(i) It is computationally efficient and flexible to implement (see Rm. \ref{Rm:Efficiency} and Rm. \ref{Rm:Sampling});
(ii) It provides highly convergent yet reactive/robust estimates compared those provided by the AVG estimator \cite{Zakeri2019} (see Fig. \ref{Fig:ObservedEstimates}); 
(iii) It requires significantly less amount of data compared to the existing off-line approaches such as \cite{Koch2021}; (iv) It is applicable for general non-linear systems; 
(v) It requires no control over the control input fed into the system (hence can be implemented parallel to existing control systems). The ongoing research focuses on trading off the latter two features/strengths of the proposed estimator to increase the estimator accuracy further.

\section{Conclusion}\label{Sec:Conclusion}
We proposed an on-line estimation method to estimate a few critical stability and passivity metrics (system indices) of a non-linear system using input-output data. 
First, we exploited fundamental definitions of such system indices to formulate the corresponding estimation problems as fractional function optimization problems of a specific common form. 
Next, we established several interesting theoretical results regarding this class of FFOPs. 
Then, these results were exploited to derive on-line estimates for the interested system indices. 
Compared to an existing on-line estimation method (state of the art, to the best of our knowledge), the proposed approach is more theoretically precise, computationally efficient, and enjoys several other desirable qualities. 
The validity and potential of the proposed approach were demonstrated using numerical examples. 
Future work aims to derive specific results for linear systems and optimal controls to favor on-line estimation.

\bibliographystyle{IEEEtran}
\bibliography{References}

\end{document}